\numberwithin{equation}{section}
\newtheorem{assum}{Assumption}
\newtheorem{thm}{Theorem}[section]\numberwithin{thm}{section}
\newtheorem{rem}{Remark}[section] \numberwithin{rem}{section}
\newtheorem{defn}{Definition}[section]\numberwithin{defn}{section}
\newtheorem{lem}{Lemma}[section]\numberwithin{lem}{section}
\newtheorem{cor}{Corollary}[section]\numberwithin{cor}{section}
\newcommand{\Dom}{\mathcal{D}}
\newcommand{\Ran}{\mathcal{R}}
\newcommand{\Ker}{\mathrm{Ker}}
\newcommand{\bO}{{\Gamma}}
\newcommand{\Ad}{A_0}
\newcommand{\An}{A_1}
\newcommand{\Abb}{{A_{\beta_0, \beta_1}}}
\newcommand{\Gd}{\Gamma_{\!0}}
\newcommand{\Gn}{\Gamma_{\!1}}
\newcommand{\Complex}{\mathbb{C}}
\newcommand{\Real}{\mathbb{R}}
\newcommand{\I}{{\mathrm{Im}\,}}
\begin{document}

\mainmatter              

\title{Linear Operators and Operator Functions Associated with Spectral Boundary Value Problems}
\titlerunning{Spectral Boundary Value Problems} 
\author{Vladimir Ryzhov}

\institute{
} 





\maketitle

\begin{abstract}

The paper develops a theory of
 spectral boundary value problems
from the perspective of
   general theory of linear operators in Hilbert spaces.
An abstract form of spectral boundary value problem
  with generalized boundary conditions
   is suggested and results
    on its solvability complemented by representations
     of weak and strong solutions are obtained.
 Existence of a closed
  linear operator defined by a given
   boundary condition and description
    of its domain are studied in detail.
 These questions are addressed on the basis
  of Krein's resolvent
   formula derived from the explicit
    representations of solutions also obtained here. 
Usual resolvent identities
 for two operators associated with two different
  boundary conditions are written in terms of
    the so called M-function.
Abstract considerations are complemented
 by illustrative examples taken from the
 theory of partial differential operators.
 Other applications to
   boundary value problems
 of analysis and mathematical physics
    are outlined.
\keywords{Spectral boundary value problem, singular perturbations,
   M-function, Krein's resolvent formula, linear operators, open systems theory}
\end{abstract}


\smallskip


\section{Introduction}\label{INTRO}

Close relationships between studies of
  boundary value problems and the linear
   operator theory are well known
    to specialists in both  disciplines.
 As an example, one can only mention numerous attempts
   to translate properties of solutions
    to boundary value problems
     into the operator-theoretic language
      that culminated in the development of an
      important branch of contemporary mathematics,
      the  interpolation theory
      of linear operators and
      scales of Banach and Hilbert
      spaces~\cite{Berez}, \cite{KPS}, \cite{LiMa}.
 Another achievement of
  the abstract operator theory in relation
  to boundary problems arising in applications is
  the extension theory of symmetric operators.
With its origin in quantum mechanics
 and operating in the setting of Hilbert spaces,
   the extension theory suggests a convenient model
     of boundary value problems rooted in Hilbert
       space operator theory~\cite{AhGl}.
Although the abstract approach often
 turns out to be too generic and therefore additional
  considerations are required
   to complete the study,
   the extension theory of symmetric operators continues to be an important
    and widely used tool
     in the studies of boundary value problems in abstract settings.
 During past decades
  it was substantially enhanced and enriched
   by various applications
    to the operator theory itself, to the classical
     and functional analysis, and to the mathematical physics.
The long list of publications~\cite{AlpBe}, \cite{BehrLan}, \cite{Birman}, \cite{BGW}, \cite{BHMNW}, \cite{BMNW}, \cite{Bruk}, \cite{DHMS1}, 
\cite{DHMS2}, \cite{DHMS3}, \cite{DM1}, \cite{DM2}, \cite{DM3}, \cite{GMT}, \cite{GM1}, \cite{GM2}, \cite{Grubb}, 
\cite{Grubb3}, \cite{HKS}, \cite{Kochub1}, \cite{Kochub2}, \cite{LangText}, \cite{Mal}, \cite{MalMog1}, \cite{MalMog2}, 
\cite{MalMog3}, \cite{Posil1}, \cite{Posil2}, \cite{Posil3}, \cite{Posil4}, \cite{Posil5}, \cite{Vi}
reflects only a small portion of the sheer amount of ongoing studies in
 the field of extensions theory of symmetric operators.
%
%


The present paper offers an operator-theoretic treatment
 of boundary value problems.
The main topic under discussion is the existence of Hilbert space operators
   corresponding to abstract linear boundary value problems
    defined by suitably generalized boundary conditions.
As is well known, many  applications of the partial differential
 equations theory entail problem statements characterized by certain types of
   formally written boundary conditions.
In the case of second order partial differential equations on bounded or 
 unbounded domains
  such conditions are usually rendered
   in terms of linear combinations of boundary values of solutions
     and traces of their derivatives evaluated on the domain boundary.
Typical examples include the Laplacian in a domain of Euclidian space
  with Dirichlet, Neuman, or Robin boundary conditions.
  Depending on the nature of these conditions, the resulting problem
   may or may not give rise to a closed linear operator in a Hilbert or Banach space.
 If such an associated operator exists, then
  the study is effectively reduced to the analysis
    of its properties.
  The paper describes a wide class of boundary conditions
   that determine a closed linear operator in a Hilbert space and studies
   its spectral characteristics in the general setting.
In a sense, the goal pursued here is opposite
 to the treatment by G.~Grubb~\cite{Grubb} where the existence of
  boundary conditions corresponding to a given closed realization of an
    elliptic operator is investigated.
%
%


%
%
An essential part of present research
 is the formulation of generic
  linear boundary value problems in the language of
    Hilbert space operator theory.
%
%
Within this framework, boundary conditions are
  defined by two parameters, 
     two closed linear operators acting in the ``boundary space.''
Reiteration of the material developed earlier in~\cite{Ryz1}, \cite{Ryz2}
  is followed by a more detailed
   inquiry into the properties of solutions, which in turn leads to
    Krein's resolvent formula and usual resolvent identities
      for closed operators acting on the ``main space'' corresponding
       to various boundary conditions.
Spectral properties of these operators
  are described in terms of the so called
     M-function.\footnote{
Values of all M\nobreakdash-functions under consideration are closed
 linear operators acting in Hilbert spaces, with one exception of the
 example given in Section~7 where M\nobreakdash-function is a matrix
 function.
Sometimes the term ``M\nobreakdash-operators'' is used
 in order to stress
  the operator theoretic nature of M\nobreakdash-function~\cite{AmP}.}
Several examples offered throughout the text illustrate the main ideas.
%


 Ongoing study of M\nobreakdash-functions, also known
  as
   $m$-functions, $Q$\nobreakdash-functions,
    Weyl-Titchmarsh functions,
     Steklov-Poincar\'e operator,
      Dirichlet-to-Neumann
       maps, transfer functions, etc.
       forms a significant part
        of the contemporary boundary value problems studies.
 The M-functions theory originates
  in the concept of \textit{m}\nobreakdash-function
    for singular Sturm\nobreakdash-Li\-o\-u\-ville differential
      equations~\cite{Titch}.
Since then the notion of M\nobreakdash-functions
 has been generalized to other settings 
   and followed by deep results
     on M-function properties and applications.
 We only mention a few relevant papers
  concerning topics in scattering theory~\cite{Pavlov1}, \cite{Uhlmann},
  Schr\"odinger and
  Sturm\nobreakdash-Liuoville operators theory~\cite{AmP},
  inverse problems~\cite{Isakov}, \cite{SU},
  the spectral asymptotic~\cite{Fried}, \cite{Safar},
  extensions of symmetric operators and adjoint
  pairs~\cite{BGW}, \cite{BHMNW}, \cite{BMNW}, \cite{DHMS1}, \cite{DHMS2}, \cite{DHMS3}, \cite{DM1}, \cite{DM2}, 
\cite{LangText}, \cite{MalMog1}, \cite{MalMog2}, \cite{MalMog3},
  numerous studies on partial differential operators
  including operators in
  non\nobreakdash-smooth
  domains~\cite{AlpBe}, \cite{GLMZ}, \cite{GMT}, \cite{GM1}, \cite{GM2}, \cite{GMZ1}, \cite{GMZ2},
  the numerical spectral analysis~\cite{BM}, \cite{Marl},
  singular perturbations~\cite{Posil1}, \cite{Posil2}, \cite{Posil3},
  and the linear systems theory~\cite{Ryz1}, \cite{Ryz2}.
 In the present paper's context
 M\nobreakdash-functions
  are realized
   as operator-functions with values in the set of closed linear
    operators acting in the ``boundary space,''
 a Hilbert space associated with the ``boundary."
%
%


Operator theoretic parts of the present work have some
 overlaps with
 the extensions theory of linear symmetric operators and
 relations in Hilbert and Krein spaces based on the notion of
  so called boundary triplets~\cite{Bruk}, \cite{Kochub1}
  and their 
    generalizations.
 This approach relies on  on properties of the abstract Green's formula
  that involves a linear symmetric operator (a linear symmetric relation in general case) and two
   linear boundary maps into the ``boundary space.''
The theory of boundary triplests is one of the most generic treatements of general boundary relations  
 available today. 
%
  The interested reader is referred to the original papers~\cite{DHMS1}, \cite{DHMS2}, \cite{DHMS3}, \cite{DM1}, \cite{DM2}, 
\cite{DM3}, \cite{Mal}, \cite{MalMog1}, \cite{MalMog2}, \cite{MalMog3}
    where further references can be found.
Another successful  approach
 to the extension theory of symmetric operators
  was elaborated by A.~Posilicano in works~\cite{Posil1}, \cite{Posil2}, \cite{Posil3}, \cite{Posil4}, \cite{Posil5}.
 It is rooted
  in a close  relationship between singular
   perturbations of elliptic differential operators and the extensions
    theory~\cite{AGHH}, \cite{AlbevKur}.
In comparison with these studies, the present study follows the line of reasoning found in 
\cite{Ryz1}, \cite{Ryz2}.
 The ideas expounded below are
 inspired by the Birman-Krein-Vishik method
  of extensions of positive operators in Hilbert
   space~\cite{Birman}, \cite{Krein2}, \cite{Vi} (see also~\cite{Grubb}, \cite{Grubb3} and \cite{AS}),
    the Weyl decomposition~\cite{Weyl},
     the open systems theory~\cite{Liv},
      and  the theory of linear systems
       with boundary control~\cite{Staff}.
 As a result, the framework in this paper
   is not centered around symmetric operators and 
    does not involve any notions
     specific to the extensions theory.
It is built from the first principles concerning
  linear operators and their domains, as well
   as properties of linear sets in Hilbert spaces.
 The linear systems theory conveniently provides
   an adequate language to
    communicate the underpinnings of this approach.
 With some risk of oversimplification,
  the last part of Section~2 explains these ideas
   in more depth by connecting them to the objects 
    of systems theory.
%
%
%
%


The paper's treatment of boundary value problems from the abstract
  point of view opens a possibility to consider classical and non-classical 
   applications from a uniform perspective.  
 As an example, it turns out that
  obtained results offer a straightforward interpretation
   of boundary value problems
    when the ``boundary'' does not exist a priori
     and has to be constructed artificially.
 This type of problems has been well studied
  in the literature and is
   usually referred to as singular
    perturbations of differential operators
    characterized by perturbations supported
     on the sets of Lebesque measure zero
      (often called the null sets).
 The well known quantum mechanical model of point
  interactions~\cite{AGHH}, \cite{AlbevKur} and the study of
   more general
    Schr\"odinger operators with potentials
     supported by null sets~\cite{AFHKL} are
      typical examples.
This fact underlines close connections
 of the present material
  to the papers~\cite{Posil1}, \cite{Posil2}, \cite{Posil3}, \cite{Posil4}, \cite{Posil5}
   devoted to the study of extensions
      of symmetric operators and singular perturbations.
 In the field of linear systems theory singular perturbations
  represent the procedure of ``channels opening''
   connecting an initially closed systems to its
    environment~\cite{Liv}.
 From this point of view, the M\nobreakdash-function is
  naturally identified with the
   transfer function of the resulting open system
    interacting with its environment
     by means of these channels.
Operator theoretic treatment also
 illuminates ideas behind the so-called ``Dirichlet
   decoupling''~\cite{DeSi} also known
    as ``Glazman's splitting procedure''~\cite{Glaz}
     and establishes connections to the analog
      of Weyl-Titchmarsh function of multidimensional
       Schr\"odinger operator~\cite{AmP}, \cite{Ryz2}.
It appears relevant to other problems of mathematical physics, e.~g.
 the exterior complex scaling in the
  theory of resonances~\cite{SiExtSc} and the R-matrix method
   well known in nuclear physics~\cite{LT}.
Some of  these applications are discussed in Section~7
 and in the last part of Section~2
  where relevant bibliographical references can be found;
   these ideas are the topics of further research.
%


The approach to spectral boundary value problems adopted in the paper
 has certain limitations.
One of them is the assumption of selfadjointness and bounded
 invertibility of
  the ``main'' operator (denoted $\Ad$ throughout)
   acting in the Hilbert space~$H$.
The requirement of bounded invertibility of $\Ad$ can be
 weakened to the condition~$\Ker(\Ad+ cI) = \{0\}$ for some $c \in \Real$, but the selfadjointess is essential.
Nevertheless, the schema can be extended to the case~$\Ad \neq\Ad^*$, but only  at the expense
  of introducing the so called dual pairs~\cite{MalMog1}, \cite{MalMog2}, \cite{MalMog3}
   (see also~\cite{BGW}, \cite{BHMNW}, \cite{BMNW}), which makes the study much more involved.
Another limitation is the preference to work with  linear operators, rather than with linear
 relations (multivalued operators) which appears to be the recent trend in the literature,
  see especially~\cite{DHMS2}, \cite{DHMS3}.
The language of single valued operators 
stands more in line with the classical approach of operator theory and is preferred here. 
One more requirement is that the operator~$\Ad$ must be unbounded so that the range of~$\Ad^{-1}$ is dense in~$H$.
Fortunately, all these restrictions do not impede the study of the main
 question addressed in the paper, that is,
  the description of operators corresponding to boundary value problems
   defined in terms of boundary conditions.
 %
%
%


Let us now briefly overview the paper's structure.
Section~\ref{DNMAP} offers an accessible introduction  into
 the setting of boundary problems and M\nobreakdash-functions.
It serves as a guideline for the topics discussed later
 and provides a concise exposition of
 the operator theoretic framework of spectral
 boundary value problems independent of the symmetric operators theory.
The main example is the well known spectral
 problem for Dirichlet Laplacian in a smooth domain in~$\Real^n$,
 $n \ge 3$.
 An adequate language for study of this operator
 is the language of Green's functions, integral
 equations and layer potentials.
When necessary, relevant results are freely borrowed from the standard
 references~\cite{Agran}, \cite{Maz}, \cite{ML}.
By this example all essential
 ingredients of the following exposition
   are explicitly formulated and finally compiled
     in a short catalog.
Relationships to the
 extension theory of symmetric operators, Krein's
   resolvent formula, and resolvent identities
    are also discussed.
Since the linear systems theory plays an important role
 for the approach employed in this work,
  a brief explanation of the principal ideas of this 
  theory is provided for reader's convenience.
 At the end of section
  other cases of partial differential operators that can be treated in 
   a similar fashion
    are mentioned.
%
%


%
%
Section~\ref{ABST} develops the machinery required
 for the purposes of the paper.
 The main objective here is to formulate notions useful for
  the study of spectral boundary value problems and associated
  M\nobreakdash-functions given in terms of their basic
   underlying objects.
 Such objects are two Hilbert spaces and
    three closed linear operators satisfying certain
     compatibility conditions.
 The solvability theorem is proven and ensuing
   definitions of weak and strong solutions are discussed.
 The section concludes with alternative descriptions
  of M\nobreakdash-functions and some
  comments regarding their properties.
%
%


%
%
Spectral boundary value problems with general boundary conditions
  are  investigated in Section~\ref{BOUNDC}.
 After the problem statement
  the solvability theorem is proven
   and  expressions for the
    solutions corresponding to
     various boundary conditions
      are obtained.
 The last part of the section explores a general definition of
  M\nobreakdash-functions associated with two different
   boundary conditions.
%
%
%


Section~\ref{EXTENT} is the main contribution of the paper.
 We discuss the existence
  of closed linear operators corresponding to spectral
   boundary value problems and
   subsequent study of their properties.
 Formal expressions for the resolvents
   and parameters of ``boundary conditions''
    are derived from the general representation of solutions
     obtained in the previous section.
 These  expressions are rigorously justified
   alongside with the study of
    spectral properties of respective operators
     and detailed descriptions of their domains.
 Relations to the extension theory
  of symmetric operators are also explained.
 The section closes with a brief digression into the
 original Birman\nobreakdash-Krein\nobreakdash-Vishik
 theory~\cite{Birman}, \cite{Krein2}, \cite{Vi} and
 remarks on its connections to the present study.
%
%


 Section~\ref{CAYLEY} offers a sketch of
  scattering theory for operators
   associated with boundary value problems.
Simultaneously,
  by virtue of the paper's approach, all arguments of this section remain valid
    for singular perturbations of partial differential operators
     by ``potentials'' concentrated on null sets.
Form the operator-theoretic point of view, the primary interest here
is the link between the
 boundary value problems theory
 and the functional model of nonselfadjoint operators
 established by means of Cayley transform
 applied to the M\nobreakdash-function.
It is shown that the ideas of papers~\cite{Na2}, \cite{Na3} devoted to the
 functional model based approach to the scattering theory
 are easily adopted and are fully applicable 
 for the comprehensive development of scattering theory
 of linear boundary value problems, selfadjoint and
 nonselfadjoint alike.
 Section~\ref{CAYLEY} can be seen as a groundwork for the future study in this direction.
%
%
%
%


The last section is
 an illustration of the boundary value problem
 technique discussed in the paper in application to
 singular perturbations of multidimensional
 differential operators.
 A simple example of the
 quantum mechanical model for a
 finite number of point
 interactions in $L^2(\Real^3)$
 (see~\cite{AGHH}, \cite{AlbevKur}) is studied.
 A familiar interpretation in the form of
 Schr\"odinger operator with $\delta$\nobreakdash-potentials
 is given and additional comments regarding singular perturbations
 concentrated on the null sets are supplied.
 Reported results are by no means new; most of them can be easily
 found in the relevant literature cited in the text.
 The objective of this section is to demonstrate how the abstract
 schema presented in earlier chapters can be put to practice
 for the study of particular cases of multidimensional differential operators.

\smallskip

The concept of this paper took shape during my visit
 to Cardiff University in April 2008.
I would like to express my sincere gratitude to
  Prof.~Marco~Marletta and  Prof.~ Malcolm~Brown
 for the invitation, hospitality, and inspiring discussions.
I am also grateful  to Prof.~Serguei~Naboko for his interest to the
 work, multiple discussions concerning its style,
 and continuous encouragement.
Last but not least, I am indebted  to an anonymous referee for his/her thorough 
examination of the manuscript and the list of relevant research  papers published after the
present work  was completed, see~\cite{A01}, \cite{A02}, \cite{A03}, \cite{A04}, 
\cite{A05}, \cite{A06}, \cite{A07}, \cite{A08}, \cite{A09}, \cite{A10}, 
\cite{A11}, \cite{A12}. 
The comments that I was able to act on definitely improved the text;
as for the others, I am hopeful to be able to address them 
in future publications.  

 %

\smallskip

%

This work is dedicated to the memory of Boris Pavlov (1936 -- 2016),
the role model of my academic career.


\begin{paragraph}{Notation}
Symbols~$\Real$, $\Complex$, $\I (z)$ stand for the real axis, the
complex plane, and the imaginary part of a complex number~$z\in
\Complex$, respectively.
The upper and lower half planes are the open sets~$\Complex_\pm :=
\{ z \in \Complex \; | \;\pm\I (z)
> 0 \}$.
If $A$ is a  linear operator on a separable Hilbert space~$H$, the
domain, range and null set of~$A$ are denoted $\Dom(A)$, $\Ran(A)$,
and $\Ker(A)$, respectively.
 For two separable Hilbert spaces~$H_1$ and~$H_2$ the
notation~$A : H_1 \to H_2 $ is used for a bounded linear
operator~$A$ defined everywhere in~$H_1$ with the range in the
space~$H_2$.
The symbol~$\rho(A)$ is used for the resolvent set of~$A$.
For a Hilbert space~$H$ the term \textit{subspace} always denotes a
closed linear set in~$H$.
The closure of operators and sets is denoted by the horizonal bar
 over the corresponding symbol.
All Hilbert spaces are assumed separable.

\end{paragraph}


\section{Boundary Value Problems by Example}\label{DNMAP}

In this introductory section we recall the classical example of the
boundary value problem and  M\nobreakdash-function associated
with the Dirichlet Laplacian in a simply connected bounded domain
with smooth boundary in the Euclidian space.
 The purpose of this exposition is twofold.
First, it
 reminds the reader of
 the concept of M\nobreakdash-functions, and
 secondly it brings together
 facts that serve as a
 foundation for the general approach
 developed further.
 The \textit{italic typeface} is used to highlight
  those observations which are essential for
   the investigations of the present paper.
Results cited below
 hold true under much  weaker assumptions,
  e.~g. for elliptic differential
   operators on non-smooth domains
    including Lipschitz subdomains of Riemannian manifolds,
     see~\cite{Kenig}, \cite{ML}, \cite{Mitrea} and references therein.
For further details the reader is referred to many
 expositions of the boundary integral equations method in application
 to boundary value problems for elliptic equations and systems, see
 \cite{Agran}, \cite{Maz}, \cite{ML} for relevant references.
%
%
%
%


\paragraph*{Dirichlet problem}
Let $\Omega \subset \Real^n$, $n\geq 3$ be a bounded simply
connected domain with
 $C^{1,1}$-boundary~$\bO$.
 The Laplace
 differential expression $ \Delta = \sum_i\frac{\partial^2}{\partial x^2_i}$
 defined on smooth functions in~$\Omega$
 generates the  Dirichlet Laplacian $ \Delta_D$
 in~$L^2(\Omega)$.
The domain of  $\Ad := -\Delta_D$ consists of functions from the
Sobolev class~$H^2(\Omega)$ with null traces on $\bO$.
 \textit{The operator~$\Ad$ is selfadjoint
 and boundedly invertible in $L^2(\Omega)$}.
%
%
%
%


\paragraph*{Harmonic functions and operator of harmonic continuation}
 Let $\gamma_0$ be the trace operator that maps continuous functions $u$
 defined in the closure~$\overline\Omega$ of $\Omega$ into
 their traces on the boundary, $\gamma_0 : u\mapsto
 \left.  u\right|_{\bO}$.
\textit{It follows from the definition of $-\Delta_D$ that $\gamma_0
\Ad^{-1} = 0$}.
 For $\varphi \in C(\bO)$
 denote $h_\varphi$ the solution
 of the Dirichlet problem in~$\Omega$:
 \[
  \Delta u = 0, \qquad \gamma_0 u = \varphi, \quad \text{ where }\quad \varphi\in C(\bO)
 \]
 The operator~$\Pi :\varphi \mapsto h_\varphi$
 is bounded as a mapping from $L^2(\bO)$
 into $L^2(\Omega)$  and
 $\Ker(\Pi) = \{0\}$, see~\cite{ML}.
It is readily seen that $\Pi$ is
 the classical operator of harmonic continuation
 from the boundary $\bO$ into the domain~$\Omega$
   uniquely extended to the bounded linear map
    defined on the space~$L^2(\bO)$.
 The equality $\gamma_0 \Pi \varphi = \varphi$ continues to
 hold for $\varphi \in L^2(\bO)$ and moreover
 $ \Delta h_\varphi =  0$
 for $h_\varphi = \Pi \varphi $
 in the sense of distributions~\cite{ML}. 
Observe that the (unbounded and not closed) operator  $A : \Ad^{-1} f \dot{+} \Pi \varphi \mapsto f$, 
$f \in L^2(\Omega)$ , $\varphi \in L^2(\Gamma)$ 
is well defined since 
\textit{the domain of operator~$\Ad$ and the
 set~$\Ran(\Pi)$ do not have nontrivial common elements}, otherwise
 $\Ad$ would not be boundedly invertible:
\[
\exists \, \Ad^{-1} \Rightarrow \Dom (\Ad) \cap \Ran(\Pi) =\{0\}
\]
The same argument shows that $\Dom(\Ad)$ does not contain
 any nontrivial functions from $H^2( \Omega)$
 satisfying the homogenous equation $(-\Delta - zI) h =0 $  under the assumption $z\in
 \rho(\Ad)$.
Obviously, $\Ad$ is a restriction of $A$ to $\Dom(\Ad)$. Notice also that $A$ does not coincide 
with the ``maximal operator'' defined as an adjoint to the map $ u \mapsto - \Delta u $, 
where $u \in L^2 (\Omega)$ belongs to the class $C_0^\infty$  of infinitely differentiable functions 
 vanishing in the vicinity of boundary $\Gamma$.


\paragraph*{Adjoint of the harmonic continuation operator}
Let $G(x,y)$ be the Green's function of $\Ad = -\Delta_D$,
 so that $(\Ad^{-1} f)(x) = \int_\Omega G(x,y)f(y)\,dx$
 for $f\in L^2(\Omega)$,
 see~\cite{ML}.
 The kernel~$G(\cdot, \cdot)$ is symmetric and real-valued:
  $G(x,y) = G(y,x)$ and $\overline{G(x,y)} = G(x,y)$.
Denote by $d\sigma$ the normalized Lebesgue surface measure
 on~$\Gamma$.
 Then the operator~$\Pi$ can be expressed as an
 integral operator with Poisson kernel
\[
 \Pi : \varphi \mapsto   -
   \int_{\bO} \varphi(y) \frac{\partial}{\partial\nu_y}\,G(x,y) \, d \sigma_y
\]
where $\frac{\partial}{\partial\nu}$ is the derivative along the
outside pointing normal  at the boundary~$\bO$.
 For a smooth function $f$ in
  $\Omega$
\[
 (\Pi\varphi, f) =  - \int_\Omega \left(\int_{\bO} \varphi(y) \frac{\partial}{\partial\nu_y}\,G(x,y) \, d \sigma_y \right)
 \overline{f(x)}\, dx
\]
and due to Fubini's theorem and properties of~$G(\cdot, \cdot)$, 
\begin{align*}
 (\Pi\varphi, f)  & =
 - \int_{\bO} \varphi(y) \, \frac{\partial}{\partial\nu_y} \left( \int_\Omega
 \overline{f(x)}\, \,G(x,y) \, dx\right)  \, d \sigma_y  \\ 
&  =  - \left \langle
   \varphi,
 \frac{\partial}{\partial\nu} \left( \int_\Omega
 G(x,\cdot) \, {f(x)} \, d x\right)
  \right\rangle 
\end{align*}
 where $\langle \cdot, \cdot\rangle$ denotes the inner product
 in~$L^2(\Gamma)$.
Since $G(x,y) = G(y,x)$ is the integral kernel of
 $\Ad^{-1}$,
 we obtain  \textit{the representation for~$\Pi^*$,  the adjoint
 of $\Pi$},
\[
 \Pi^* = \gamma_1 \Ad^{-1}
\]
where $\gamma_1 : u \mapsto  - \gamma_0  \frac{\partial
u}{\partial\nu} = -\left. \frac{\partial
 u}{\partial\nu}\right|_\Gamma$.
We will use the symbol~$\partial_\nu$ for the
 map~$u \mapsto \left. \frac{\partial
 u}{\partial\nu}\right|_\Gamma$, so that $\gamma_1  = -\partial_\nu$.


 %

\paragraph*{The spectral problem}
 The spectral Dirichlet boundary value problem for the differential 
expression ~$ \Delta  = \sum_i \frac{\partial^2}{\partial x_i^2}$ in~$\Omega$ 
 is defined  by the system of
  equations for $u \in \Dom(A) := \Dom(\Ad) \dot{+} \Ran (\Pi)$ 
 \begin{equation}\label{eqn:aux3}
 \left\{
 \begin{aligned}
  &&(A - zI) u = 0, \\
  && \gamma_0 u = \varphi
 \end{aligned}
 \right.
 \end{equation}
where $A : u \mapsto -\Delta u $,  $\varphi \in L^2(\bO)$, and the number~$z\in \Complex$ plays
the role of
 spectral parameter.
For $z\in \rho(\Ad)$ the distributional solution~$u_z^\varphi$ can
be obtained from the harmonic function $\Pi \varphi$ by the
 formula
 $u_z^\varphi =  (I- z\Ad^{-1})^{-1}\Pi\varphi$.
Indeed, since $(I- z\Ad^{-1})^{-1} =  I + z (\Ad -zI)^{-1}$
 and $A \Pi \varphi =0 $ in the  distributional sense,
 we have
 \[
 (A - zI) u_z^\varphi =  (A - zI)\left( \Pi \varphi +
 z (\Ad -zI)^{-1}\Pi\varphi \right) = - z \Pi\varphi +
 z\Pi\varphi =0
\]
 due to the identity~$(A - zI)(\Ad - zI)^{-1} = I$.
Therefore the vector~$u_z^\varphi$ is a solution to the equation~$(A
-z I)u =0$.
 Further, $\gamma_0 u_z^\varphi = \gamma_0 \Pi \varphi = \varphi$.
 Hence
 \textit{the vector~$u_z^{\varphi} = (I -
z\Ad^{-1})\Pi \varphi$ is a  solution to  the spectral
problem~(\ref{eqn:aux3}) for $\varphi\in L^2(\Gamma)$ and $z\in
\rho(\Ad)$}.


\paragraph*{Solution Operator and DN-Map}
For the spectral problem~(\ref{eqn:aux3}) with $\varphi \in
L^2(\bO)$ and $z\in \rho(\Ad)$ introduce the solution operator
 \begin{equation}\label{eqn:aux19}
 S_z : \varphi \mapsto (I - z\Ad^{-1})^{-1}\Pi\varphi
 \end{equation}
 Operator~$S_z$ is bounded as a mapping from $L^2(\Gamma)$
 into $L^2(\Omega)$.
 For $\varphi \in C^2(\Gamma)$
 the inclusion~$ S_z\varphi \in H^2(\Omega)$ holds
 and therefore
 the expression~$\gamma_1 S_z \varphi$ is well defined.
 The operator function $M(z)$ defined
 by
 \begin{equation} \label{eqn:aux17}
 M(z): \varphi \mapsto \gamma_1 S_z \varphi,\quad
  \varphi \in C^2(\Omega)
 \end{equation}
 is analytic in $z\in \rho(\Ad)$.
It
 is called the Dirichlet-to-Neumann map  (DN-map) or, more generally,
 the M\nobreakdash-function
  of $A =  -\Delta$ in
 the domain~$\Omega$.
By construction, $   - \partial_\nu u   = M(z)\left.
(u\right|_\Gamma)$
 for $u \in \Ker(A -zI)$
 as long as the
function~$\gamma_0 u = \left. u\right|_\Gamma$
 is sufficiently smooth on $\bO$.
In fact, it can be shown that values of so defined~$M(z)$, $z\in
\rho(\Ad)$ are closed operators
 acting in $L^2(\bO)$
 with the domain~$H^1(\bO)$, see~\cite{Taylor} and references therein.
%
%


The representation $S_z = (I - z\Ad^{-1})^{-1}\Pi $ and
equality~$\Pi^* = \gamma_1 \Ad^{-1}$ imply
 \begin{equation}\label{eqn:aux9}
(S_z)^* = \gamma_1 \Ad^{-1}(I - \bar z \Ad^{-1})^{-1} = \gamma_1
 (\Ad -\bar zI )^{-1}
 \end{equation}
Therefore~$S_z = [\gamma_1 (\Ad -\bar zI )^{-1}]^*$ and the
M\nobreakdash-function~$M(z)$ can be rewritten:
\[
M(z) = \gamma_1 [\gamma_1 (\Ad - \bar zI)^{-1}]^*
\]
In particular, $M(0) = \gamma_1 (\gamma_1 \Ad^{-1})^* = \gamma_1
\Pi$.
It can be shown (see~\cite{Taylor}) that
 \textit{ the operator~$ M(0) = \gamma_1 \Pi$ defined on the
 domain~$\Dom(M(0)) = H^1(\Gamma)$
 is selfadjoint in $L^2(\bO)$}.
Operator~$M(0)$ turns out to be a rather important object;
 it is convenient to use
  a special notation for it:
 \[
 \Lambda = M(0) = \gamma_1\Pi, \qquad \Dom(\Lambda) = H^1(\Gamma)
 \]
%

%
%
%
%
%
%

\paragraph*{Robin Boundary Conditions}
Let $\beta \in L^\infty(\bO)$ be a bounded function defined 
almost everywhere on the boundary $\Gamma$.
In what follows we also denote $\beta$ the bounded operator of multiplication
$\varphi \mapsto \beta \varphi$, $\varphi \in L_2(\bO)$ acting in the space $L_2(\bO)$.
Consider the boundary value problem
\begin{equation}\label{eqn:aux4}
 \left\{
 \begin{aligned}
  &&(A - zI) u = 0, \\
  &&   -\partial_\nu u + \left.\beta u \right|_\Gamma = \varphi
 \end{aligned}
 \right.
 \end{equation}
 with~$\varphi \in L^2(\bO)$.
In particular, for $\beta =0$ we recover the classical Neumann
problem for the Laplacian in~$\Omega$.
For nontrivial~$\beta$ the system~(\ref{eqn:aux4})
 is called the boundary problem of third type, or Robin problem.
 Assume  $z\in\rho(\Ad)$ and let $u_z^\varphi$
 be a smooth solution
 to the first equation, that is  $(A -zI)u_z^\varphi =0$.
 Because~$\gamma_1 u_z^\varphi = M(z)\gamma_0 u_z^\varphi$,
 the second equation for the trace  $\psi := \gamma_0 u_z^\varphi $ becomes
  $(\beta + M(z)) \psi = \varphi$.
Suppose the map~$(\beta + M(z))$ is boundedly invertible as an
 operator in $L^2(\bO)$.
 Then the boundary
 equation for $\psi$ can
 be solved explicitly:
 $\psi = (\beta + M(z))^{-1}\varphi$.
In turn,  the solution~$u_z^\varphi$ is recovered from its
trace~$\psi = \gamma_0 u_z^\varphi$ by the
 mapping~$S_z$:
 \begin{equation}\label{eqn:aux14}
 u_z^\varphi  = (I -z\Ad^{-1})^{-1}\Pi \gamma_0 u_z^\varphi =
 (I -z\Ad^{-1})^{-1}\Pi  ( \beta + M(z) )^{-1}\varphi,
 \end{equation}
where $z \in \rho(\Ad)$ is such that~$( \beta + M(z) )^{-1}$ exists.
Observe that application of~$\gamma_1$ to both sides of this
equality yields the expression for the map~$\varphi \mapsto\gamma_1
u^\varphi_z$, which  by analogy with the DN-map can be called the
Robin-to-Neumann map:
\[
 M_{RN}(z) = M(z)(\beta + M(z))^{-1}
\]
Similarly, application of~$\gamma_0$
 yields an expression for the Robin-to-Dirichlet map:
 \begin{equation}\label{eqn:aux10}
  M_{RD}(z) = (\beta + M(z))^{-1}
 \end{equation}
%

%
%

\paragraph*{Krein's resolvent formula and Hilbert resolvent identity}
 Equations~(\ref{eqn:aux4}) give rise to
 another boundary problem, namely the problem for an unknown function~$u$ in $\Omega$
 satisfying
\begin{equation}\label{eqn:aux5}
 \left\{
 \begin{aligned}
  &&(A - zI) u = f, \\
  &&   \gamma_1 u + \beta \gamma_0 u   = 0
 \end{aligned}
 \right.
 \end{equation}
 with~$f \in  L^2(\Omega)$, where $\gamma_1 u = -\partial_\nu u = - \frac{\partial u}{\partial \nu}|_\Gamma $
 and $\gamma_0 u = u|_\Gamma$.
It is customary to
 look for a solution to~(\ref{eqn:aux5}) in the form
 \begin{equation}\label{eqn:aux6}
 u_z^f  = (\Ad -z I)^{-1} f +
 S_z \psi =
(\Ad -z I)^{-1} f + (I -z \Ad^{-1})^{-1}\Pi \psi
 \end{equation}
  with $z\in \rho(\Ad)$ and some $\psi \in L^2(\bO)$ to be determined.
Since~$(A - zI)(\Ad -zI)^{-1} f  = f$ and $(A -zI)S_z \psi = 0 $,
 the first equation~(\ref{eqn:aux5}) is satisfied by~(\ref{eqn:aux6})
 automatically; therefore we only
 need to find $\psi \in L^2(\bO)$ such that
 (\ref{eqn:aux6}) obeys the boundary
 condition in~(\ref{eqn:aux5}).
 Applying $\gamma_0$ and $\gamma_1$ to~(\ref{eqn:aux6})
 we obtain
\[
  \begin{aligned}
  \gamma_0 u_z^f  & = \gamma_0 S_z \psi = \psi
   \\
   \gamma_1 u_z^f  &  = \gamma_1 (\Ad - zI)^{-1} f + \gamma_1 S_z
   \psi = \Pi^* (I - z\Ad^{-1})^{-1} f + M(z)\psi
  \end{aligned}
\]
Now
 the relation~$\Pi^* = \gamma_1 \Ad^{-1}$, properties of solution operator~$S_z$ and the
 definition of~$M(z)$, lead to the following equation for the unknown function~$\psi$
 \[
 0 = (\gamma_1 + \beta \gamma_0)u_z^f = \Pi^* (I - z\Ad^{-1})^{-1} f + (\beta + M(z))\psi
 \]
Again, assuming~$z\in \rho(\Ad)$ is such that $(\beta + M(z))$ is
boundedly
 invertible, the formula for~$\psi$ follows:
 \[
 \psi  = - (\beta + M(z))^{-1}\Pi^* (I - z\Ad^{-1})^{-1}
 f
 \]
Substitution into~(\ref{eqn:aux6}) yields the result
 \begin{equation}\label{eqn:aux7}
 u_z^f  = (\Ad -z I)^{-1} f -
 (I - z\Ad^{-1})^{-1} \Pi (\beta + M(z))^{-1}\Pi^* (I -
 z\Ad^{-1})^{-1} f
 \end{equation}
 This expression certainly requires some justification as the second summand
 need not be smooth and thereby the normal
 derivative~$- \partial_\nu u_z^f$
 that appears in the boundary condition
 may be undefined for some $f\in L^2(\Omega)$.
 But let us defer discussion of this difficulty to the main body of
 the paper and turn instead to the operator-theoretic interpretation of
 the equations~(\ref{eqn:aux5}) and their solution~(\ref{eqn:aux7}).

%
%

 The system~(\ref{eqn:aux5})
 represents a problem of finding a vector~$u$
 from the domain of operator~$\mathcal A_\beta$ defined as a restriction
 of~$A$ to the set of functions~$u \in L^2(\Omega)$
 satisfying the boundary condition~$(\gamma_1 + \beta \gamma_0) u =0$
 in some yet undefined sense.
 It is clear that~$\mathcal A_\beta$ also can be treated as
 an extension of the so-called \textit{minimal operator}
 defined as $A = - \Delta$ restricted to the set~$C^\infty_0(\Omega)$ of
 infinitely differentiable functions in $\Omega$ that vanish in
 some neighborhood of $\bO$ along with all their partial derivatives.
 Assuming for the sake of argument that
 each vector $u\in \Dom(\mathcal A_\beta)$ satisfies
 the condition~$(\gamma_1 + \beta \gamma_0) u =0$
 literally, that is the expression $(\gamma_1 + \beta \gamma_0)u$
 makes sense for each $u\in \Dom(\mathcal A_\beta)$,
 the problem~(\ref{eqn:aux5}) with $f\in  L^2(\Omega)$
 is the familiar resolvent equation~$(\mathcal A_\beta - zI) u =f$
 for the operator~$\mathcal A_\beta$.
 Therefore the solution~(\ref{eqn:aux7}) for $z\in \rho(\mathcal A_\beta)$
  coincides with
 $(\mathcal A_\beta -zI)^{-1} f$.
We see that the
 resolvents of $\Ad$ and $\mathcal A_\beta$
 for $z\in \rho(\Ad)\cap\rho(\mathcal A_\beta)$
 are related by the following identity
 commonly known as \textit{  Krein's resolvent formula}
\begin{equation}\label{eqn:aux8}
 (\mathcal A_\beta - zI)^{-1} = (\Ad - zI)^{-1}  -
(I - z\Ad^{-1})^{-1} \Pi (\beta + M(z))^{-1}\Pi^* (I -
  z\Ad^{-1})^{-1}
\end{equation}
Notice that the right hand side of~(\ref{eqn:aux8}),
 depends on $(\beta + M(z))^{-1}$
 which is exactly the M\nobreakdash-function~(\ref{eqn:aux10}).
Under assumption of bounded invertibility of $\beta + M(0)$ in
$L^2(\bO)$
 we have
 \begin{equation}\label{eqn:aux12}
  \mathcal A_\beta^{-1} = \Ad^{-1} - \Pi (\beta + M(0))^{-1}\Pi^*
 \end{equation}
 This expression shows in particular
 that while the difference of $\mathcal A_\beta$ and $\Ad$
 is only defined \textit{a priori} on the set of smooth
 functions~$u$
 vanishing on the boundary $\Gamma$ along with their first derivatives
 where $(\mathcal A_\beta - \Ad ) u  = 0 $,
 the difference of their inverses~$\mathcal A_\beta^{-1} - \Ad^{-1}$
 is a nontrivial bounded  operator in $L^2(\Omega)$.
 As a consequence,
 if $\beta = \beta^*$, then the operator ~$\mathcal A_\beta$ is
 selfadjoint as an inverse of a sum of two bounded selfadjoint
 operators.
 Moreover, the
  formula~(\ref{eqn:aux12}) can be successfully
  employed for the investigation into spectral
  properties of $\mathcal A_\beta$, as
  it reduces the boundary problem setting to
  the well-developed case of
  perturbation theory for bounded operators (cf.~\cite{Grubb}).

%
%
%
%

Krein's  formula~(\ref{eqn:aux8}) implies another
useful identity relating resolvents of $\Ad$
 and $\mathcal A_\beta$ to each other.
 According to
 the definition of solution operator~$S_z$ the identity
  $\gamma_0 (I -z\Ad^{-1})^{-1}\Pi = I$ holds for any~$z\in \rho(\Ad)$.
 Hence, application of $\gamma_0$ to both sides of~(\ref{eqn:aux8})
 leads to
\begin{equation}\label{eqn:aux11}
  \gamma_0 (\mathcal A_\beta - zI)^{-1} = (\beta + M(z))^{-1}\Pi^* (I -
  z\Ad^{-1})^{-1}
\end{equation}
 Krein's formula can now be rewritten in  the form
 \[
 (\mathcal A_\beta - zI)^{-1} -  (\Ad - zI)^{-1}  = -
 (I - z\Ad^{-1})^{-1} \Pi  \gamma_0 (\mathcal A_\beta - zI)^{-1}
 \]
 By substituting
 the adjoint
 of~$S_z = (I -z\Ad^{-1})^{-1}\Pi$
 from~(\ref{eqn:aux9})
 we obtain the following
 variant of \textit{ Hilbert resolvent identity} for $\Ad$  and
 $\mathcal A_\beta$ (cf.~\cite{GM1}, \cite{GM2})
 \begin{equation}\label{eqn:aux13}
 (\Ad - zI)^{-1} -
   (\mathcal A_\beta - zI)^{-1}
  = [\gamma_1  (\Ad -\bar zI)^{-1}]^*
  \gamma_0 (\mathcal A_\beta - zI)^{-1}, \quad
   z\in \rho(\Ad)\cap \rho(\mathcal A_\beta)
 \end{equation}
%
 %
%

%
%

Finally, notice that all considerations above are valid at least
 formally
 if the symbol~$\beta$ in the
 condition~(\ref{eqn:aux5}) represents a linear bounded operator
 acting on the Hilbert space~$L^2(\bO)$.

\paragraph*{Summary}
Observations of this section lay down a foundation for the
 study of boundary value problems and M\nobreakdash-functions presented in the
 paper.
 For further convenience, this preliminary discussion concludes by summing up
  properties of operators $\Ad$ and $\Pi$
  and their relationships to the
  boundary maps~$\gamma_0$, $\gamma_1$ that are relevant
  for our study.
\begin{itemize}
  \item Operator $\Ad^{-1}$ is bounded,  selfadjoint, and $\Ker(\Ad^{-1}) = \{0\}$
  \item
   Operator $\Pi$ is bounded and $\Ker(\Pi) = \{0\}$
   \item
   The intersection $\Dom(\Ad)\cap \Ran(\Pi) = \Ran(\Ad^{-1}) \cap
   \Ran(\Pi)
   $ is trivial
  \item
   The left inverse of $\Pi$ is the trace operator~$\gamma_0$
  restricted to~$\Ran(\Pi)$, that is
    $\gamma_0\Pi \varphi=
   \varphi $ for $\varphi \in L^2(\Gamma)$.
  \item
  The set~$\Dom(\Ad)= \Ran(\Ad^{-1})$ is included into
  the null space of~$\gamma_0$, so that $\gamma_0 \Ad^{-1}  =0$
  \item
   The adjoint operator of $\Pi$ is expressed in terms of $\gamma_1$ and $\Ad$ as $\Pi^* = \gamma_1 \Ad^{-1}$
  \item
   Operator $\Lambda = \gamma_1 \Pi$ is selfadjoint (and unbounded) in
   $L^2(\bO)$.
\end{itemize}
Further, the spectral boundary value problem~$(A -zI)u =0$,
 $\gamma_0 u = \varphi$, where~$A$ is an extension of~$\Ad$
 to the set~$\Dom(\Ad) \dot{+}\Ran(\Pi)$ defined as $A h =0$ for $h\in
 \Ran(\Pi)$,
 gives rise to
 the solution operator~$S_z$ and to the M\nobreakdash-function~$M(z)$, $z\in \rho(\Ad)$.
\begin{itemize}
  \item
   The solution operator has the form $S_z = (I
   -z\Ad^{-1})^{-1}\Pi$, $z\in \rho(\Ad)$
  \item
   The M-function is formally defined by the equality~$M(z) = \gamma_1
   S_z$, $z\in \rho(\Ad)$
\end{itemize}
 Finally, the boundary condition associated
 with the
 expression~$\gamma_1 + \beta\gamma_0$ where $\beta$ is a linear
 operator in $L^2(\Gamma)$ defines the Robin boundary value problem
 and the corresponding linear operator~$\mathcal A_\beta$.
\begin{itemize}
 \item
 The resolvents of $\mathcal A_\beta$
 of $\Ad$ are related
 by Krein's formula~(\ref{eqn:aux8})
 expressed in terms of
 M\nobreakdash-function~(\ref{eqn:aux10})
 \item
  Hilbert resolvent identity~(\ref{eqn:aux13}) holds.
\end{itemize}
%
%
%


\paragraph{The linear systems theory perspective}
As stated in Introduction, ideas underlying the operator theoretic framework
 employed for the paper's purpose
  are partially inspired by the approach
   to boundary value problems found
    in the linear systems theory.
 These ideas are best illustrated by considering the following variant of problem~(\ref{eqn:aux3})
 \begin{equation}\label{eqn:aux16}
 \left\{
 \begin{aligned}
  &&A   u = f, \\
  && \gamma_0 u = \varphi
 \end{aligned}
 \right.
 \end{equation}
where all participating objects are as in~(\ref{eqn:aux3}) and the vector~$f$ is an arbitrary function from $L^2(\Omega)$.
From the point of view of linear systems theory, equations~(\ref{eqn:aux16}) describe a linear system with the state space~$H = L^2(\Omega)$,
 the input-output space~$E = L^2(\Gamma)$ and the main operator~$A$.
Solutions to~(\ref{eqn:aux16})  are called ``internal states'' of the system and vectors~$\varphi \in L^2(\Gamma)$ are interpreted as the system's  input.
The system's output is defined by the operator~$\gamma_1$ that maps internal states of the system to elements of the input-output space~$E$.
%


%
When the input in (\ref{eqn:aux16}) is absent ($\varphi =0$),  the corresponding internal state is obviously $u^f = \Ad^{-1}f$.
This situation corresponds to the closed system, that is, the system that is isolated from the external influences modeled by inputs~$\varphi \in E$.
The closed system still has a nontrivial output given by~$\gamma_1 : u^f \mapsto \gamma_1 \Ad^{-1}f = \Pi^* f$.
Introduction of the non-zero input~$\varphi \in E$  in (\ref{eqn:aux16}) is a way to open the system to external influences.
As can be easily verified, the procedure of system opening results in an additional term in the expression for the state vectors, $u^{f,\varphi} = \Ad^{-1}f  + \Pi \varphi$.
The output of the system defined by the operator~$\gamma_1$ results in the mapping
 from internal states to outputs
 in the form $\gamma_1 : u^{f,\varphi} \mapsto \gamma_1 \Ad^{-1}f + \gamma_1 \Pi \varphi$.
 At this point
 we need to take into consideration the unboundedness
  of the trace operator~$\gamma_1$ and only choose inputs resulting in the outputs
   that belong to $E = L^2(\Gamma)$.
All such inputs (admissible inputs) therefore are
  functions~$\varphi \in L^2(\Gamma)$
   for which the harmonic continuations~$\Pi \varphi$
    into the domain~$\Omega$ possess normal derivatives with traces on~$\Gamma$ from the space~$L^2(\Gamma)$.
With an appropriate choice of inputs~$\varphi \in L^2(\Gamma)$, the system's output
 is determined by the map~$\gamma_1 : u^{f,\varphi} \mapsto \Pi^* f + \Lambda \varphi$, where we employed notation~$\Lambda =\gamma_1\Pi$
 introduced earlier and used the equality~$\gamma_1 \Ad^{-1} = \Pi^*$.
%


The restriction of admissible inputs to a smaller set in this example
 is dictated by the choice of output operator~$\gamma_1$ that
 cannot be defined on all attainable internal states~$\{u^{f,\varphi}\mid f \in H, \varphi \in E \}$ of the system.
Such a restriction however does not create any inconvenience.
Quite the opposite, this feature can be perceived as an advantage of the approach, because
 it allows  for the definition of inputs according
  to the particular problem at hand.\footnote{
 This situation is common in practical applications of the systems theory where the set of inputs
  is always subject to the real world limitations.
 For instance, it is clear that only smooth functions from~$L^2(\Gamma)$ can be realized in practice as the system's inputs.
Therefore the ability to choose input vectors freely conforms to the
 standard assumptions of systems theory.
}
Note also that an alternative approach consists of suitable alterations of outputs that do not change essential properties
 of the system under investigation,
 and at the same time
 widen the set of admissible inputs
 (see~\cite{BHMNW}, \cite{BMNW} in this regard for an example of ``regularization procedure''  applied to the system's output).
%


The internal states of the linear system described by equations~(\ref{eqn:aux16})  are therefore represented as the sum of two components,
 $u^{f,\varphi} = \Ad^{-1} f + \Pi \varphi$.
The first term is always a function from the domain of Dirichlet Laplacian, and the second term needs not be
 smooth and belong to the domain of $A = -\Delta$ at all.
It is a function from the range of the operator of harmonic continuation from the boundary, $\Pi :L^2(\Gamma) \to L^2(\Omega)$.
These two components are linearly independent
in the sense of equivalence
\{$u^{f,\varphi} = 0\} \Longleftrightarrow \{\Ad^{-1} f = 0,  \Pi \varphi = 0 \}$.
In other words, the internal states of the system are vectors from the direct sum~$\Dom(\Ad) \dot{+} \Ran(\Pi)$.
In  the language of linear systems theory the second summand  is associated with the set of controls imposed on the system.
 The equality $\Ker(\Pi) = \{0\}$ means
 that this set stands in a one-to-one
 correspondence with the set of all inputs.
Operator~$\Pi$ that maps inputs into controls is often called the \emph{control operator}.
%


For the ''spectral'' case of linear system described by equations (\ref{eqn:aux3})
 the system's input are again vectors $\varphi \in L^2(\Gamma)$
 and the internal state is determined by
 the solution operator~$ \varphi \mapsto S_z \varphi$ for $z\in\rho(\Ad) $, see~(\ref{eqn:aux19}).
Following the systems theory language, if the output is defined by means of operator~$\gamma_1$ as $\gamma_1 S_z\varphi$,
  then the M\nobreakdash-function~(\ref{eqn:aux17}) is nothing but the transfer
  function of this system that maps the input~$\varphi$  into
   the output~$\gamma_1 S_z \varphi$ (for suitable $\varphi \in E$).
The resolvent identity and formula~$\Pi^* = \gamma_1 \Ad^{-1}$ allow to rewrite (\ref{eqn:aux17}) as
 \begin{equation} \label{eqn:aux18}
 M(z) = \Lambda + z \Pi^* (I - z\Ad^{-1})^{-1}\Pi, \quad z \in \rho(\Ad)
 \end{equation}
This representation has important consequences.
%
%
%

First,
the function~(\ref{eqn:aux18}) is expressed  in terms
 of three linear operators, $\Ad^{-1}$, $\Pi$, and $\Lambda$, playing
  very specific and well defined roles in the description of linear system corresponding to~(\ref{eqn:aux3}).
Namely, many applications of
 the systems theory interpret  the spectral parameter~$z \in \Complex$ in~(\ref{eqn:aux3})
 as the frequency of oscillations taking place inside~$\Omega$.
Typical and well known examples are classical
  acoustic and electromagnetic waves existing in the domain~$\Omega$.
The operator~$\Lambda = M(0)$ then has
  the meaning of system's response at zero frequency,
   and can be interpreted as the operator of static reaction.
Since its independence on the spectral parameter
 it maps inputs directly to the outputs without applying any $z$-dependent (therefore, frequency dependent) transformations.
In the systems theory terms the
 operator~$\Lambda$ is usually called the \emph{feedthrough operator}.
Consequently, with a given input~$\varphi\in E$  the second term in~(\ref{eqn:aux18})
   describes oscillations of the system around its ``static reaction''~$\Lambda\varphi$.
Notice that for  $z\in \rho(\Ad)$ the second term is a bounded operator in~$E$.
Also of interest is the observation that the feedthrough operator~$\Lambda$ is
 independent of operators~$\Ad$ and $\Pi$
  describing oscillations, and therefore can be chosen
   to suit specific requirements of the given application.\footnote{
  See publications~\cite{BHMNW}, \cite{BMNW} as an example, where the authors modify the system's output
  by subtracting the static reaction,  thereby
  working with the system with the output defined
  as $(\gamma_1 - \Lambda\gamma_0)u^\varphi $
  and consequently with the null feedthrough operator.
 Here $\varphi \in E$ is the input and $u^\varphi\in H$ is the corresponding internal state.
}

%
%
%
Secondly,
 as described above, the operator of harmonic continuation~$\Pi: L^2(\Gamma) \to L^2(\Omega)$
 translates inputs into controls.
Its adjoint~$\Pi^*$ is called the \emph{observation operator }
   because according to~(\ref{eqn:aux18}) it maps internal states~$S_z \varphi = (I - z\Ad^{-1})^{-1}\Pi \varphi$ into the
   system's output, thereby making internal
    states available to the external observer.
 The equality $\Pi^* = \gamma_1 \Ad^{-1}$ is crucial for the representation~(\ref{eqn:aux18})  of M\nobreakdash-operator
 initially defined as ~$M(z) = \gamma_1 S_z$.
  For the model example of the Laplacian discussed above
   the identity~$\Pi^* = \gamma_1\Ad^{-1}$ is a
     consequence of Fubini's theorem and  properties of Green's function.
  To ensure validity of the representation~(\ref{eqn:aux18})
   within the general framework,
   the definition of abstract counterpart of~$\gamma_1$ given below
 explicitly involves operator~$\Pi^*$, see Definition~\ref{defn:Gn}.
%
%

%
Finally,
 from the theoretical point of view the system is considered a ``black box,'' with the transfer function being the
  only source of information about its internals available to the observer.
It follows that the linear system
  defined by equations~(\ref{eqn:aux3}) or (\ref{eqn:aux16}) with the
   internal states-outputs map~$\gamma_1$
    is completely described by the operators~$\Ad^{-1}$, $\Pi$, and $\Lambda$
     participating in the
      representation~(\ref{eqn:aux18}) of its transfer function.
In other words, the study of~(\ref{eqn:aux3}) from the systems theory perspective
 is equivalent to the study of the set~$\{\Ad^{-1}, \Pi, \Lambda\}$.
%


The transition from the system defined in terms of~$\{A, \gamma_0, \gamma_1\}$
 to the system defined by~$\{\Ad^{-1}, \Pi, \Lambda\}$ is known in the systems
  theory as \emph{reciprocal transform},
   see~\cite{Cur}, \cite{Staff} and \cite{Ryz2}.
 These two systems share the state and input-output spaces, their transfer functions coincide, but
  their defining operators are different.
 One advantage of the reciprocal transform is that it translates operators~$\{A, \gamma_0, \gamma_1\}$
  that are often difficult to describe in practical applications into the set of
    well defined and closed operators~$\{\Ad^{-1}, \Pi, \Lambda\}$, two of which are bounded.
For instance, the Laplacian~$A = -\Delta$ in the domain~$\Omega$ from the model example above, when
  defined in its ``natural domain,'' that is, the Sobolev space~$H^2(\Omega)$,
   is not a closed operator in $L^2(\Omega)$.
   At the same time the mappings~$\gamma_0$ and $\gamma_1$ are well
    defined on~$H^2(\Omega)$, although they are not closed on their domains  either.
The procedure of operator theoretic closure of~$A = -\Delta$ in the space $L^2(\Omega)$ results
 in the operator~$\overline A = \mathop{clos}(-\Delta)$
  with the domain that contains elements from $L^2(\Omega)\setminus H^2(\Omega)$.
Because the null set of a closed operator is always closed,
 $\Dom(\overline A)$ contains at least the $L^2$-closure of all harmonic functions
  continuous in~$\overline \Omega$.
 This set includes functions that need not to possess boundary values on $\Gamma$, so that
   the boundary mappings~$\gamma_0$, $\gamma_1$ can not be defined on all
    elements from $\Dom(\overline A)$.
Therefore the choice of suitable domain for~$A$ and subsequent expressions for boundary operators are not always obvious
 (except for the simplest cases, involving a boundary space of finite dimensionality as one example).
In contrast, operators of the reciprocal system~$\{\Ad^{-1},\Pi,\Lambda\}$ are all well defined and always
 closed.
They are the solution operator of the Dirichlet problem in the domain~$\Omega$, the operator
 of harmonic continuation from the boundary~$\Gamma$ into $\Omega$,
 and the classical Dirichlet-to-Neumann map for the Laplacian in~$\Omega$, respectively.


References to the reciprocal transform also help to clarify the relationship between the paper's
  framework and the mentioned earlier approach based on the notion of
   boundary triples.
The starting point for the latter is  the
    set~$\{\overline{A}, \gamma_0, \gamma_1\}$ (where $\overline A$ is the operator-theoretic closure of $A$)
   that gives rise to an abstract Green's formula,
    as opposed to the discussion below carried out on the basis of operators~$\{\Ad^{-1}, \Pi, \Lambda\}$
     that define the ``reciprocal'' system.
In order to circumvent the described above difficulties with the operator domains
 the earlier versions of boundary triples approach~\cite{DM1}, \cite{DM2}, \cite{DM3}
 severely limited its applicability
 by requesting the operator~$A$ to be closed, $\gamma_0$, $\gamma_1$ to be bounded in the graph norm of $A$,
 and the ranges of $\gamma_0$, $\gamma_1$ to coincide with the boundary space~$E$.
The last assumption is the most restrictive, as it automatically excludes
 from consideration unbounded M\nobreakdash-functions.
These limitations were removed only recently, see papers~\cite{BehrLan}, \cite{DHMS2},
 opening further possibilities of non trivial applications to the partial differential operators.
In contrast, the approach based on the set~$\{\Ad^{-1}, \Pi, \Lambda\}$
 offers a framework  free of these restrictions.
It not only allows one to work with closed and bounded operators,
 but also gives an option
 to selectively choose inputs from the boundary space~$E$, thus eliminating the concern
  of a suitable domain definition for boundary mappings
   and removing the assumption of closedness (and even closability) of~$A$.
It is also worth mentioning that when  operators~$\{\overline A, \gamma_0,\gamma_1\}$ form a ''boundary triplet,''
 all three of them are mutually interdependent.
Their domains must be suitably chosen and their definitions
  must fit together in order for the Green's formula to hold.
In the ``reciprocal'' approach, only two operators, $\Ad$ and~$\Pi$,
 are interdependent (the intersection of their ranges must be trivial),
  whereas the operator~$\Lambda$ (both its action and its domain) 
   can be selected arbitrarily.
Qualitatively speaking, one may say
 that the boundary triples method goes ``from the inside to the outside'' relating
 elements of the state space~$H$ to
  elements in the boundary space~$E$ by means of operators~$\gamma_0$ and $\gamma_1$,
   whereas the approach adopted in this paper
    goes in the opposite direction
     by introducing the control operator~$\Pi$ that maps elements from the boundary space
      into elements of the state space.
Operator~$\Lambda$ then, as a feedthrough operator acting on the boundary space,
 is an arbitrary parameter that does not have to be closed and even closable.


\paragraph{Applications}
Translation of classic boundary value problems and their solution procedures
 to the operator theoretic language suggests applicability of the obtained results 
 in various settings.
 As one example,
 it seems  rather natural to consider a more general
 type of  boundary conditions~(\ref{eqn:aux5})
 written as $(\alpha \gamma_1 + \beta \gamma_0) u= 0$
 with some linear operators $\alpha$, $\beta$  acting on $L^2(\Omega)$
  (or even bounded operator valued functions ~$\alpha(z)$, $\beta(z)$
   of the spectral parameter~$z\in \Complex$).
If $\beta = \chi^{}_E$ is the characteristic function
 of a non empty measurable set~$E\subset\Gamma$
 of positive Lebesgue  surface measure on~$\Gamma$
 and $\alpha = 1 - \chi^{}_E$, then
 the boundary condition above
 takes the form $(1 - \chi^{}_E)\partial_\nu u+ \chi^{}_{E} u |_\Gamma =0$.
It describes the so called mixed boundary value problem (Zaremba's problem)
 with the Dirichlet boundary
  condition  on $E$ and the Neumann condition
  on $\Gamma \setminus E$~(cf.~\cite{ML}).
%

%
%
The abstract operator theoretic technique
  elaborated in the paper can be successfully applied
     to the study of boundary value problems of classic and modern
      complex analysis.
In particular, it is possible to
  reformulate within the abstract framework
   classic problems of Poincar\'e,
    Hilbert, and Riemann for harmonic and analytic
     functions in bounded simply connected and
      sufficiently smooth domains of the complex plane, see~\cite{Ryz6}.
The generic boundary conditions in the form~$(\alpha \gamma_1 + \beta \gamma_0) u= 0$
 appear rather naturally in these cases.



One more example is based on the earlier study~\cite{Ryz2} and is discussed here at some length.
Using the above notation,
 it concerns the transmission type boundary condition
  imposed on solutions to the equation $(-\Delta - \zeta I)u = 0$
   inside and outside of~$\Omega$.
It is convenient
to rewrite this equation as $(A - zI)u = 0$ with $A = -\Delta + I$ and $z = \zeta + 1$
for reasons that will be clarified shortly.
Denote~$u^\pm_z$ its solutions in the domains~$\Omega^\pm$ where $\Omega^- = \Omega$ and $\Omega^+ = \Real^n \setminus \overline\Omega$.
Then
  the boundary condition $(\partial_\nu u^-_z)|_\Gamma - (\partial_\nu u^+_z)|_\Gamma = \varphi$ with $\varphi \in L^2(\Gamma)$
 defines a variant of transmission problem.
  Here~$(\partial_\nu u^\pm_z)|_\Gamma$ are boundary values on $\Gamma$ of the
  normal derivatives of functions~$u_z^\pm$ in the direction of
  outer normal to the domain~$\Omega$.
The solution to this problem is given by the single layer potential
\[
 (\mathscr S_z \varphi)(x)   := \int_\Gamma
  G(x,y, z) \varphi (y) d\sigma_y, \qquad x\in \Real^n
\]
where $G(\cdot, \cdot, z) $ is the standard Green's function of the differential operator~$(-\Delta +I  - zI)$ and
$d \sigma_y$ is the Euclidian surface measure on~$\Gamma$.
In order to include this problem into the paper's framework, define operators~$\gamma_0$ and $\gamma_1$
acting on linear combinations of smooth functions~$v^\pm \in L^2(\Omega^\pm)$ with the property $v^-|_\Gamma = v^+|_\Gamma = v|_\Gamma \in C(\Gamma)$
 as maps
 \[
 \gamma_0: v \mapsto  (\partial_\nu v^-)|_\Gamma - (\partial_\nu v^+)|_\Gamma, 
 \qquad 
  \gamma_1: v \mapsto  v|_\Gamma
 \]
where we put~$v := v^+ + v^- \in L^2(\Real^n)$.
Properties of single layer potentials are such that boundary values on~$\Gamma$
of the function~$\mathscr S_z \varphi$ taken from $\Omega^+$ and $\Omega^-$
coincide almost everywhere.
Moreover, the difference of boundary values of normal derivatives of~$\mathscr S_z \varphi$
from inside and outside of~$\Omega$ are  equal to $\varphi$ almost everywhere.
In other words, $\gamma_0 \mathscr S_z \varphi$ and $\gamma_1 \mathscr S_z \varphi$ are well defined and
$\gamma_0 \mathscr S_z \varphi = \varphi$.
Now it is only a matter of interpretation to treat this transmission problem
 as a spectral problem in the form~(\ref{eqn:aux3}).
The solution operator~$S_z$ obviously coincides with~$\varphi \mapsto \mathscr S_z \varphi$ and the choice of operator~$\gamma_1$
made above leads to the M\nobreakdash-function being the single layer potential restricted
to~$\Gamma$, that is, $M(z)\varphi = \mathscr S_z \varphi|_\Gamma$.
Corresponding expressions for $\Pi$ and $M(0) = \gamma_1 \Pi$  easily follow from their definitions.
More precisely, since $\Pi = \mathscr  S_z|_{z=0}$ we have $\Pi \varphi = \mathscr S_0 \varphi$ and
$M(0) \varphi = \mathscr S_0 \varphi |_\Gamma$.


The expression for~$\Ad = A|_{\Ker(\gamma_0)}$ deserves further discussion.
Since $A$ is initially defined on the domain of all functions $v \in L^2(\Real^n)$, smooth in $\Omega^\pm$ and continuous in $\Real^n$,
the condition~$\gamma_0 v =0 $ makes $\Ad$ equal to $-\Delta + I$ defined on the domain of standard  Laplacian~$-\Delta$
in $L^2(\Real^n)$ (after the conventional operator closure procedure).
This fact follows from the embedding theorems for Sobolev classes~$H^2$,
 according to which the function~$v = v^- + v^+ $, where  $v^\pm \in H^2(\Omega^\pm)$ belongs to  $H^2(\Real^n)$
  if
   $v^-|_\Gamma = v^+|_\Gamma$ and
  $(\partial_\nu v^-)|_\Gamma = (\partial_\nu v^+)|_\Gamma$ almost everywhere on $\Gamma$.
Also note that the addition of identity operator~$I$ to the Laplacian~$-\Delta$ ensures bounded invertibility of~$\Ad$.
Operator defined by~(\ref{eqn:aux5}) with $\beta =0$ (that  is, by the condition ~$\gamma_1 u = 0$)
is the orthogonal sum of two Dirichlet Laplacians acting in~$L^2(\Omega^-)\oplus L^2(\Omega^+)$.
 A more general transmission problem corresponding to the
 boundary condition~$\alpha (v|_\Gamma) + \beta[(\partial_\nu v^-)|_\Gamma - (\partial_\nu v^+)|_\Gamma ] = \varphi $
  with $\varphi \in L^2(\Gamma)$ and bounded operators $\alpha$, $\beta$ acting in $E = L^2(\Gamma)$
   is a particular case of problems investigated in the present paper.


It is also clear that the setting of transmission problem can be interpreted as a case of
 singular perturbations of quantum mechanics~\cite{AFHKL}, \cite{AGHH}, \cite{AlbevKur},
  where the ``free'' Laplacian defined initially
   in all space~$\Real^n$ is perturbed by the ``potential'' supported by the surface~$\Gamma$.
 Various boundary conditions in the form~$(\alpha\gamma_0 + \beta\gamma_1) u = 0$ with $\gamma_0$, $\gamma_1$ as above and
   suitable choice of linear operators $\alpha$, $\beta$ acting in $L^2(\Gamma)$ reflect the variety of possible ``parameterizations'' available in this model.
Another illustration of the point of view based on the theory of singular perturbations is given in the last section.


Naturally, the same considerations are applicable to more generic elliptic differential operators in place of the Laplacian, as long as the single layer potential
constructed by the Green's function of such operators possesses the same boundary properties as the conventional ``acoustic'' potential~$\mathscr S_z$, see~\cite{Agran}, \cite{Maz}, \cite{ML}.
 In particular, the Schr\"odinger operator~$-\Delta + q(x)$ in $L^2(\Real^n) $ with  sufficiently regular real valued function~$q(x)$ satisfies this condition.
 It is a remarkable fact that  when $n =3$, $q\in L^\infty(\Real^3)$ and $\Omega = \{x\in \Real^3\mid |x| < 1\}$
  the M\nobreakdash-function defined by the theory elaborated in the paper coincides with the
    Weyl-Titchmarsh function of
    the three-dimensional Schr\"odinger operator obtained in~\cite{AmP}
     by the multidimensional analogue of the classical nesting
      procedure of the Sturm-Liouville theory~\cite{Titch} (see~\cite{Ryz2} for the proof).
 Thus the single layer potential constructed by the Green's function of Schr\"odinger operator with the density
 supported by the unit sphere in~$\Real^3$ is a direct multidimensional equivalent
 of the celebrated Weyl-Titchmarsh~$m$-function.
%


A similarly developed theory for double layer potentials results in another type of transmission boundary conditions; the M\nobreakdash-function
 in this case coincides with the (unbounded) hypersingular integral operator acting in~$L^2(\Gamma)$.
The ``unperturbed'' operator $\Ad$ then is the ``free'' Laplacian acting in~$L^2(\Real^n)$,
whereas the operator defined by the condition ~$\gamma_1 u = 0$
is the orthogonal sum of two Neumann  Laplacians acting in~$L^2(\Omega^-)\oplus L^2(\Omega^+)$.
The interested reader is referred to the publication~\cite{Ryz2} for proofs and further details.

%

%
%
%
%



\section{Spectral Boundary Value Problem and its M-function} \label{ABST}
This section is concerned with a
 framework used in the study of spectral boundary value problems
   conducted in Sections~5 and~6.
A substantial part of the material covered here is an exposition of certain facts that
  can be found in the literature.
For the most general perspective,
 the  reader is referred to the works~\cite{DHMS1}, \cite{DHMS2}, \cite{DHMS3} and references therein
 carried out in a very generic setting of abstract boundary relations.
In fact, principal results communicated here can be derived from
 the exhaustive treatment of~\cite{DHMS2} as a particular case. 
Remark~\ref{rem:MM6}
 at the end of section outlines 
 a possible approach for such a derivation
 and  also clarifies existing
 relationships between~\cite{DHMS2}
 and the setting of present paper. 
 %
%
%
The main goal of this section is to give a
 concise account of all relevant facts in the form 
  convenient for the present study alongside with adequate proofs. 
%
%
%
Topics covered include
 the definition of
   spectral boundary value problem
  complemented by a discussion of properties of its solutions
  and the definition of corresponding
  M\nobreakdash-function.
An abstract analogue of the
 operator~$\gamma_1$ from Section~\ref{DNMAP}
  leading to the Green's formula and to the concept of
   weak solutions is elaborated in some depth.
The  study is conducted under the following assumption.


%
%

%
Let $H$, $E$ be two separable Hilbert spaces, $\Ad$ be a
 linear operator in
 $H$ defined on the dense domain~$\Dom(\Ad)$ in $H$
 and let $\Pi : E \to H$ be a bounded linear mapping.
\begin{assum}\label{assum:1}
 Suppose the following:
\begin{itemize}
  \item Operator $A_0$ is selfadjoint and boundedly invertible in
  $H$.
 \item
  Mapping~$\Pi$ possesses the
 left inverse~$\widetilde\Gamma_{\!0}$
 defined on~$\Ran(\Pi)$
 by $\widetilde\Gamma_{\!0} : \Pi \varphi \mapsto \varphi$, $\varphi \in E$.
 \item
  The intersection of~$\Dom(\Ad)$ and $\Ran (\Pi)$ is trivial,
  $\Dom(\Ad) \cap \Ran (\Pi) = \{0\}$.
\end{itemize}
\end{assum}
%


\begin{rem}\label{rem:MM1}
 As shown in~\cite{DHMS2}, conditions of Assumption~1
  can be substantially weakened.
  In particular, boundary mappings~$\Gd$ and $\Gn$ in the context of~\cite{DHMS2}
   are multivalued operators (linear relations) defined on the graph of operator~$A$
  that need not be single-valued, nor have a dense domain
   (compare with the definitions of ~$\Gd$, $\Gn$, and $A$ in our case below).
  In addition, bounded invertibility of ~$A_0$ is not required for validity of a number of statements found in this section.
%
%
\end{rem}


 Under Assumption~\ref{assum:1} neither of sets~$\Dom(\Ad) $ and $\Ran(\Pi)$
  coincides with
 the whole space~$H$.
  In follows that $\Ad$ is necessarily
 unbounded.
 Furthermore,
 existence of the left inverse of $\Pi$ implies $\Ker(\Pi) = \{0\}$.
The condition~$\Dom(\Ad) \cap \Ran (\Pi) = \{0\}$ is essential.
 It guarantees
 existence of (unbounded) projections from the direct sum
    $\Dom(\Ad) \dot{+} \Ran (\Pi)$ into the each component parallel to another.
In turn, it ensures correctness of definitions of operators~$A$ and $\Gd$
 in the next paragraph.
 Finally, note that for a non-invertible selfadjoint operator~$\Ad$
  with a
  real regular point~$c \in \rho(\Ad) \cap \Real$
  the invertibility condition can be easily satisfied
  by considering the operator~$\Ad - c I$ in place of $\Ad$.


Introduce two linear operators $A$ and $\Gd $ on the domain $\Dom(A) =
\Dom(\Gd)\subset H$ by
\begin{equation}
 \Dom(A) := \Dom(\Ad) \dot{+} \Ran (\Pi) = \{\Ad^{-1} f + \Pi \varphi \, |\, f \in H, \varphi \in E\}
\end{equation}
\begin{equation}\label{eqn:DefOfGamma0}
A : \Ad^{-1} f + \Pi \varphi \mapsto f,
 \quad
 \Gd : \Ad^{-1} f + \Pi \varphi \mapsto \varphi, \qquad
 f \in H, \varphi \in E
\end{equation}
Operators $A$ and $\Gd$ are extensions of $\Ad$ and
 $\widetilde\Gamma_{\!0}$ to $\Dom(A)$ defined to be the null mapping
 on the complementary
 subsets $\Ran(\Pi)$  and $\Dom(\Ad)$, respectively.
Observe that $\Ker(A) = \Ran(\Pi)$ and $\Ker(\Gd) = \Ran(\Ad^{-1}) $
($= \Dom(\Ad)$)
since  $\Ker(A|_{\Dom(\Ad)})$ and $\Ker(\Gd|_{\Ran(\Pi)})$ are
trivial by construction.
%


\begin{defn}
Spectral boundary problem associated with the pair~$\Ad$,
$\widetilde\Gamma_{\!0}$ satisfying Assumption~1 consists of the
system of linear equations for an unknown element~$u \in \Dom(A)$
\begin{equation} \label{def:BVP}
 \left\{
  \begin{aligned}
   (A -zI)u  & = f \\
    \Gd u  = \varphi
  \end{aligned}
 \right. \qquad \qquad f \in H, \varphi \in E
\end{equation}
where~$z\in \Complex$ is the spectral parameter.
\end{defn}


\begin{thm}\label{thm:BVPSolution}
For $z\in \rho(\Ad)$ and any $f\in H$, $\varphi \in E$ there exists
a unique solution~$u_z^{f,\varphi}$ to the problem~(\ref{def:BVP})
given by the formula
\begin{equation}\label{eqn:BVPSolution}
u_z^{f,\varphi}  = (\Ad - zI)^{-1}f + (I - z\Ad^{-1})^{-1}\Pi
\varphi
\end{equation}
Moreover, if for some $f \in H$ and $\varphi \in E$ the vector defined by the right hand side of~(\ref{eqn:BVPSolution}) is null, then $f =0$
and $\varphi =0$.
\end{thm}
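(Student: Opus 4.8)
The plan is to verify directly that the candidate vector in~\eqref{eqn:BVPSolution} lies in $\Dom(A)$, satisfies both equations of~\eqref{def:BVP}, and is unique; the final injectivity claim then follows by tracking the decomposition coefficients. First I would observe that $(\Ad-zI)^{-1}f \in \Dom(\Ad)$ for $z\in\rho(\Ad)$, and that $(I-z\Ad^{-1})^{-1}\Pi\varphi = \Pi\varphi + z\Ad^{-1}(I-z\Ad^{-1})^{-1}\Pi\varphi$, which is the sum of an element of $\Ran(\Pi)$ and an element of $\Ran(\Ad^{-1}) = \Dom(\Ad)$; hence $u_z^{f,\varphi}\in\Dom(A)=\Dom(\Ad)\dot+\Ran(\Pi)$. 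To check $\Gd u_z^{f,\varphi}=\varphi$, note $\Gd$ annihilates $\Dom(\Ad)$ and acts as $\widetilde\Gamma_{\!0}$ on $\Ran(\Pi)$; the only $\Ran(\Pi)$-component of $u_z^{f,\varphi}$ is $\Pi\varphi$ (the rest lies in $\Dom(\Ad)$), so $\Gd u_z^{f,\varphi} = \widetilde\Gamma_{\!0}\Pi\varphi = \varphi$ by Assumption~\ref{assum:1}. For the first equation, I would use that $A$ restricted to $\Ran(\Pi)$ is the null map, so $A\Pi\varphi = 0$, and that $A$ agrees with $\Ad$ on $\Dom(\Ad)$, giving $(A-zI)(\Ad-zI)^{-1}f = f$ and $(A-zI)(I-z\Ad^{-1})^{-1}\Pi\varphi = -z(I-z\Ad^{-1})^{-1}\Pi\varphi + (-zI)\cdot$(correction)$\,\dots$; the cleanest route is the short computation already displayed in Section~\ref{DNMAP}, namely $(A-zI)\bigl(\Pi\varphi + z(\Ad-zI)^{-1}\Pi\varphi\bigr) = -z\Pi\varphi + z\Pi\varphi = 0$, using $A\Pi\varphi=0$ and $(A-zI)(\Ad-zI)^{-1}=I$ on $\Dom(\Ad)$. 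Adding the two pieces yields $(A-zI)u_z^{f,\varphi}=f$.

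For uniqueness, suppose $u,u'\in\Dom(A)$ both solve~\eqref{def:BVP}; then $w := u-u'$ satisfies $(A-zI)w=0$ and $\Gd w=0$. Writing $w = \Ad^{-1}g + \Pi\psi$ via the direct-sum decomposition, the condition $\Gd w=0$ forces $\psi=0$ (since $\Gd w = \psi$), so $w=\Ad^{-1}g\in\Dom(\Ad)$; then $0=(A-zI)w=(\Ad-zI)\Ad^{-1}g$, and since $z\in\rho(\Ad)$ the operator $(\Ad-zI)$ is injective, so $\Ad^{-1}g=0$, i.e.\ $w=0$. This is essentially the same bookkeeping as the last claim: if $u_z^{f,\varphi}=0$, then its $\Ran(\Pi)$-component $\Pi\varphi$ must vanish (by directness of the sum and the fact that the $\Ad^{-1}(I-z\Ad^{-1})^{-1}\Pi\varphi$ term lies in $\Dom(\Ad)$), and $\Ker(\Pi)=\{0\}$ gives $\varphi=0$; feeding this back leaves $(\Ad-zI)^{-1}f=0$, whence $f=0$ by injectivity of $(\Ad-zI)^{-1}$.

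The only genuinely delicate point, and the one I expect to spend the most care on, is the decomposition step: asserting that the $\Ran(\Pi)$-component of $u_z^{f,\varphi}$ is \emph{exactly} $\Pi\varphi$. This requires knowing that $(\Ad-zI)^{-1}f$ and $z\Ad^{-1}(I-z\Ad^{-1})^{-1}\Pi\varphi$ both lie in $\Dom(\Ad)=\Ran(\Ad^{-1})$ and that the sum $\Dom(\Ad)\dot+\Ran(\Pi)$ is direct — which is precisely the content of the third bullet of Assumption~\ref{assum:1}, $\Dom(\Ad)\cap\Ran(\Pi)=\{0\}$. Once the directness of the sum is invoked, the coefficients in the representation $u=\Ad^{-1}g+\Pi\psi$ are unambiguous, $A$ and $\Gd$ are well defined, and every assertion above is a one-line consequence. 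I would state this directness explicitly at the start of the argument so that all subsequent ``read off the component'' steps are rigorous rather than formal.
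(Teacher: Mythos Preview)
Your proof is correct and follows essentially the same approach as the paper: the paper likewise expands $(I-z\Ad^{-1})^{-1}\Pi\varphi = \Pi\varphi + z(\Ad-zI)^{-1}\Pi\varphi$ to verify both equations of~\eqref{def:BVP}, and for the final injectivity claim rewrites $u_z^{f,\varphi} = (\Ad-zI)^{-1}(f+z\Pi\varphi) + \Pi\varphi$ to read off the $\Ran(\Pi)$-component via the direct sum. Your uniqueness argument is spelled out in slightly more detail than the paper's one-line appeal to $z\in\rho(\Ad)$, but the content is identical.
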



\begin{proof}
We will show  that the first term in (\ref{eqn:BVPSolution}) is a
solution to the
 system~(\ref{def:BVP}) with $\varphi =0$, $f\neq 0$
 and the second one solves the
 system~(\ref{def:BVP}) for $f =0$, $\varphi \neq 0$.
To that end let us verify first
  that $(I - z\Ad^{-1})^{-1}\Pi \varphi$ belongs to $\Ker(A
 -zI)$.
 We have
 \begin{multline*}
 (A -zI)(I - z\Ad^{-1})^{-1}\Pi \varphi   =
 (A-zI)\left( I + z (\Ad -zI)^{-1}\right ) \Pi \varphi \\
  =
 \left ( A -zI + z(A- zI) (\Ad -zI)^{-1}\right ) \Pi \varphi
  =
  \left(A -zI + zI \right)\Pi \varphi = A\Pi \varphi  = 0
 \end{multline*}
since $\Ad \subset A$ and $\Ker(A) = \Ran (\Pi)$.
Therefore
\[
(A-zI)u_z^{f,\varphi} = (A-zI)(\Ad - zI)^{-1} f = f
\]
For the second equation~(\ref{def:BVP}) and $u_z^{f,\varphi}$ as
in~(\ref{eqn:BVPSolution}) ,
 \[
 \Gd u_z^{f,\varphi} = \Gd (I -z \Ad^{-1})^{-1}\Pi \varphi =
  \Gd (I + z (\Ad -zI)^{-1})\Pi \varphi = \Gd \Pi \varphi = \varphi
 \]
because~$\Ker(\Gd) = \Dom(\Ad) = \Ran((\Ad -z I)^{-1})$.
Both equations~(\ref{def:BVP}) are therefore satisfied.


Uniqueness of the solution~(\ref{eqn:BVPSolution}) is a direct consequence of
assumption~$z\in \rho(\Ad)$.
For $z =0 $ the implication $u_0^{f,\varphi} = 0 \Rightarrow f =0, \varphi =0$
trivially holds due to uniqueness of the decomposition $u_0^{f,\varphi} = \Ad^{-1} f + \Pi \varphi$ into the sum of two terms
 from disjoint sets and equalities~$\Ker(\Ad^{-1}) = \{0\}$, $\Ker(\Pi) = \{0\}$.
 For $z\in\rho(\Ad)$ with the help of
 identity $(I - z \Ad^{-1})^{-1} = I + z (\Ad -z I)^{-1}$
  the representation~(\ref{eqn:BVPSolution})
  can be rewritten as
\[
 u_z^{f,\varphi} = (\Ad - z I)^{-1} (f + z \Pi \varphi ) + \Pi \varphi
\]
The first summand here belongs to $\Dom(\Ad)$ and the second to
$\Ran (\Pi)$.
Since the intersection of these two sets is trivial,
 the equality $u_z^{f,\varphi}  =0 $ implies $\Pi\varphi = 0$ and
thus $\varphi = 0$.
Then $ (\Ad - z I)^{-1} f = 0 $ and therefore $f = 0 $.
\qed
\end{proof}


\begin{defn}
Assuming $z\in\rho(\Ad)$ denote $R_z = (\Ad -zI)^{-1}$ the resolvent
of~$\Ad$ and introduce the \textit{solution operator}~$S_z : E \to
E$
\[
 S_z : \varphi \mapsto (I - z\Ad^{-1})^{-1} \Pi \varphi = (I + zR_z)\Pi \varphi, \qquad
 \varphi \in E, \; z \in \rho(\Ad)
\]
\end{defn}

\begin{rem}\label{rem:MM4}
 An alternative name for the solution operator commonly accepted in the theory of linear symmetric operators and
  relations is \emph{$\gamma$-field}, see~\cite{DHMS1}, \cite{DHMS2}, \cite{DHMS3} and references therein.
   The present paper  follows the terminology inherited from the theory of boundary value problems~\cite{Grubb3}
   in order to stress out
     the role mapping~$S_z$ plays in the considerations below.
\end{rem}


\begin{rem}\label{rem:aux1}
Important properties of the solution operator follow from its definition and the resolvent identity
(see~\cite{DHMS2}, Proposition~4.11 for the general case).
Suppose   $z \in\rho(\Ad)$.
Then   $\Gd S_z  = I $ and
  $\Ran(S_z) = \Ker(A -zI)$.
Moreover,
\begin{equation}\label{eqn:RzMinusRzeta}
  S_z - S_\zeta = (z - \zeta)R_z S_\zeta,\qquad
  z,\zeta\in
 \rho(\Ad)
\end{equation}
\end{rem}
%


\begin{proof}
The first claim follows  from Theorem~\ref{thm:BVPSolution}.
The same theorem shows that the range of~$S_z$ is included into
$\Ker (A -zI)$.
To show that $\Ran(S_z) = \Ker(A -zI)$  assume  $u = \Ad^{-1} f +
\Pi \varphi$ with $f\in H$, $\varphi \in E$ is such that $ u \in
\Ker(A -zI)$.
 Then
\[
 0 = (A - zI) u = (A -zI)(\Ad^{-1} f + \Pi \varphi )
 = (I - z \Ad^{-1}) f - z \Pi \varphi
\]
so that $f = z (I - z\Ad^{-1})^{-1}\Pi\varphi$.
Substitution into $u = \Ad^{-1} f + \Pi \varphi$ gives
\[
u = \Ad^{-1} f + \Pi \varphi = \left[z\Ad^{-1} (I - z\Ad^{-1})^{-1}
+ I \right] \Pi \varphi
 = (I - z\Ad^{-1})^{-1}\Pi \varphi = S_z \varphi
\]
The last statement is easily verified by the direct calculation
based on the resolvent identity
\[
 \begin{aligned}
 (I -   z\Ad^{-1})^{-1}   & -
 (I - \zeta\Ad^{-1})^{-1}
 =
  z (\Ad - zI)^{-1} - \zeta (\Ad - \zeta I )^{-1}
  \\
  & = (\Ad - zI)^{-1}\left( z I - \zeta (I - z \Ad^{-1})(I - \zeta \Ad^{-1})^{-1} \right)
  \\
  & = (\Ad - zI)^{-1}\left( z (I - \zeta \Ad^{-1}) - \zeta (I - z \Ad^{-1}) \right) (I - \zeta \Ad^{-1})^{-1}
  \\
  & = (z - \zeta)(\Ad - zI)^{-1} (I - \zeta \Ad^{-1})^{-1}
 \end{aligned}
\]
Multiplication by $\Pi$ from the right concludes the proof.
\qed
\end{proof}


Now an analogue of the ``second boundary
 operator''~$\gamma_1$ described in Section~1 can be introduced.
%


\begin{defn}\label{defn:Gn}
 Let $\Lambda$ be a linear operator in $E$ with the
 domain~$\Dom(\Lambda) \subset E$.
 Define the linear mapping~$\Gn$
 on the subset~$\mathscr D := \Dom(\Ad) \dot{+} \Pi \Dom(\Lambda)$
 by
\begin{equation}\label{eqn:DefOfGamma1}
 \Gn : \Ad^{-1} f + \Pi \varphi \mapsto \Pi^* f + \Lambda \varphi,
  \qquad f \in H, \varphi \in \Dom(\Lambda)
\end{equation}
\end{defn}


Note that according to this definition $\Lambda = \Gn \Pi$ and $\Pi
= (\Gn \Ad^{-1})^*$.
 In particular, for the solution
 operator~$S_z = (I - z\Ad^{-1})^{-1}\Pi = \Ad(\Ad - zI)^{-1}\Pi $ we obtain
\begin{equation}\label{eqn:GnADz-1}
 (S_{\bar z})^*
  = \Gn (\Ad -zI)^{-1} = \Gn R_z, \quad z\in \rho(\Ad)
\end{equation}
%


\begin{assum}\label{assum:2}
 Operator $\Lambda = \Gn \Pi$ is selfadjoint (and thereby densely defined).
\end{assum}

\begin{rem}
 In the sequel it is always assumed that the set $\{\Ad^{-1}, \Pi,\Lambda \}$
 satisfies both Assumptions~\ref{assum:1} and~\ref{assum:2}.
\end{rem}


\begin{thm}[Green's Formula]\label{thm:GreenFormula}
 \[
  (A u, v)_H - (u, Av)_H = (\Gn u, \Gd v)_E - (\Gd u, \Gn v)_E,
  \quad u,v \in \mathscr D
 \]
\end{thm}
%


%
\begin{proof}
 Let $u = \Ad^{-1} f + \Pi \varphi$, $v = \Ad^{-1} g + \Pi \psi$
 with $f,g \in H$, $\varphi, \psi \in \Dom(\Lambda)$.
 We have $A u = f$, $A v = g$, and
 due to selfadjointness
 of $\Ad^{-1}$ and $\Lambda$,
 \begin{multline*}
 (A u, v)_H - (u, A v)_H   = (f, \Ad^{-1} g + \Pi \psi) - (\Ad^{-1} f + \Pi \varphi, g)
  = (f, \Pi \psi) - (\Pi \varphi, g) = \\
    (\Pi ^* f,  \psi) - (\varphi, \Pi^*g)
  = (\Pi ^* f + \Lambda \varphi,  \psi) - (\varphi, \Pi^*g + \Lambda \psi)
  = (\Gn u, \Gd v) - (\Gd \varphi, \Gn v)
 \end{multline*}
since both Assumptions~1 and~2 are valid.
\qed
\end{proof}

Introduction of the second boundary operator~$\Gn$
 and Theorem~\ref{thm:GreenFormula} lead to the
 concept of weak solutions to the problem~(\ref{def:BVP})
 defined as solutions to a certain ``variational'' problem.
%
%


%
\begin{defn}\label{def:WeakSolution}
 The
 weak solution of the problem~(\ref{def:BVP})
 is an element $w_z^{f,\varphi} \in H$
 satisfying
\begin{equation}\label{eqn:WeakEquation}
  (w_z^{f,\varphi}, (\Ad -\bar z I)v) = (f,v) + (\varphi , \Gn v)
  \qquad \text{ for any} \quad v \in \Dom(\Ad)
\end{equation}
\end{defn}
%


Let us verify that this definition is consistent with the
solvability statement of Theorem~\ref{thm:BVPSolution}.
In other words, we need to show  that
 for $z\in \rho(\Ad)$ the vector~$u_z^{f,\varphi}$
 from (\ref{eqn:BVPSolution})
 solves the variational problem~(\ref{eqn:WeakEquation}).
Indeed,  for $u_z^{f,\varphi} = R_z f + S_z\varphi$ and any $v\in
\Dom(\Ad)$ we have
\[
\begin{aligned}
 (u_z^{f,\varphi}, (\Ad -\bar zI)v)  & = ( R_z f, (\Ad -\bar zI)v) + (S_z \varphi, (\Ad -\bar zI)v)
  \\
 &  = (f,v) + (\varphi , (S_{z})^{*}(\Ad -\bar zI)v
  = (f,v) + (\varphi, \Gn  v)
\end{aligned}
\]
according to~(\ref{eqn:GnADz-1}) and the claim is proved.
%

\begin{rem}\label{rem:WeakSolutions}
The notion of weak solution suggests that the applicability of
representation~(\ref{eqn:BVPSolution})
 is wider than that described in Theorem~\ref{thm:BVPSolution}.
 Firstly, rewrite the right hand side of~(\ref{eqn:WeakEquation}) as
 \begin{equation}\label{eqn:aux2}
 (f,v)_H + (\varphi, \Gn  v)_E =
 (\Ad^{-1}f, \Ad v) + (\varphi, \Pi^* \Ad v)
 =
 (\Ad^{-1}f  + \Pi \varphi, \Ad v)
 \end{equation}
Recall now that $\Ran(\Ad) =H$.
Therefore
 the concept of weak solutions can be extended to the case when
 $f$ and $\varphi$ are chosen from spaces wider than~$H$  and $E$
 as long as the
 sum~$\Ad^{-1}f  + \Pi \varphi$
 belongs to $H$.
 As an illustration consider a simple
 example when $f$ and $\varphi$
 are such that both summands
 on the left side of~(\ref{eqn:aux2})
 are finite.
 Let $H_- \supset H$ and $E_- \supset E$ be Hilbert spaces
 obtained by completion of $H$ and $E$ with respect
 to norms $\|f\|_- = \|\Ad ^{-1}f\|_H$ and $\|\varphi\|_- =
 \|\Pi \varphi\|_H$, where   $f\in H$, $\varphi \in E$,
 correspondingly.
Since both $\Ker(\Ad^{-1})$ and $\Ker(\Pi)$ are trivial, these norms
are
 non-degenerate.
 For each $v\in \Dom(\Ad)$ the usual estimates hold
 \[
 \begin{aligned}
 |(f,v)| & \leq \|\Ad^{-1} f\|\cdot \|\Ad v\| = \|f\|_-\cdot \|\Ad v\|
 \\
  |(\varphi ,\Gn v)| & = |( \varphi, \Gn \Ad^{-1} \Ad v)| = |( \Pi\varphi,  \Ad v)
  | \leq \|\Pi\varphi\|\cdot \|\Ad v\| = \|\varphi\|_-\cdot \|\Ad
  v\|
 \end{aligned}
 \]
Thus the right hand side of~(\ref{eqn:aux2}) is finite for any~$v \in
\Dom(\Ad)$ so that $\Ad^{-1} f + \Pi \varphi \in H$ as long as $f\in
 H_-$ and $\varphi\in E_-$.
It follows that
 the vector~$u_z^{f,\varphi} = R_z f + S_z\varphi$
 defined  for  $z\in \rho(\Ad)$
 by the formula~(\ref{eqn:BVPSolution})
 is the weak solution of (\ref{def:BVP}) with $f\in
 H_-$, $\varphi\in E_-$.
\end{rem}

%
%
%
%
%
%
%
%

%
Introduce the notion of M-function (M-operator) as follows.


\begin{defn}\label{def:Mz}
Operator-valued function~$M(z)$ defined on the
 do\-main~$\Dom(\Lambda)$ for $z\in\rho(\Ad)$ by the formula
 \[
  M(z) \varphi = \Gn S_z\varphi = \Gn (I - z\Ad^{-1})^{-1}\Pi
  \varphi
 \]
is called the M-function of the problem~(\ref{def:BVP}). 
\end{defn}


\begin{thm}\label{thm:PropertiesOfMz}
 \begin{enumerate}
 \item
The representation is valid
\begin{equation}\label{eqn:MFunction}
 M(z) = \Lambda + z \Pi^*(I - z\Ad^{-1})^{-1}\Pi, \qquad z\in \rho(\Ad)
\end{equation}
\item For each $\varphi \in \Dom(\Lambda)$ the vector function $M(z)\varphi$, $z \in \rho(\Ad) $ with values in $E$ 
is analytic for.
 \item
 For $z, \zeta \in \rho(\Ad)$ the operator $M(z) - M(\zeta)$ is
 bounded and
 \[
 M(z) - M(\zeta) = (z - \zeta)(S_{\bar z})^*S_\zeta
 \]
  In particular,
 $
 \I M(z) = (\I z) (S_{\bar z})^*S_z
 $
 and
  $(M(z))^* = M(\bar z)$
  where $\I M(\cdot)$ denotes the imaginary part of operator $M(\cdot)$.
 \item
  For $u_z \in \Ker(A -zI)\cap \mathscr D = \Ker (A -zI)\cap \{\;\Dom(\Ad)\dot{+} \Pi \Dom(\Lambda)\}$
 the formula holds
 \begin{equation}\label{eqn:MG0=G1}
  M(z)\Gd u_z = \Gn u_z
 \end{equation}
 \end{enumerate}
\end{thm}


\begin{proof}
(1) The claim  follows from the identities~$\Lambda = \Gn \Pi$,
$\Pi^* = \Gn \Ad^{-1}$, the elementary computation
\[
 (I - z\Ad^{-1})^{-1} =  I + z (\Ad -z I)^{-1} =
 I + z \Ad^{-1} (I - z\Ad^{-1})^{-1}, \quad
 z\in\rho(\Ad)
\]
and the definition $M(z) = \Gn (I - z\Ad^{-1})^{-1}\Pi$.
%


(2)  As the term~$z\Pi^*(I -z \Ad^{-1})^{-1}\Pi$ is a bo\-un\-ded
analytic operator-function of $z\in\rho(\Ad)$ the statement is a
consequence of the representation obtained in~(1).


(3) We have
\[
 \begin{aligned}
 M(z) - M(\zeta) & = \Pi^* \left[ z(I - z\Ad^{-1})^{-1} - \zeta(I - \zeta\Ad^{-1})^{-1}\right] \Pi
 \\
 & =
 \Pi^* (I - z\Ad^{-1})^{-1} \left[ z(I - \zeta\Ad^{-1}) - \zeta(I - z\Ad^{-1})\right] (I - \zeta\Ad^{-1})^{-1}\Pi
 \\
 &= (z - \zeta) \Pi^* (I - z\Ad^{-1})^{-1} (I - \zeta\Ad^{-1})^{-1}\Pi
= (z - \zeta) \left(S_{\bar z}\right)^* S_\zeta.
 \end{aligned}
\]
 The equality~$(M(z))^* = M(\bar z)$ is valid
due to  selfadjointness of $\Lambda$.


(4) Any vector~$u_z\in\Ker(A -zI)$ is uniquely represented in the
form $u_z = S_z\Gd u_z$.
In the case~$u_z \in \mathscr D$ either side belongs to $\Dom(\Gn)$.
Therefore, $\Gn u_z = \Gn S_z\Gd u_z = M(z)\Gd u_z$.
\qed
\end{proof}

\begin{rem}\label{rem:MM6}
Results of~\cite{DHMS2}
 suggest an alternative approach to build the
  framework described in this section.
 As an illustration of this possibility, and
 in order to explain  relationships between~\cite{DHMS2}
  and the present paper,
   let us derive the representation~(\ref{eqn:MFunction})
     for Weyl function~$M(z)$ within the scope of~\cite{DHMS2}.
 The key component here is the Example~6.6 of~\cite{DHMS2}.
 Using notations of this Example,
  substitution of $D = \Ad^{-1}$, $B = \Pi$, and $E = - \Lambda$
   yields the following form of boundary relation~$\Gamma : H\oplus H \to E \oplus E$
 \[
 \Gamma =  \left\{ \binom{f}{\Ad^{-1} f + \Pi \varphi}, \binom{\varphi}{-\Lambda\varphi - \Pi^* f} \right\}, \qquad
   f \in H, \, \varphi \in \Dom(\Lambda)
 \]
Formula~(3.6) of~\cite{DHMS2} splits~$\Gamma$ into two
boundary mappings, $\widehat{\Gd}$  and $\widehat{\Gn}$
\[
 \widehat{\Gd} = \left\{ \binom{f}{\Ad^{-1} f + \Pi \varphi}, \binom{\varphi}{0} \right\}, \qquad
 \widehat{\Gn} = \left\{ \binom{f}{\Ad^{-1} f + \Pi \varphi}, \binom{0}{ - \Lambda \varphi - \Pi^* f} \right\}
\]
where~$f\in H$, $\varphi \in \Dom(\Lambda)$.
Note that the mapping $\widehat {\Gd}$ can be extended to the subset $\{f, \Ad^{-1} f + \Pi \varphi\}$ with ~$f \in H$, $\varphi \in E$.
Comparison to expressions~(\ref{eqn:DefOfGamma0}) and (\ref{eqn:DefOfGamma1})
 for operators~$\Gd$ and~$\Gn$
  clarifies relationships between $\{\widehat{\Gd}, \widehat{\Gn}\}$ and $\{\Gd, \Gn\}$.
  More precisely, for~$f\in H$, $\varphi \in \Dom(\Lambda)$
\begin{align*}
 \widehat{\Gd} & = \left\{ \binom{f}{\Ad^{-1} f + \Pi \varphi}, \binom{\Gd \left(\Ad^{-1} f + \Pi \varphi \right)}{0} \right\}, 
\\
 \widehat{\Gn} & = \left\{ \binom{f}{\Ad^{-1} f + \Pi \varphi}, \binom{0}{ - \Gn \left(\Ad^{-1} f + \Pi \varphi \right)} \right\}
\end{align*}



Weyl family~$\widehat M(z)$ corresponding to~$\Gamma$ is the relation
\[
 \widehat M(\lambda) = \left\{
  \widehat \varphi \in E \oplus E \mid \{ {\widehat f}_\lambda, \widehat \varphi \} \in \Gamma
   \text {    for some    } {\widehat f }_\lambda = \{f, \lambda f\} \in H \oplus H
 \right\}
\]
(see Definition~3.3 of~\cite{DHMS2}).
For any element~${\widehat f }_\lambda = \{f, \lambda f\} \in H \oplus H$ the
 condition~$\{ {\widehat f}_\lambda, \widehat \varphi \} \in \Gamma $ implies ${\widehat f}_\lambda \in \Dom(\Gamma)$,
  which leads to the
 equation $  \lambda f = \Ad^{-1} f + \Pi \varphi$
  for vectors $f$ and~$\varphi$.
It follows that $f = (\lambda I - \Ad^{-1})^{-1} \Pi \varphi$ at least for $\I ( \lambda) \neq 0 $.
If this equality holds, then the relation~$\Gamma$ takes the form
\[
 \Gamma =  \left\{ \binom{f}{\lambda f}, \binom{\varphi} {-\Lambda\varphi - \Pi^* (\lambda I - \Ad^{-1})^{-1} \Pi \varphi} \right\},
 \qquad   f \in H, \, \varphi \in \Dom(\Lambda)
\]
and therefore the Weyl family is the relation defined  for~$\varphi \in \Dom(\Lambda)$ as 
\[
\widehat M(\lambda) = \left\{\varphi, -( \Lambda + \Pi^* (\lambda I - \Ad^{-1})^{-1} \Pi) \varphi \right\}
\]
Additionally,  decomposition of~$\Gamma$ into two boundary mappings~$\widehat {\Gd}$ and $\widehat {\Gn}$ yields
 \[
 \widehat{\Gd} = \left\{ \binom{f}{\lambda  f}, \binom{ \varphi}{0} \right\}, \qquad
 \widehat{\Gn} = \left\{ \binom{f}{\lambda f }, \binom{0}{ - ( \Lambda  + \Pi^*(\lambda I - \Ad^{-1})^{-1}\Pi )\varphi } \right\}
 \]
 and therefore $\widehat{\Gn} \widehat f_\lambda = \widehat M(\lambda)\widehat{\Gd} \widehat f_\lambda $
  for any~$\widehat f_\lambda = \{f, \lambda f\} \in \Dom(\Gamma)$
  (cf. (3.7) of~\cite{DHMS2} and (\ref{eqn:MG0=G1}) above).


 Finally, the relation~$\widehat M(\lambda)$ is the graph of a linear operator in $E$ (also denoted $\widehat M(\lambda)$) with the domain~$\Dom(\Lambda)$
 and   $M(z) = - \widehat M (1/z)$, $\I (z) \neq 0$ where $M(z)$ is the M-function~(\ref{eqn:MFunction}) of  boundary value problem~(\ref{def:BVP}).
 \end{rem}

%
%
%
%
%
%
%
%
%

\section{Boundary Conditions}\label{BOUNDC}

This section explores other types of boundary value problems
 for the operator~$A$ and boundary mappings~$\Gd$, $\Gn$
 introduced in Section~\ref{ABST}.
 The problems under consideration are defined in terms of
 certain linear ``boundary conditions.''
More precisely, given two linear operators $\beta_0$, $\beta_1$
 acting in the space~$E$
 we are formally looking for solutions to the equation~$(A -zI)u = f$
  satisfying condition
 $(\beta_0\Gd +\beta_1\Gn) u = \varphi$
 where $f\in H$, $\varphi \in E$, and $z\in \Complex$.
The exact meaning of this problem statement and
 the solvability theorem are the main results of this section.
Definitions and some
  properties of associated $M$\nobreakdash-functions are also briefly reviewed.


Everywhere below $\beta_0$, $\beta_1$ are two linear operators in
$E$ such that $\beta_0$ is defined on the
domain~$\Dom(\beta_0)\supset \Dom(\Lambda)$ and $\beta_1$ is defined
everywhere on~$E$ and bounded.
Consider the following spectral boundary  value problem for $w\in H$
associated
 with the set~$\{\Ad^{-1}, \Pi, \Lambda\}$ and the
  pair~$(\beta_0,\beta_1)$
\begin{equation}\label{eqn:BPVwithBeta}
 \left\{
  \begin{gathered}
    (A -zI)w  = f \\
    (\beta_0\Gd +\beta_1\Gn) w  = \varphi
  \end{gathered}
  \qquad f \in H, \varphi \in E
 \right.
\end{equation}
where $z\in\Complex$ plays the role of a spectral parameter.


The first goal in the study of~(\ref{eqn:BPVwithBeta}) is
clarification of the equality
 $(\beta_0\Gd +\beta_1\Gn) w = \varphi$.
Having this objective in mind,  observe  that the sum~$\beta_0\Gd
+\beta_1\Gn$ is defined at least
 on~$S_z\Dom(\Lambda)$ for $z\in \rho(\Ad)$ and
\begin{equation}\label{eqn:beta0G0beta1G1_1}
 (\beta_0\Gd +\beta_1\Gn)S_z\varphi =  \left(\beta_0  + \beta_1
 M(z)\right)\varphi, \qquad \varphi \in \Dom(\Lambda)
 \end{equation}
according to the properties of~$S_z$ and definition of $M(z)$.
Rewrite the right hand side
 using
 the representation $M(z) = \Lambda + z \Pi^*(I - z\Ad^{-1})^{-1}\Pi$
 in the form
\begin{equation}\label{eqn:beta0G0beta1G1_2}
 (\beta_0\Gd +\beta_1\Gn)S_z\varphi = ( \beta_0  + \beta_1
 \Lambda) \varphi + z \Pi^* (I - z\Ad^{-1})^{-1}\Pi \varphi,
\qquad \varphi \in \Dom(\Lambda)
\end{equation}
The second term on the right  is bounded
for $z\in \rho(\Ad)$,
 thus the mapping properties of the sum~$\beta_0\Gd +\beta_1\Gn$
 as an operator from~$H$ into $E$ are fully determined
 by the map~$\beta_0 + \beta_1 \Lambda$.
The following closability condition is assumed
 to be always satisfied.


\begin{assum}\label{assum:3}
 The operator $\beta_0  + \beta_1 \Lambda$
 defined on $\Dom(\Lambda)$
 is closable in $E$.
 Let
 $\mathscr B = \overline{\beta_0  + \beta_1 \Lambda}$
 be  its closure.
\end{assum}

%
%
%
%
%
%

%
%
\begin{rem}\label{rem:B0B01MClosed}
 It follows from
 (\ref{eqn:beta0G0beta1G1_1}) and (\ref{eqn:beta0G0beta1G1_2})
 that under this assumption
 all  operators~$\beta_0 + \beta_1 M(z)$  are also closable
 for $z\in
 \rho(\Ad)$ and
  the domain of their closures coincides with $\Dom(\mathscr B)$.
Equality~(\ref{eqn:beta0G0beta1G1_2})
 therefore can be extended to the set $\varphi \in \Dom(\mathscr
 B)$.
 However,
 the operator sum~$\beta_0 \Gd + \beta_1 \Gn$
 needs not be
 closed on the linear set~$\{S_z\varphi \mid \varphi \in  \Dom(\mathscr B)\}$
 and in general cannot be treated  as a sum of two separate operators,
 $\beta_0 \Gd$ and  $\beta_1 \Gn$.
\end{rem}

\begin{defn}\label{defn:HB}
Let~$\mathscr H_{\mathscr B}$ be the linear set of elements
\[
\mathscr H_\mathscr B = \left \{
  \Ad^{-1} f + \Pi \varphi  \mid f\in H, \varphi\in \Dom(\mathscr
 B)\right\}
 \]
\end{defn}

Notice that since $\Dom(\Lambda)\subseteq \Dom(\mathscr B)\subseteq
E$, the inclusions $
 \mathscr D \subseteq \mathscr H_{\mathscr B} \subseteq \Dom(A) $
 hold, where  $\mathscr D =\{ \Ad^{-1} f + \Pi \varphi \mid f \in
 H, \varphi \in \Dom(\Lambda)\}$,
 as defined in Section~\ref{ABST}.


The set~$\mathscr H_\mathscr B$ can be turned into a (closed)
Hilbert space by
 introducing a certain non-degenerate metric.
Then the map $\beta_0 \Gd + \beta_1 \Gn$
 is bounded as an operator from~$\mathscr H_\mathscr B$
 into $E$.
 More precise result is given by the following Lemma.
%


\begin{lem}\label{lem:B0G0B1G1Boundedness}
 The set $\mathscr H_{\mathscr B}$
  is a Hilbert space
  with the norm
\[
\|u\|_{{\mathscr B} } =
 \left( \|f\|_H^2 +
       \|\varphi\|_E^2 +
       \|\mathscr B\varphi\|^2\right)^{1/2}.
\]
The operator~$\beta_0 \Gd + \beta_1 \Gn : \mathscr H_\mathscr B \to E$
  is bounded.
\end{lem}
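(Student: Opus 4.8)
The plan is to verify, in turn, the two assertions of the Lemma: first that $\|\cdot\|_{\mathscr B}$ is a well-defined norm making $\mathscr H_{\mathscr B}$ complete, and then that $\beta_0\Gd + \beta_1\Gn$ is bounded from $(\mathscr H_{\mathscr B}, \|\cdot\|_{\mathscr B})$ into $E$. For the first part, note that every $u \in \mathscr H_{\mathscr B}$ has, by Assumption~\ref{assum:1}, a \emph{unique} decomposition $u = \Ad^{-1}f + \Pi\varphi$ with $f\in H$, $\varphi\in\Dom(\mathscr B)$; this is because $\Dom(\Ad)\cap\Ran(\Pi)=\{0\}$ and both $\Ad^{-1}$ and $\Pi$ are injective. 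Hence the map $u \mapsto (f,\varphi)$ is well-defined and linear, and $\|u\|_{\mathscr B}$ is exactly the graph-type norm transported from the space $H \oplus E_{\mathscr B}$, where $E_{\mathscr B}$ denotes $\Dom(\mathscr B)$ equipped with the graph norm $(\|\varphi\|_E^2 + \|\mathscr B\varphi\|^2)^{1/2}$. Since $\mathscr B$ is closed (Assumption~\ref{assum:3}), $E_{\mathscr B}$ is a Hilbert space, so $H\oplus E_{\mathscr B}$ is complete; positivity and the triangle inequality for $\|\cdot\|_{\mathscr B}$ are inherited directly.

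For completeness of $\mathscr H_{\mathscr B}$ itself, I would take a $\|\cdot\|_{\mathscr B}$-Cauchy sequence $u_n = \Ad^{-1}f_n + \Pi\varphi_n$. Then $(f_n)$ is Cauchy in $H$ and $(\varphi_n)$ is Cauchy in $E_{\mathscr B}$, so $f_n\to f$ in $H$ and $\varphi_n\to\varphi$ in $E$ with $\mathscr B\varphi_n\to\mathscr B\varphi$ (closedness of $\mathscr B$). Setting $u := \Ad^{-1}f + \Pi\varphi \in \mathscr H_{\mathscr B}$, boundedness of $\Ad^{-1}$ and $\Pi$ on $H$ and $E$ gives $u_n\to u$ in $H$, and then $\|u_n - u\|_{\mathscr B}\to 0$ by definition of the norm. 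Thus $\mathscr H_{\mathscr B}$ is a Hilbert space (with the evident inner product polarizing $\|\cdot\|_{\mathscr B}$).

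For the boundedness of $\beta_0\Gd + \beta_1\Gn$, I would argue directly from the defining formulas: for $u = \Ad^{-1}f + \Pi\varphi$ with $\varphi\in\Dom(\Lambda)$ one has $\Gd u = \varphi$ and $\Gn u = \Pi^* f + \Lambda\varphi$, so
\[
(\beta_0\Gd + \beta_1\Gn)u = \beta_0\varphi + \beta_1\Pi^* f + \beta_1\Lambda\varphi = (\beta_0 + \beta_1\Lambda)\varphi + \beta_1\Pi^* f = \mathscr B\varphi + \beta_1\Pi^* f
\]
on the dense set $\mathscr D$, and this identity extends to all of $\mathscr H_{\mathscr B}$ by the closability discussion in Remark~\ref{rem:B0B01MClosed} (the left side, suitably interpreted via the closure, equals the right side for $\varphi\in\Dom(\mathscr B)$). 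Then
\[
\|(\beta_0\Gd+\beta_1\Gn)u\|_E \le \|\mathscr B\varphi\|_E + \|\beta_1\|\,\|\Pi^*\|\,\|f\|_H \le C\bigl(\|f\|_H^2 + \|\varphi\|_E^2 + \|\mathscr B\varphi\|^2\bigr)^{1/2} = C\|u\|_{\mathscr B}
\]
with $C = \max\{1, \|\beta_1\|\,\|\Pi\|\}$ (using $\|\Pi^*\| = \|\Pi\|$), which is the claim.

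The main obstacle I anticipate is \emph{bookkeeping about what $\beta_0\Gd + \beta_1\Gn$ even means on $\mathscr H_{\mathscr B}$}: the operator $\Gn$ was defined in Definition~\ref{defn:Gn} only on $\mathscr D = \Dom(\Ad)\dot{+}\Pi\Dom(\Lambda)$, whereas $\mathscr H_{\mathscr B}$ allows $\varphi\in\Dom(\mathscr B)\supsetneq\Dom(\Lambda)$ in general. So the Lemma implicitly requires extending the operator $\beta_0\Gd + \beta_1\Gn$ from $\mathscr D$ to $\mathscr H_{\mathscr B}$ by continuity, and one must check this extension is consistent (single-valued) — which is exactly what Remark~\ref{rem:B0B01MClosed} provides via the closability of $\beta_0 + \beta_1\Lambda$ and the bounded correction term $z\Pi^*(I - z\Ad^{-1})^{-1}\Pi$. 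I would make this extension explicit at the start of the proof to avoid circularity, then the estimate above is routine.
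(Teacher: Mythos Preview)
Your proposal is correct and follows essentially the same route as the paper: completeness via the unique decomposition $u=\Ad^{-1}f+\Pi\varphi$ together with closedness of $\mathscr B$, and boundedness via the identity $(\beta_0\Gd+\beta_1\Gn)u=\beta_1\Pi^*f+\mathscr B\varphi$ on $\mathscr D$ followed by extension to $\mathscr H_{\mathscr B}$ by density. Your discussion of the extension issue is exactly the point the paper addresses (and records in the Remark following the Lemma); if anything, you are slightly more explicit than the paper about well-definedness of the norm and the constant in the final estimate.
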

%
%
%


\begin{proof}
The proof is based on the density of $\Dom(\Lambda)$ in the domain
 $\Dom(\mathscr B)$ equipped with the graph norm of
 operator~$\mathscr B$,
 which in turn implies density of $\mathscr D$
 in~$\mathscr H_\mathscr B$ in the norm~$\|\cdot\|_\mathscr B$.
%


Let $\{u_n\}_{n =1}^\infty\subset \mathscr D$ be a Cauchy sequence
in the norm of~$\mathscr H_\mathscr B$, that is~$\|u_n -
u_m\|_\mathscr B \to 0$ as $n, m \to \infty$.
Each vector $u_n$ is represented as the sum~$u_n = \Ad^{-1} f_n +
\Pi \varphi_n$ with uniquely defined $f_n \in H$, $\varphi_n \in
\Dom(\Lambda)$.
We have
\[
\|u_n - u_m\|^2_\mathscr B  = \|f_n  - f_m\|^2
 + \|\varphi_n - \varphi_m\|^2 +  \|\mathscr B(\varphi_n -
 \varphi_m)\|^2 \to 0 \text { as } n,m \to \infty
\]
The first summand here tends to zero, and therefore $f_n \to f_0 \in
H$ for some $f_0 \in H$   as $n\to\infty$.
The sum of second and third terms is the norm of $\varphi_ n -
\varphi_m$ in the
 graph norm of $\mathscr B$.
 Because operator~$\mathscr B$ defined on $\Dom(\Lambda)$ is closable,
 there exists a vector~$\varphi_0\in\Dom(\mathscr B)$
  such that $\varphi_n \to \varphi_0$ as $n\to\infty$.
The limit of the sequence~$\{u_n\}_{n = 1}^\infty$
 therefore is represented in the form~$\Ad^{-1} f_0 + \Pi \varphi_0$
  where $f_0 \in H$  and $ \varphi_0 \in \Dom(\mathscr B)$.
 Hence $ \mathscr H_\mathscr B $ is closed
 in the norm~$\|\cdot\|_{{\mathscr B}}$.


The second statement follows directly  from the norm estimate
 for elements of  $\mathscr D$.
When $f \in H$ and $\varphi \in \Dom(\Lambda)$, the sum $ u =
\Ad^{-1}f + \Pi\varphi$ belongs to the set~$\Dom(\Gd)\cap \Dom(\Gn)$
and
 \begin{align*}
  (\beta_0 \Gd + \beta_ 1\Gn) u & = \beta_1 \Gn \Ad^{-1} f
  + ( \beta_0\Gd + \beta_1 \Gn) \Pi \varphi 
\\
   & = \beta_1 \Pi^* f +
( \beta_0 +  \beta_1\Lambda) \varphi 
  = \beta_1 \Pi^* f + \mathscr B \varphi
\end{align*}
Because operator~$ \beta_1\Pi^*$ is bounded, the following estimates hold
\[
 \|(\beta_0 \Gd + \beta_ 1\Gn) u\| \leq C \|u
 \|_\mathscr B, \qquad  u= \Ad^{-1} f + \Pi \varphi, \; f \in H,\;
 \varphi \in \Dom(\Lambda).
\]
 The set~$\{\Ad^{-1} f + \Pi \varphi \mid f \in H, \varphi\in \Dom(\Lambda) \}$
 is dense in~$\mathscr H_\mathscr B$;
hence
 the operator $\beta_0
 \Gd + \beta_ 1\Gn$
 is bounded as a mapping from $\mathscr H_\mathscr B$ into $E$.
\qed
\end{proof}


\begin{rem}\label{rem:Symbol}
  The symbol~$\beta_0 \Gd + \beta_ 1\Gn$ will be used
   for the extension of operator of
   Lemma~\ref{lem:B0G0B1G1Boundedness}
   to the space $\mathscr H_\mathscr B$, although
   two terms in the sum~$(\beta_0 \Gd + \beta_ 1\Gn)u $
   need not exist separately
   for an arbitrary~$u\in \mathscr H_\mathscr B$.
\end{rem}


%
%
Taking Lemma~\ref{lem:B0G0B1G1Boundedness} into consideration, we
shall look for solutions of the problem~(\ref{eqn:BPVwithBeta}) that
belong to $\mathscr H_\mathscr B$.
%
%


\begin{thm}\label{thm:BVPwithBetSolution}
Suppose $z\in\rho(\Ad)$ is such that the closed
 operator $\overline{\beta_0 + \beta_1 M(z)}$ defined on~$\Dom(\mathscr B)$ is
 boundedly invertible in the space~$E$.
Then the problem~(\ref{eqn:BPVwithBeta}) is uniquely solvable
 and the solution~$w_z^{f,\varphi} \in \mathscr H_\mathscr B$
 is given by the formula
\begin{equation}\label{eqn:w_z}
w_z^{f,\varphi} = (\Ad - zI)^{-1} f + (I -
z\Ad^{-1})^{-1}\Pi\Psi_z^{f,\varphi}
\end{equation}
where $\Psi_z^{f,\varphi}$ is a vector from $\Dom(\mathscr B)$
\begin{equation}\label{eqn:Psi_z}
 \Psi_z^{f,\varphi} = (\overline{\beta_0 + \beta_1 M(z)})^{-1}
 (\varphi - \beta_1\Pi^*(I - z\Ad^{-1})^{-1}f )
\end{equation}
\end{thm}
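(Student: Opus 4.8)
The plan is to split the argument into three parts: verifying that the candidate $w_z^{f,\varphi}$ from (\ref{eqn:w_z})--(\ref{eqn:Psi_z}) lies in $\mathscr H_\mathscr B$ and solves the first equation in (\ref{eqn:BPVwithBeta}), checking that it satisfies the boundary condition, and proving uniqueness. First I would note that (\ref{eqn:w_z}) is exactly the solution formula (\ref{eqn:BVPSolution}) of Theorem~\ref{thm:BVPSolution} with the boundary datum $\varphi$ there replaced by the vector $\Psi_z^{f,\varphi}\in E$; since $\Psi_z^{f,\varphi}$ lies in $\Dom(\mathscr B)\subseteq E$, that theorem gives at once $(A-zI)w_z^{f,\varphi}=f$ and $\Gd w_z^{f,\varphi}=\Psi_z^{f,\varphi}$. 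Rewriting $w_z^{f,\varphi}$ through $(I-z\Ad^{-1})^{-1}=I+zR_z$ as $\Ad^{-1}g+\Pi\Psi_z^{f,\varphi}$ with $g\in H$ exhibits its $\Ran(\Pi)$-component as $\Psi_z^{f,\varphi}\in\Dom(\mathscr B)$, so $w_z^{f,\varphi}\in\mathscr H_\mathscr B$ and is a legitimate candidate.

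For the boundary condition I would write $w_z^{f,\varphi}=R_zf+S_z\Psi_z^{f,\varphi}$ and apply the bounded extension $\beta_0\Gd+\beta_1\Gn:\mathscr H_\mathscr B\to E$ of Lemma~\ref{lem:B0G0B1G1Boundedness}. On $R_zf\in\Dom(\Ad)\subseteq\mathscr D$ the extension coincides with the original map; using $\Gd R_zf=0$ (since $\Ker(\Gd)=\Dom(\Ad)$) and $\Gn\Ad^{-1}=\Pi^*$ together with $\Ad(\Ad-zI)^{-1}=(I-z\Ad^{-1})^{-1}$, this term equals $\beta_1\Pi^*(I-z\Ad^{-1})^{-1}f$ (equivalently, one reads it off (\ref{eqn:GnADz-1})). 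On $S_z\Psi_z^{f,\varphi}$ I would invoke Remark~\ref{rem:B0B01MClosed}: approximating $\Psi_z^{f,\varphi}\in\Dom(\mathscr B)$ by elements of $\Dom(\Lambda)$ in the graph norm of $\mathscr B$ and passing to the limit in (\ref{eqn:beta0G0beta1G1_1})--(\ref{eqn:beta0G0beta1G1_2}) gives $(\beta_0\Gd+\beta_1\Gn)S_z\Psi_z^{f,\varphi}=\overline{\beta_0+\beta_1 M(z)}\,\Psi_z^{f,\varphi}$. Adding the two contributions and substituting (\ref{eqn:Psi_z}) collapses the result to $\varphi$, which proves the boundary identity.

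Uniqueness I would handle by linearity: let $w\in\mathscr H_\mathscr B$ solve (\ref{eqn:BPVwithBeta}) with $f=0$, $\varphi=0$. Then $w\in\Ker(A-zI)$, so by Proposition~\ref{prop:aux1} one has $w=S_z\psi$ with $\psi=\Gd w$; comparing the representation $w=\Ad^{-1}(\cdot)+\Pi\psi$ coming from $S_z$ with the defining representation of $\mathscr H_\mathscr B$ and using the uniqueness of the splitting $\Dom(\Ad)\,\dot{+}\,\Ran(\Pi)$ forces $\psi\in\Dom(\mathscr B)$. The boundary condition together with the identity established above then gives $\overline{\beta_0+\beta_1 M(z)}\psi=(\beta_0\Gd+\beta_1\Gn)S_z\psi=0$, and the assumed bounded invertibility of $\overline{\beta_0+\beta_1 M(z)}$ yields $\psi=0$, hence $w=0$.

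The part I expect to require the most care is the treatment of the extended operator $\beta_0\Gd+\beta_1\Gn$ on $\mathscr H_\mathscr B$: the individual terms $\beta_0\Gd u$ and $\beta_1\Gn u$ need not exist separately for $u\in\mathscr H_\mathscr B$, so the boundary value of $S_z\Psi_z^{f,\varphi}$ cannot be computed termwise and must instead be obtained through the density-and-continuity passage to the limit, relying on the fact (Remark~\ref{rem:B0B01MClosed}) that (\ref{eqn:beta0G0beta1G1_2}) extends from $\Dom(\Lambda)$ to $\Dom(\mathscr B)$ with the closed operator $\overline{\beta_0+\beta_1 M(z)}$. Once that is in place, the remaining steps are routine applications of Theorem~\ref{thm:BVPSolution} and Proposition~\ref{prop:aux1}.
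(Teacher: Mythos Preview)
Your proposal is correct and follows essentially the same route as the paper: both invoke Theorem~\ref{thm:BVPSolution} for the first equation and membership in $\mathscr H_\mathscr B$, then verify the boundary identity by a density--continuity passage from $\Dom(\Lambda)$ to $\Dom(\mathscr B)$ via Lemma~\ref{lem:B0G0B1G1Boundedness} and Remark~\ref{rem:B0B01MClosed}. Your uniqueness argument, reducing a homogeneous solution to $S_z\psi$ with $\psi\in\Dom(\mathscr B)$ and then using injectivity of $\overline{\beta_0+\beta_1 M(z)}$, is in fact more explicit than the paper's, which leaves this step largely implicit.
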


%
%
%
%
%
%

\begin{rem}\label{rem:WeakSolForBetaBVP}
 According to this Theorem
 the problem~(\ref{eqn:BPVwithBeta})
is reduced to the problem~(\ref{def:BVP})
with $\varphi$ replaced by the vector~$\Psi_z^{f,\varphi}$
 defined in~(\ref{eqn:Psi_z}).
 This observation makes the concept of weak solutions applicable
 to the problem~(\ref{eqn:BPVwithBeta}),
 see Definition~\ref{def:WeakSolution} and Remark~\ref{rem:WeakSolutions}.
\end{rem}


The proof of Theorem~\ref{thm:BVPwithBetSolution}
 is given at the end of this section.

\smallbreak

%
%
%
%
%
%

In order to discuss the notion of M-operators associated with the
boundary value problem~(\ref{eqn:BPVwithBeta})
define the corresponding M-operators as follows.
The solution~$w_z^\varphi := w_z^{0,\varphi}$ is obtained in the closed
form by putting~$f =0 $ in~(\ref{eqn:BPVwithBeta}) and
(\ref{eqn:w_z}):
 \[
   w_z^\varphi =
    (I - z\Ad^{-1})^{-1}\Pi  (\overline{\beta_0 + \beta_1
    M(z)})^{-1}\varphi
    = S_z (\overline{\beta_0 + \beta_1 M(z)})^{-1}
 \varphi
 \]
Vector~$w_z^\varphi$ belongs to the domain of $\Gd$ for any $\varphi
\in E$ and
\[
 \Gd w_z^\varphi   = (\overline{\beta_0 + \beta_1 M(z)})^{-1}\varphi
\]
Hence the operator~$(\overline{\beta_0 + \beta_1 M(z)})^{-1}$
 could be termed ``$(\beta_0\beta_1) $-to-$(I,0)$
 map.''
The notation~``$(I,0)$'' reflects equalities $\beta_0 = I$,
$\beta_1 =0$ that  correspond to the condition~$\Gd w = 0$
 in~(\ref{eqn:BPVwithBeta}).
At the same time the inclusion~$w_z^\varphi \in \Dom(\Gn)$ needs not
be valid for an arbitrary $\varphi\in E$.
However, if there exists a set of~$\varphi \in E$ such that
 vectors $(\overline{\beta_0 + \beta_1 M(z)})^{-1}\varphi$
 lie in~$\Dom(\Lambda)$, similar arguments lead to the
 definition of $(\beta_0\beta_1) $-to-$(0,I)$ map
$
 M(z)(\overline{\beta_0 + \beta_1
 M(z)})^{-1}
 $
 that may be unbounded and even non-densely defined as an operator in~$E$.
%
%
%


%
%
This argumentation is easily extendable to the definition of
M-operators as $(\beta_0\beta_1)$-to-$(\alpha_0\alpha_1)$-maps,
 where $\alpha_0$, $\alpha_1$ is another pair of ``boundary operators''
 from  the boundary
 condition~$(\alpha_0\Gd + \alpha_1 \Gn) u = \psi$.
Such a map is formally given by the ``linear-fractional
transformation with operator coefficients'' $(\alpha_0 + \alpha_1
M(z) )(\overline{\beta_0 + \beta_1 M(z)})^{-1}$.
The precise meaning of this formula needs to be clarified in each
particular case at hand.
 Operator transformations of this kind (with $z$\nobreakdash-dependent
 coefficients) are typical in the systems
 theory where  M\nobreakdash-functions are realized as transfer functions
 of linear systems, see~\cite{Ryz2}, \cite{Staff}.
For the cases when  $\Ker(\overline{\beta_0 + \beta_1 M(z)}) \neq \{0\}$
 relevant results are given by
  the boundary triplets approach in~\cite{DHMS1}, \cite{DHMS3}, \cite{DM2}
   in terms of linear boundary relations in
    Hilbert and Krein spaces.

\bigskip


The section concludes with  the proof of
Theorem~\ref{thm:BVPwithBetSolution}.


\begin{proof}
As clarified in Remark~\ref{rem:B0B01MClosed} and
 Remark~\ref{rem:WeakSolForBetaBVP},
 operators
 $\beta_0 + \beta_1 M(z)$
 are closed  on~$\Dom(\mathscr B )$
 simultaneously  for all $z\in\rho(\Ad)$ and
 in accordance with
 Theorem~\ref{thm:BVPSolution} the vector~$w_z^{f,\varphi}$
  from~(\ref{eqn:w_z}),
 (\ref{eqn:Psi_z})
 is a solution to the system~(\ref{def:BVP}) with $\varphi$
 replaced by~$\Psi_z^{f,\varphi}$.
 In particular, Theorem~\ref{thm:BVPSolution}
 implies that $\Gd w_z^{f,\varphi} = \Psi_z^{f,\varphi}$
 and the
 solution~$w_z^{f,\varphi} \in \Ker(A -zI)$
  is unique.
%


Assume the vector~$\Psi_z^{f,\varphi}$ defined by~(\ref{eqn:Psi_z})
belongs to $\Dom(\Lambda)$ so that $w_z^{f,\varphi} \in
 \Dom(\Gn)$.
 Then
 \[
 \Gn w_z^{f,\varphi} = \Gn (\Ad -zI)^{-1} f + \Gn (I -
z\Ad^{-1})^{-1}\Pi\Psi_z^{f,\varphi}
 = \Pi^* (I - z\Ad^{-1})^{-1} f + M(z)\Psi_z^{f,\varphi}
 \]
Therefore
 \[
  \begin{aligned}
 (\beta_0\Gd + \beta_1\Gn)w_z^{f,\varphi}  & = (\beta_0
 + \beta_1 M(z)) \Psi_z^{f,\varphi} + \beta_1 \Pi^* (I - z\Ad^{-1})^{-1} f
 \\
 &  = \varphi - \beta_1\Pi^*(I - z\Ad^{-1})^{-1}f
  + \beta_1 \Pi^* (I - z\Ad^{-1})^{-1} f = \varphi
  \end{aligned}
 \]
Hence both equations~(\ref{eqn:BPVwithBeta}) are satisfied if
$\Psi_z^{f,\varphi} \in \Dom(\Lambda)$.
%
%

In the general case when $\Psi_z^{f,\varphi} \in \Dom(\mathscr B)$
the vector~$w_z^{f,\varphi}$ from (\ref{eqn:w_z}) belongs to
 $\mathscr H_\mathscr B$ and therefore the expression
 $(\beta_0\Gd + \beta_1\Gn)w_z^{f,\varphi}$ is well defined
  in accordance with Lemma~\ref{lem:B0G0B1G1Boundedness}.
 We need only show that it is equal to~$\varphi$, as required by
 the second equation in
 (\ref{eqn:BPVwithBeta}).
 Consider the sequence $\Psi_n \in \Dom(\Lambda) $, $n = 0 ,1,\dots $
 such that $\Psi_n \to \Psi_z^{f,\varphi}$ in the graph norm of operator~$\mathscr
 B$.
 Then vectors $w_n \in \mathscr D$ defined by~(\ref{eqn:w_z})
 with $\Psi_z^{f,\varphi}$ replaced by $\Psi_n$
 converge to $w_z^{f,\varphi}$ in the metric of~$\mathscr H_\mathscr B$
 as $n \to \infty$.
Due to the boundedness of expression~$(\beta_0\Gd + \beta_1\Gn)$ as
an operator from $ \mathscr H_\mathscr B$ to $ E$,
\begin{equation}\label{eqn:aux1}
 \lim_{n\to\infty} (\beta_0\Gd + \beta_1\Gn)w_n = (\beta_0\Gd +
 \beta_1\Gn)w_z^{f,\varphi}
\end{equation}
From the other side,
\[
 \begin{aligned}
 (\beta_0\Gd + \beta_1\Gn)w_n & =
(\beta_0\Gd + \beta_1\Gn) [(\Ad -zI)^{-1}f + (I
-z\Ad^{-1})^{-1}\Pi\Psi_n]
\\
 &= \beta_0\Psi_n + \beta_1 \Gn (\Ad -zI)^{-1} f + \beta_1
 M(z)\Psi_n
 \end{aligned}
\]
Since $\beta_0\Psi_n + \beta_1
 M(z)\Psi_n \to (\overline{\beta_0 + \beta_1 M(z)})\Psi_z^{f,\varphi}$
 as $n\to\infty$,
 we see that
\[
 \lim_{n\to\infty}(\beta_0\Gd + \beta_1\Gn)w_n =
  (\overline{\beta_0 + \beta_1 M(z)})\Psi_z^{f,\varphi} +
 \beta_1 \Gn (\Ad -zI)^{-1} f
\]
Direct substitution of~$\Psi_z^{f,\varphi}$ from~(\ref{eqn:Psi_z})
yields $ (\beta_0\Gd + \beta_1\Gn)w_n \to \varphi $ as $n\to\infty$.
In accordance with~(\ref{eqn:aux1}),  the equality $(\beta_0\Gd +
\beta_1\Gn)w_z^{f,\varphi} = \varphi$ follows.
\qed
\end{proof}


\section{Linear Operators of Boundary Value Problems}\label{EXTENT}

Let $A_{00}$ be the minimal operator
 defined as a restriction of $A$
 to the set of elements $u\in \mathscr D$
 satisfying conditions~$\Gd u = \Gn u =0 $.
This section is concerned with extensions of~$A_{00}$ to operators
 corresponding to ``boundary conditions'' of the form
$(\beta_0\Gd +\beta_1\Gn) u = 0$.
These operators are first defined via their resolvents
 given by a version of Krein's resolvent
 formula~\cite{Krein2}, \cite{LangText}.
More conventional definitions via boundary conditions
 are provided in terms of extensions of~$A_{00}$.
The groundwork for the study is laid down in
 Theorem~\ref{thm:BVPwithBetSolution}.


\begin{defn}
 Let $A_{00}$ be the restriction of~$\Ad$ to the linear set
\[
\Dom( {A_{00}}) = \Ker({\Gd})\cap \Ker ({\Gn}) = \Dom(\Ad)\cap
\Ker(\Gn),
\]
that is, $A_{00} = \left. A\right|_{\Dom({A_{00}})}$.
We call $A_{00}$ the\textit{ minimal operator}.
\end{defn}
%
%

%
%
%

The next characterization of~$\Dom(A_{00})$ is more universal
since it does not involve the map~$\Gn$.
Recall that $\Ker(A) = \Ran (\Pi)$ by definition of~$A$.

\begin{rem}\label{rem:DomA00RanA00}
The domain~$\Dom(A_{00})$ is described as follows
\[
\Dom(A_{00}) = \left\{ u \in \Dom(\Ad) \mid  \Ad u \perp
\Ran(\Pi)\right \} =  \Ad^{-1}  \left(\Ran(\Pi)^\perp\right)
\]
where $\Ran(\Pi)^\perp$ is the orthogonal complement to the range of
$\Pi$.
The range of $A_{00}$ is closed in $H$ and coincides with the
subspace~$\Ran(\Pi)^\perp = H \ominus \overline{\Ker (A) }$.
\end{rem}
%

\begin{proof}
Indeed, if $u\in \Dom(\Ad)$ then $u = \Ad^{-1} f$ with some $f\in
H$.
The condition $\Gn u =0 $ means that $ \Gn \Ad^{-1} f =0 $, or $f\in
\Ker (\Pi^*)$ (since $\Gn \Ad^{-1} = \Pi^* $), which is equivalent
to $f \perp \Ran(\Pi)$.
 The second statement holds because
$A_{00}\Ad^{-1} \Ran(\Pi)^\perp = \Ran (\Pi)^\perp = H \ominus
\overline{\Ker{A}}$.
\qed
\end{proof}
%
%


\begin{rem}
The equality $\Dom(A_{00}) = \Ad^{-1} \Ran(\Pi)^\perp$
 shows in particular that the operator~$A_{00}$ does
 not depend on any given
 choice of $\Lambda$.
Moreover, $A_{00}$ is symmetric but need not be densely defined.
The operator~$\Ad$ is a selfadjoint extension of $A_{00}$
contained in~$A$.
\end{rem}

%
%
%
%
Relations (\ref{eqn:Psi_z}) and
(\ref{eqn:w_z}) offer a rather natural way to define the resolvent
of an operator associated with the ``boundary
con\-di\-ti\-on''~$(\beta_0 \Gd + \beta_1 \Gn) u =0$.
By putting $\varphi =0$ and inserting (\ref{eqn:Psi_z}) into
(\ref{eqn:w_z}) a suitable candidate for the role of
resolvent is obtained:
 \begin{multline}\label{eqn:ResolventAbb}
   \mathscr  R_{\beta_0,\beta_1}(z)   =
\\
  (\Ad  - z I)^{-1}
 -  (I - z\Ad^{-1})^{-1} \Pi
 (\overline{\beta_0 + \beta_1 M(z)})^{-1}
 \beta_1\Pi^*(I - z\Ad^{-1})^{-1}
 \end{multline}
As will be shown, the operator function~(\ref{eqn:ResolventAbb}) is
 indeed the resolvent of some closed linear operator~$\Abb$ in~$H$ whose
 domain~$\Dom(\Abb)$ coincides with the set $\Ker (\beta_0 \Gd + \beta_1 \Gn)$.
Assuming the conditions of Theorem~\ref{thm:BVPwithBetSolution} are
satisfied, denote
\[
   Q_{\beta_0,\beta_1}(z) = - (\overline{\beta_0 + \beta_1 M(z)})^{-1}
 \beta_1
\]
The operator-function~$ Q_{\beta_0,\beta_1}(z)$ is analytic and
bounded as long as $z\in\rho(\Ad)$ satisfies conditions of
Theorem~\ref{thm:BVPwithBetSolution}.
The expression~(\ref{eqn:ResolventAbb}) for~$\mathscr R_{\beta_0,
\beta_1}(z)$ takes the form
\begin{equation}\label{eqn:ResolbventAbbShort}
 \mathscr R_{\beta_0, \beta_1}(z) = R_z + S_z Q_{\beta_0,\beta_1}(z)S_{\bar
 z}^{\,*}
\end{equation}
where $R_z = (\Ad -zI)^{-1}$ is the resolvent of~$\Ad$ and  $S_z =
 (I - z\Ad^{-1})^{-1}\Pi$ is the solution operator.
For simplicity, the
 indices in $Q_{\beta_0,\beta_1}$ will be omitted and the  notation $Q(z)$ will be used for $Q_{\beta_0,\beta_1}(z)$
 when it does not lead to confusion.
 An important analytical property of~$Q(z)$ is formulated in the next Lemma.
%
%

%

%
%
\begin{lem}\label{lem:Qz-Qzeta}
For $z,\zeta \in \rho(\Ad)$ satisfying
 assumptions of Theorem~\ref{thm:BVPwithBetSolution}
  the equality holds
\[
 Q(z) - Q(\zeta) = (z - \zeta)Q(z)S_{\bar z}^{\,*}S_\zeta Q(\zeta)
\]
\end{lem}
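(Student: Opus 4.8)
The plan is to reduce the claimed identity to the standard "second resolvent" identity for the closed operator $N(z):=\overline{\beta_0+\beta_1 M(z)}$, combined with the formula for $M(z)-M(\zeta)$ already at our disposal. Recall that, by definition, $Q(z)=-N(z)^{-1}\beta_1$, and that under the hypotheses of Theorem~\ref{thm:BVPwithBetSolution} the operators $N(z)^{-1}$ and $N(\zeta)^{-1}$ are bounded and defined on all of $E$, while, by Remark~\ref{rem:B0B01MClosed}, $\Dom(N(z))=\Dom(N(\zeta))=\Dom(\mathscr B)$. Using $Q(z)-Q(\zeta)=\bigl(N(\zeta)^{-1}-N(z)^{-1}\bigr)\beta_1$, the whole matter thus comes down to controlling $N(\zeta)^{-1}-N(z)^{-1}$.

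First I would carry out the resolvent-identity step with care for domains. For an arbitrary $\psi\in E$ one has $N(\zeta)^{-1}\psi\in\Dom(\mathscr B)=\Dom(N(z))$, so $(N(z)-N(\zeta))N(\zeta)^{-1}\psi=N(z)N(\zeta)^{-1}\psi-\psi$ is well defined, and applying the bounded operator $N(z)^{-1}$ gives $N(z)^{-1}(N(z)-N(\zeta))N(\zeta)^{-1}\psi=N(\zeta)^{-1}\psi-N(z)^{-1}\psi$. Hence, as bounded operators on $E$,
\[
 Q(z)-Q(\zeta)=\bigl(N(\zeta)^{-1}-N(z)^{-1}\bigr)\beta_1=N(z)^{-1}\bigl(N(z)-N(\zeta)\bigr)N(\zeta)^{-1}\beta_1 .
\]

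Next I would identify the operator difference $N(z)-N(\zeta)$. On $\Dom(\Lambda)$ we have $\beta_0+\beta_1 M(z)-(\beta_0+\beta_1 M(\zeta))=\beta_1\bigl(M(z)-M(\zeta)\bigr)$, and since $M(z)-M(\zeta)=(z-\zeta)(S_{\bar z})^{*}S_\zeta$ is bounded by Theorem~\ref{thm:PropertiesOfMz}(3) and $\beta_1$ is bounded and everywhere defined, Remark~\ref{rem:B0B01MClosed} yields $N(z)-N(\zeta)=\beta_1\bigl(M(z)-M(\zeta)\bigr)=(z-\zeta)\,\beta_1(S_{\bar z})^{*}S_\zeta$ as a bounded operator on $E$. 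Substituting this into the displayed formula and regrouping, using $N(z)^{-1}\beta_1=-Q(z)$ and $N(\zeta)^{-1}\beta_1=-Q(\zeta)$,
\[
 Q(z)-Q(\zeta)=(z-\zeta)\bigl(N(z)^{-1}\beta_1\bigr)(S_{\bar z})^{*}S_\zeta\bigl(N(\zeta)^{-1}\beta_1\bigr)=(z-\zeta)\bigl(-Q(z)\bigr)(S_{\bar z})^{*}S_\zeta\bigl(-Q(\zeta)\bigr)=(z-\zeta)Q(z)(S_{\bar z})^{*}S_\zeta Q(\zeta),
\]
which is the assertion.

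I expect the only genuinely delicate point to be the legitimacy of manipulating $N(z)^{-1}$: $N(z)$ is unbounded, so one must apply the operators in an order in which every intermediate vector lies in the relevant domain — this works precisely because $N(z)$ and $N(\zeta)$ share the domain $\Dom(\mathscr B)$ and are injective with bounded everywhere-defined inverses — and one must invoke Remark~\ref{rem:B0B01MClosed} for the fact that the two closures $\overline{\beta_0+\beta_1 M(z)}$ and $\overline{\beta_0+\beta_1 M(\zeta)}$ differ by the bounded operator $\beta_1\bigl(M(z)-M(\zeta)\bigr)$, rather than merely agreeing on the core $\Dom(\Lambda)$. Once these two observations are in place the remaining computation is purely algebraic.
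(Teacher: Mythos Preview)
Your proof is correct and follows essentially the same route as the paper: both arguments hinge on the resolvent-type identity $N(\zeta)^{-1}-N(z)^{-1}=N(z)^{-1}\bigl(N(z)-N(\zeta)\bigr)N(\zeta)^{-1}$ for $N(z)=\overline{\beta_0+\beta_1 M(z)}$, combined with Theorem~\ref{thm:PropertiesOfMz}(3) to write $N(z)-N(\zeta)=(z-\zeta)\beta_1 S_{\bar z}^{\,*}S_\zeta$. The paper presents the computation in the reverse direction (starting from the right-hand side) and works on $\Dom(\Lambda)$ without spelling out the domain bookkeeping, whereas you are more explicit about why each operator composition is legitimate on all of $E$; the algebraic content is the same.
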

%
%

%

%
%
\begin{proof}
 By virtue of formula~(3) from
 Theorem~\ref{thm:PropertiesOfMz}
 we have for $\varphi \in \Dom(\Lambda)$
\[
  \begin{aligned}
 (z & - \zeta)  \beta_1
   S_{\bar z}^{\,*}S_\zeta
\varphi
  =
 \beta_1 \left[M(z) - M(\zeta)\right]\varphi
 \\
 &  = (\overline{\beta_0 + \beta_1 M(z)})\varphi
    - (\overline{\beta_0 + \beta_1 M(\zeta)})\varphi
\\
 & = (\overline{\beta_0 + \beta_1 M(z)}) \left[ (\overline{\beta_0 + \beta_1 M(\zeta)})^{-1} -
 (\overline{\beta_0 + \beta_1 M(z)})^{-1} \right ]
  (\overline{\beta_0 + \beta_1 M(\zeta)})\varphi
  \end{aligned}
\]
Therefore
\[
  \begin{aligned}
(z   - \zeta)Q(z)S_{\bar z}^{\,*}S_\zeta Q(\zeta)  
& = \left[ (\overline{\beta_0 + \beta_1 M(\zeta)})^{-1} -
 (\overline{\beta_0 + \beta_1 M(z)})^{-1} \right ]\beta_1 
\\
&= Q(z) - Q(\zeta)
  \end{aligned}
\]
as stated.  \qed
\end{proof}


The main Theorem of this section reads as follows.


\begin{thm}\label{thm:Extensions}
Assume $z\in\rho(\Ad)$ is such that the closed
 operator $\overline{\beta_0 + \beta_1 M(z)}$ defined on~$\Dom(\mathscr B)$ is
 boundedly invertible in the space~$E$.
  Then the
  operator function~$ \mathscr  R_{\beta_0,\beta_1}(z)$
  defined by~(\ref{eqn:ResolventAbb})
  is the resolvent
  of a closed densely defined operator~$\Abb$ in~$H$.
 For $\Abb$ the inclusions are valid
\begin{equation}\label{eqn:Inclusions}
 A_{00}\subset\Abb \subset A, \quad
\end{equation}
The domain of~$\Abb$ satisfies
\begin{equation}\label{eqn:DomainAbb}
 \Dom(\Abb) = \{ u \in \mathscr H_\mathscr
 B \mid (\beta_0\Gd + \beta_1\Gn) u = 0 \}
 = \Ker(\beta_0\Gd + \beta_1\Gn)
\end{equation}
In addition,
\begin{equation}\label{eqn:GdResolventAbb}
 \Gd ( \Abb - zI)^{-1} =  Q(z) \Gn (\Ad - zI)^{-1}
\end{equation}
and the resolvent identity holds:
\begin{equation}\label{eqn:ResolventIdentityAdAbb}
 ( \Abb - zI)^{-1}  - (\Ad  - zI)^{-1} =
 \left[\Gn (\Ad  - \bar z I)^{-1} \right]^* \Gd ( \Abb - zI)^{-1}
\end{equation}
\end{thm}


\begin{proof}
Operator function $\mathscr R(z) = \mathscr R_{\beta_0,\beta_1}(z)$
is bounded and analytic for suitable $z\in \Complex$.
 To show that $\mathscr R(z)$
 is a resolvent, we need to check three
 conditions~\cite{Kato2}.
They are:
 1) $\Ker(\mathscr R (z)) = \{0\}$,
 2) $\Ran(\mathscr R (z) )$ is dense in $H$,
 and 3) the function $\mathscr R(z)$
 satisfies the first resolvent equation
\begin{equation}\label{eqn:ResolvenEquation}
\mathscr R(z) -\mathscr R(\zeta)
 = (z - \zeta)\mathscr R(z)
 \mathscr R(\zeta)
\end{equation}
 The equality~$\Ker(\mathscr R(z)) = \{0\}$
 follows directly from the last statement of Theorem~\ref{thm:BVPSolution}.
The same argument applied to $[\mathscr R(z)]^*$ in conjunction with
boundedness of $Q(z)$ and equality~$\Ker ([\mathscr R(z)]^*) = H
\ominus {\Ran(\mathscr R(z))}$ shows that the range of $\mathscr
R(z)$ is dense in $H$.
%


We shall verify the resolvent identity for $\mathscr R(\cdot)$
 written in  simplified notation~(\ref{eqn:ResolbventAbbShort}).
\[
 \begin{aligned}
\mathscr R(z) \mathscr R(\zeta)    &
 = \big(R_z + S_z Q(z) S_{\bar z}^{\,*}\big)\times
 \big(R_\zeta + S_\zeta Q(\zeta) S_{\bar \zeta}^{\,*}\big)
 \\
 &
  =   R_z R_\zeta +  R_z S_\zeta Q(\zeta)
 S_{\bar \zeta}^{\,*}
  + S_z Q(z)  S_{\bar z}^{\,*} R_\zeta
   + S_z Q(z) S_{\bar z}^{\,*}  S_\zeta Q(\zeta) S_{\bar \zeta}^{\,*}
 \end{aligned}
 \]
Multiplying by~$(z-\zeta)$ and noticing that~$R_z - R_\zeta =
(z-\zeta)R_z R_\zeta$ due to the resolvent identity for $\Ad$,  the identity~(\ref{eqn:ResolvenEquation})
is rewritten as
 \begin{multline*}
   S_z Q(z) S_{\bar z}^{\,*}   - S_\zeta Q(\zeta) S_{\bar \zeta}^{\,*}
   =
(z  -\zeta) \left[ R_z S_\zeta Q(\zeta) S_{\bar \zeta}^{\,*}
  + S_z Q(z) S_{\bar z}^{\,*} R_\zeta\right]
\\
     + (z-\zeta)
   S_z Q(z) S_{\bar z}^{\,*}   S_\zeta Q(\zeta) S_{\bar \zeta}^{\,*}
 \end{multline*}
By virtue of (\ref{eqn:RzMinusRzeta}), its adjoint, and
 Lemma~\ref{lem:Qz-Qzeta}
 the right hand side of this equality is
 \[
 (S_z - S_\zeta) Q(\zeta)S_{\bar \zeta}^{\,*}  +
S_z Q(z)  ( S_{\bar z}^{\,*} - S_{\bar \zeta}^{\,*})
 + S_z  (  Q(z) - Q(\zeta) ) S_{\bar \zeta}^{\,*}
 \]
which coincides with the left hand side.
The existence of a closed densely defined
operator~$A_{\beta_0,\beta_1}$ with the resolvent~$(\Abb -zI)^{-1}$
defined by (\ref{eqn:ResolventAbb}) thereby is proven.

%
%
%
%
%
%

Turning to the proof of (\ref{eqn:Inclusions}),
notice that in accordance with~(\ref{eqn:ResolbventAbbShort}) the range of
$(\Abb -zI)^{-1}$ is
 contained in $\Dom(A)$
and  since $ S_z f  \in \Ker(A -zI)$ for  $f\in H$,
 \[
  (A -zI) (\Abb -zI)^{-1}f =
  (A -zI)\left( R_z + S_z Q(z) S_{\bar z}^{\,*}\right) f =
(A -zI) R_z f = f
 \]
Hence $A g = \Abb g$ for $g \in \Dom(\Abb)$, which means $\Abb
\subset A$.
%

%
%
%
%

 To prove the
 inclusion~$\Dom(\Abb) \subset \Ker(\beta_0 \Gd + \beta_1 \Gn)$
 in (\ref{eqn:DomainAbb})
 note that, as follows from (\ref{eqn:Psi_z}) with $\varphi =0$,
  the vector~$w_z^f = (\Abb -zI)^{-1} f$ is represented as
$
(\Abb -zI)^{-1} f =
  R_z f + S_z \Psi_z^{f}
$
 for each element~$f\in H$,
 where $\Psi_z^{f} =   Q(z)S_{\bar z}^{\,*} f \in
\Dom(\mathscr B)$.
Therefore  $w_z^f \in \mathscr H_\mathscr B$  and
 $(\beta_0 \Gd + \beta_1 \Gn)w_z^f = 0$
 by
Theorem~\ref{thm:BVPwithBetSolution} with $\varphi =0$.
Hence $\Dom(\Abb)$ in included into $\Ker(\beta_0 \Gd + \beta_1
\Gn)$.
%


In order to prove the inverse inclusion first
 consider $u \in \mathscr D$ in the form
 $ u = R_z f + S_z \varphi \in \Ker(\beta_0 \Gd + \beta_1\Gn)$
 with $f \in H$ and $\varphi \in \Dom(\Lambda)$.
Then $u \in\Dom(\Gd)\cap \Dom(\Gn)$ and the operator sum~$\beta_0 \Gd + \beta_1\Gn$
 can be calculated for the element~$u$ termwise, i.~e.
$(\beta_0 \Gd + \beta_1\Gn) u   =  \beta_0 \Gd u  + \beta_1\Gn u  $,
\begin{equation}\label{eqn:Aux}
\begin{aligned}
(\beta_0 \Gd + \beta_1\Gn) u  & =
(\beta_0 \Gd + \beta_1\Gn) (R_z f + S_z \varphi)
\\
&= \beta_0\Gd S_z\varphi + \beta_1 \Gn (\Ad -zI)^{-1} f  + \beta_1\Gn S_z\varphi
\\
 & = \beta_1\Pi^* (I -z\Ad^{-1})^{-1}f + (\beta_0 + \beta_1 M(z)) \varphi
\end{aligned}
\end{equation}
If $u \in \Ker(\beta_0 \Gd + \beta_1\Gn)$, the left hand side of~(\ref{eqn:Aux}) equals zero, so that
\[
\varphi = - (\overline{\beta_0 + \beta_1 M(z)})^{-1}\beta_1\Pi^* (I -z\Ad^{-1})^{-1}f =
 Q_{\beta_0,\beta_1}(z) S_{\bar z}^* f
\]
by virtue of invertibility of $\beta_0 + \beta_1 M(z)$.
Thus, vector~$u = R_z f + S_z \varphi$ due to (\ref{eqn:ResolbventAbbShort}) is
\[
 u = (R_z + S_z  Q_{\beta_0,\beta_1}(z) S_{\bar z}^* )f = \mathscr R_{\beta_0,\beta_1}(z) f
\]
Since $\mathscr R_{\beta_0,\beta_1}(z)$ is the resolvent of$\Abb$,
we have $u \in \Dom(\Abb)$.
%

Consider now
 the general case of element~$v = R_z f + S_z\varphi \in \mathscr H_\mathscr B$, $f \in H$, $\varphi \in \Dom(\mathscr B)$
so that
 $v \notin \mathscr D $  and the operator sum~$\beta_0\Gd + \beta_1 \Gn$ calculated on~$v$
 cannot be computed termwise.
Since the set~$\mathscr D $ is dense in the Hilbert space~$\mathscr H_\mathscr B$, see Definition~\ref{defn:HB}
  and Lemma~\ref{lem:B0G0B1G1Boundedness},
  there exists a sequence~$v_n = R_z f_n + S_z \varphi_n$ with $f_n \in H$ and $\varphi_n \in \Dom(\Lambda)$
 converging to~$v$ in $\mathscr H_\mathscr B$ as $n\to \infty$.
 It  means  in particular that $\varphi_n \to \varphi $ and
 $(\overline{\beta_0 + \beta_1 M(z)}) \varphi_n \to (\overline{\beta_0 + \beta_1 M(z)}) \varphi$
 for $n \to \infty$.
Because~$\beta_0\Gd + \beta_1 \Gn$ is bounded as an operator from ~$\mathscr H_\mathscr B$ to $E$ by virtue of Lemma~\ref{lem:B0G0B1G1Boundedness},
 we have
\begin{equation}\label{eqn:Aux2}
(\beta_0\Gd + \beta_1 \Gn) v_n \to (\beta_0\Gd + \beta_1 \Gn)v, \qquad n \to \infty
\end{equation}
Expression for $(\beta_0\Gd + \beta_1 \Gn)v_n$ follows from~(\ref{eqn:Aux})
\[
 (\beta_0\Gd + \beta_1 \Gn)v_n =
\beta_1\Pi^* (I -z\Ad^{-1})^{-1}f_n + (\beta_0 + \beta_1 M(z)) \varphi_n
\]
Because of the boundedness of $\beta_1\Pi^* (I -z\Ad^{-1})^{-1}$ and closability of $\beta_0 + \beta_1 M(z)$
 this  leads to
\[
 (\beta_0\Gd + \beta_1 \Gn)v_n \to
\beta_1\Pi^* (I -z\Ad^{-1})^{-1}f + (\overline{\beta_0 + \beta_1 M(z)}) \varphi, \qquad n \to \infty
\]
Comparison with~(\ref{eqn:Aux2}) gives
\[
(\beta_0\Gd + \beta_1 \Gn)v = \beta_1\Pi^* (I -z\Ad^{-1})^{-1}f + (\overline{\beta_0 + \beta_1 M(z)}) \varphi
\]
Therefore if $v \in \Ker(\beta_0 \Gd +\beta_1 \Gn) $, then under conditions of Theorem
\[
 \varphi = - (\overline{\beta_0 + \beta_1 M(z)})^{-1}\beta_1\Pi^* (I -z\Ad^{-1})^{-1}f = Q_{\beta_0,\beta_1}(z)S_{\bar z}^* f
\]
Hence, $v = (R_z f + S_zQ_{\beta_0,\beta_1}(z)S_{\bar z}^*) f= \mathscr R_{\beta_0,\beta_1} (z) f$
and $v \in \Dom(\Abb)$.
%

%
%
%


%
%
To prove that~$A_{00}\subset \Abb$ in (\ref{eqn:Inclusions})
  we need to show that any vector~$u$
  from $\Dom(A_{00})$ belongs
  to $\Dom(\Abb) $, in other words,
  can be
  represented in the form~$ u = (\Abb - zI)^{-1} f $
  with some $f\in H$.
 Suppose $u\in \Dom(A_{00})$
 and
 let us choose $f = (A_{00} - zI)u$.
 Then $ f = (\Ad - zI)u$  because $A_{00}\subset \Ad$ 
and
  \begin{multline*}
 (\Abb - zI)^{-1} f  = (\Abb - zI)^{-1} (\Ad - zI)u
 \\
   = \left( R_z + S_z Q(z) S_{\bar z}^* \right) (\Ad - zI)u
  = u + S_z Q(z) S_{\bar z}^* (\Ad - zI)u = u
  \end{multline*}
 The last equality holds due to identities
 \[
 S_{\bar z}^* (\Ad - zI)u = \Pi^* (I - z\Ad^{-1})^{-1} (\Ad - zI)u = \Gn
 u,
 \quad u \in \mathscr D
 \]
 and $\Gn u = 0 $ for $u\in \Dom(A_{00})$.
All claims~(\ref{eqn:Inclusions}) and (\ref{eqn:DomainAbb}) are proven.


Finally, in the notation above the
formula~(\ref{eqn:GdResolventAbb})
 is equivalent to the already established relation $\Gd w_z^f = \Psi_z^f$.
 The resolvent identity~(\ref{eqn:ResolventIdentityAdAbb})
 is obtained
 from (\ref{eqn:ResolventAbb})
 by (\ref{eqn:GdResolventAbb})
 and  equality~$\Gn (\Ad -zI)^{-1} = \Pi^* (I
-z\Ad^{-1})^{-1}$.
\qed
\end{proof}


\begin{rem}
  Equalities~(\ref{eqn:ResolventAbb}) and (\ref{eqn:ResolventIdentityAdAbb})
  are correspondingly
  Krein's formula and Hilbert resolvent identity
  for $\Ad$ and $\Abb$.
\end{rem}

\begin{rem}
  Let $\widetilde\beta_0$ and $\widetilde\beta_1$
  be two linear operators with the same properties as $\beta_0$ and $\beta_1$
   in Theorem~\ref{thm:Extensions}.
 A natural question arises as to whether
 the boundary conditions~$(\widetilde\beta_0\Gd + \widetilde\beta_1\Gn)u = 0 $
  define the same operator as the con\-di\-ti\-ons~$(\beta_0\Gd + \beta_1\Gn)u = 0 $
   discussed in Theorem.
 One obvious answer is that when $\beta_0 = C \widetilde \beta_0$  and
 $\beta_1 = C \widetilde \beta_1$ with some operator $C$ such that~$\Ker(C) = \{0\}$
 then the equality~$A_{\beta_0,\beta_1} = A_{\widetilde\beta_0,\widetilde\beta_1}$ holds
 because the null sets $\Ker(\beta_0\Gd + \beta_1\Gn)$ and  $\Ker(\widetilde\beta_0\Gd + \widetilde\beta_1\Gn)$
 are equal.
 Necessary and sufficient condition follows from the formula~(\ref{eqn:ResolventAbb}).
 Namely, the identity
 $
 \Pi Q_{\beta_0,\beta_1}(z)\Pi^* =  \Pi Q_{\widetilde\beta_0,\widetilde\beta_1}(z)\Pi^*
$
for $z$ in a (non-empty) domain of the complex plane is equivalent to the identity of resolvents of
$A_{\beta_0,\beta_1}$ and $A_{\widetilde\beta_0,\widetilde\beta_1}$, 
 thus to the equality~$A_{\beta_0,\beta_1} = A_{\widetilde\beta_0,\widetilde\beta_1}$.
\end{rem}


\begin{cor}\label{cor:Abb-1}
 Assume the operator~$\mathscr B = \overline{\beta_0 + \beta_1 \Lambda}$
 is boundedly invertible in $E$.
 Then $\Abb$ is boundedly invertible in $H$,
\[
 A_{\beta_0, \beta_1}^{-1} = \Ad^{-1} - \Pi
 (\overline{\beta_0 + \beta_1 \Lambda})^{-1}\beta_1\Pi^*
 = \Ad^{-1} + \Pi Q(0)\Pi^*,
\]
and
 $Q(z)$ has the representation
\[
 Q(z) = Q(0) + z Q(0) \Pi^* (I - z A_{\beta_0, \beta_1}^{-1})^{-1}\Pi Q(0)
\]
at least in a small neighborhood of $z =0$.
\end{cor}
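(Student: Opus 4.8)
The plan is to read the statement off from Theorem~\ref{thm:Extensions} at $z=0$ and then propagate the formula for $Q(z)$ into a neighbourhood of $z=0$ by a Neumann-series computation. By part~(1) of Theorem~\ref{thm:PropertiesOfMz} one has $M(0)=\Lambda$, hence $\overline{\beta_0+\beta_1M(0)}=\overline{\beta_0+\beta_1\Lambda}=\mathscr B$; since $\Ad$ is boundedly invertible, $0\in\rho(\Ad)$, so the hypothesis of the corollary is exactly the hypothesis of Theorem~\ref{thm:Extensions} taken at $z=0$. That theorem then gives $0\in\rho(\Abb)$ and $\Abb^{-1}=\mathscr R_{\beta_0,\beta_1}(0)$. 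Setting $z=0$ in~(\ref{eqn:ResolventAbb}), where $(\Ad-0\cdot I)^{-1}=\Ad^{-1}$ and $(I-0\cdot\Ad^{-1})^{-1}=I$, and recalling that $Q(0)=-\mathscr B^{-1}\beta_1$ is a bounded operator on $E$ (because both $\mathscr B^{-1}$ and $\beta_1$ are bounded), one obtains the first displayed formula at once.

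For the representation of $Q(z)$ I would set $P(z):=\Pi^{*}(I-z\Ad^{-1})^{-1}\Pi$, a bounded operator analytic in $z\in\rho(\Ad)$. From part~(1) of Theorem~\ref{thm:PropertiesOfMz} and $M(0)=\Lambda$ one reads off $M(z)-\Lambda=zP(z)$, so on $\Dom(\Lambda)$ the operator $\beta_0+\beta_1M(z)$ equals $(\beta_0+\beta_1\Lambda)+z\beta_1P(z)$; since the second summand is bounded, taking closures (see Remark~\ref{rem:B0B01MClosed}) yields $\overline{\beta_0+\beta_1M(z)}=\mathscr B+z\beta_1P(z)=\mathscr B\,(I-zQ(0)P(z))$ on $\Dom(\mathscr B)$, the last equality because $\mathscr B Q(0)=-\beta_1$ and $Q(0)P(z)$ maps $E$ into $\Ran(\mathscr B^{-1})=\Dom(\mathscr B)$. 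As $z\mapsto P(z)$ stays bounded near $z=0$, the operator $I-zQ(0)P(z)$ is boundedly invertible for all sufficiently small $|z|$; hence so is $\overline{\beta_0+\beta_1M(z)}$, the hypotheses of Theorem~\ref{thm:Extensions} are met on a disc about $0$, $z\in\rho(\Abb)$ there, and
\[
Q(z)=-(\overline{\beta_0+\beta_1M(z)})^{-1}\beta_1=(I-zQ(0)P(z))^{-1}Q(0).
\]

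It then remains to rewrite this in the advertised form. Using $(I-zQ(0)P(z))^{-1}=I+zQ(0)P(z)(I-zQ(0)P(z))^{-1}$ together with the elementary push-through identity $P(z)(I-zQ(0)P(z))^{-1}=(I-zP(z)Q(0))^{-1}P(z)$ gives $Q(z)=Q(0)+zQ(0)\,(I-zP(z)Q(0))^{-1}P(z)\,Q(0)$, so it suffices to check that $(I-zP(z)Q(0))^{-1}P(z)=\Pi^{*}(I-z\Abb^{-1})^{-1}\Pi$. For this I would start from $\Abb^{-1}=\Ad^{-1}+\Pi Q(0)\Pi^{*}$ (just established) and factor $I-z\Abb^{-1}=(I-z\Ad^{-1})(I-zS_zQ(0)\Pi^{*})$ using $(I-z\Ad^{-1})^{-1}\Pi=S_z$; both factors being boundedly invertible for small $|z|$, one gets $\Pi^{*}(I-z\Abb^{-1})^{-1}\Pi=\Pi^{*}(I-zS_zQ(0)\Pi^{*})^{-1}S_z$, and a second push-through $\Pi^{*}(I-zS_zQ(0)\Pi^{*})^{-1}=(I-z\Pi^{*}S_zQ(0))^{-1}\Pi^{*}$ combined with $\Pi^{*}S_z=P(z)$ completes the identification, and hence the proof.

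The main point of difficulty here is not analytic but purely organizational: once everything is reduced to the bounded operators $\Ad^{-1}$, $\Abb^{-1}$, $\Pi$, $Q(0)$, $S_z$, $P(z)$, one must fix a single disc about $z=0$ on which simultaneously $z\in\rho(\Ad)$, the invertibility hypothesis of Theorem~\ref{thm:Extensions} holds, $z\in\rho(\Abb)$, and the Neumann series for $(I-zQ(0)P(z))^{-1}$ and $(I-zS_zQ(0)\Pi^{*})^{-1}$ all converge; that choice, together with the two push-through identities, is the whole of the argument.
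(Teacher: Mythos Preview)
Your argument is correct.  The first half (reading off $\Abb^{-1}$ from Theorem~\ref{thm:Extensions} at $z=0$) is exactly what the paper does.  For the representation of $Q(z)$, however, you take a genuinely different route: you factor $\overline{\beta_0+\beta_1M(z)}=\mathscr B\,(I-zQ(0)P(z))$, invert by a Neumann series, and then identify $(I-zP(z)Q(0))^{-1}P(z)$ with $\Pi^*(I-z\Abb^{-1})^{-1}\Pi$ via two push-through identities and the factorisation $I-z\Abb^{-1}=(I-z\Ad^{-1})(I-zS_zQ(0)\Pi^*)$.  The paper instead invokes two results it has already established: Lemma~\ref{lem:Qz-Qzeta} at $\zeta=0$ gives $Q(z)=Q(0)+zQ(z)S_{\bar z}^{\,*}S_0Q(0)$, and then the identity~(\ref{eqn:GdResolventAbb}) of Theorem~\ref{thm:Extensions} is used to rewrite $Q(z)S_{\bar z}^{\,*}=\Gd(\Abb-zI)^{-1}=Q(0)\Pi^*(I-z\Abb^{-1})^{-1}$, which together with $S_0=\Pi$ yields the claim in two lines.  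Your approach is more self-contained (it does not rely on Lemma~\ref{lem:Qz-Qzeta} or on~(\ref{eqn:GdResolventAbb})) and makes the Neumann-series mechanism completely explicit, at the cost of several algebraic identities; the paper's approach is shorter because it cashes in work already done.
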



\begin{proof}
Noting that
 $Q(0) = - (\overline{\beta_0 + \beta_1 \Lambda})^{-1}\beta_1$ is
 bounded,
 invertibility of~$\Abb$ and the formula for~$A_{\beta_0, \beta_1}^{-1}$
 follow directly from~(\ref{eqn:ResolventAbb}) or
(\ref{eqn:ResolbventAbbShort}).
  Existence of $Q(z) = - (\overline{\beta_0 + \beta_1 M(z)})^{-1}\beta_1$
  for small $|z|$ results from
  analyticity and invertibility of $\overline{\beta_0 + \beta_1 M(z)}$ at $z =0$.
Lemma~\ref{lem:Qz-Qzeta} with $\zeta =0 $ yields
\begin{equation}\label{eqn:Qz=Q0+}
 Q(z) = Q(0) + z Q(z) S_{\bar z}^{\,*} S_0 Q(0)
\end{equation}
Observe now that $Q(z) S_{\bar z}^{\,*} = Q(z) \Gn (\Ad - zI)^{-1}$,
thus according to~(\ref{eqn:GdResolventAbb}),
\[
Q(z) S_{\bar z}^{\,*}  = \Gd (\Abb - zI)^{-1} = \Gd
 A_{\beta_0, \beta_1}^{-1} ( I - zA_{\beta_0, \beta_1}^{-1})^{-1}
\]
 Formula~(\ref{eqn:GdResolventAbb}) for $z =0$  gives $ \Gd
 A_{\beta_0, \beta_1}^{-1} = Q(0) \Gn \Ad^{-1} = Q(0)\Pi^*$
 so that
\[
 Q(z)S_{\bar z}^{\,*} = Q(0) \Pi^* ( I - zA_{\beta_0,
 \beta_1}^{-1})^{-1}
 \]
In combination with $S_0 Q(0) = \Pi Q(0)$
 the expression~(\ref{eqn:Qz=Q0+}) yields
 the required representation for~$Q(z)$.
\qed
\end{proof}

\begin{cor}\label{cor:AbbSelfadjoint}
  Assume conditions of
   Corollary~\ref{cor:Abb-1} are satisfied, operators~$\beta_0$, $
   \beta_1$ and $\Lambda$ are bounded, and $\beta_0 \beta_1^*$
   is selfadjoint.
 Then $\Abb$ is selfadjoint.
\end{cor}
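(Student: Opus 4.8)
The plan is to deduce selfadjointness of $\Abb$ from selfadjointness of its bounded inverse, which Corollary~\ref{cor:Abb-1} supplies in closed form. First I would observe that since $\beta_0,\beta_1,\Lambda$ are all bounded and $\Lambda=\Lambda^*$ (so $\Dom(\Lambda)=E$), the operator $\beta_0+\beta_1\Lambda$ is bounded and everywhere defined, hence already closed; thus $\mathscr B=\beta_0+\beta_1\Lambda$ and $Q(0)=-(\beta_0+\beta_1\Lambda)^{-1}\beta_1$ is a bounded operator on $E$. By Corollary~\ref{cor:Abb-1},
\[
\Abb^{-1}=\Ad^{-1}+\Pi\,Q(0)\,\Pi^{*},
\]
a bounded, everywhere defined, boundedly invertible operator on $H$, with $\Ad^{-1}$ bounded selfadjoint by Assumption~\ref{assum:1}.

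The key step is to show $Q(0)$ is selfadjoint. Since everything is bounded, $Q(0)^{*}=-\beta_1^{*}\big((\beta_0+\beta_1\Lambda)^{*}\big)^{-1}=-\beta_1^{*}(\beta_0^{*}+\Lambda\beta_1^{*})^{-1}$, using $\Lambda^{*}=\Lambda$. So $Q(0)=Q(0)^{*}$ is equivalent to $(\beta_0+\beta_1\Lambda)^{-1}\beta_1=\beta_1^{*}(\beta_0^{*}+\Lambda\beta_1^{*})^{-1}$, and multiplying on the left by $\beta_0+\beta_1\Lambda$ and on the right by $\beta_0^{*}+\Lambda\beta_1^{*}$ reduces this, after the terms $\beta_1\Lambda\beta_1^{*}$ cancel, to $\beta_1\beta_0^{*}=\beta_0\beta_1^{*}$ — which is exactly the hypothesis that $\beta_0\beta_1^{*}$ is selfadjoint. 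Hence $Q(0)^{*}=Q(0)$, and since $\Pi:E\to H$ is bounded and everywhere defined ($\Pi^{**}=\Pi$) we get $(\Pi Q(0)\Pi^{*})^{*}=\Pi Q(0)^{*}\Pi^{*}=\Pi Q(0)\Pi^{*}$. Therefore $\Abb^{-1}$ is bounded and selfadjoint.

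To finish, I would invoke the standard fact that an injective bounded selfadjoint operator has a (generally unbounded) selfadjoint inverse defined on the dense domain equal to its range: here $\Abb^{-1}$ is injective because it is boundedly invertible, so its inverse $\Abb$, with $\Dom(\Abb)=\Ran(\Abb^{-1})$, is selfadjoint. There is no serious obstacle in this argument; the only points requiring care are the adjoint computation for $Q(0)$ (keeping track of which closures are trivial once all operators are bounded) and the passage from ``$\Abb^{-1}$ bounded selfadjoint'' to ``$\Abb$ selfadjoint.''
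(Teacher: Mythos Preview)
Your proof is correct and follows essentially the same route as the paper: both use Corollary~\ref{cor:Abb-1} to write $\Abb^{-1}=\Ad^{-1}-\Pi(\beta_0+\beta_1\Lambda)^{-1}\beta_1\Pi^*$, then verify that $(\beta_0+\beta_1\Lambda)^{-1}\beta_1=\beta_1^*(\beta_0^*+\Lambda\beta_1^*)^{-1}$ by clearing denominators and cancelling the $\beta_1\Lambda\beta_1^*$ terms, reducing to the hypothesis $\beta_0\beta_1^*=\beta_1\beta_0^*$. The paper's proof is just a more compressed version of your computation, writing the difference $\Abb^{-1}-(\Abb^{-1})^*$ in factored form and reading off the vanishing directly.
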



\begin{proof}

 Since $A_{\beta_0, \beta_1}^{-1} -  (A_{\beta_0, \beta_1}^{-1})^* =
  \Pi \left[(\beta_0 + \beta_1\Lambda )^{-1} \beta_1 - \beta_1^*
  (\beta_0^* + \Lambda \beta_1^* )^{-1}\right]\Pi^*
 $
\[
   = \Pi (\beta_0 + \beta_1\Lambda )^{-1}
  \left[
    \beta_1 (\beta_0^* + \Lambda \beta_1^* ) -
    (\beta_0 + \beta_1\Lambda )\beta_1^*
  \right]
  (\beta_0^* + \Lambda \beta_1^* )^{-1} \Pi^* = 0
\]
 under assumption~$\beta_1 \beta_0^* = \beta_0 \beta_1^*$,
  the operator~$\Abb$ is an (unbounded) inverse of the bounded selfadjoint
  operator.
\qed
\end{proof}


A special case of operator~$\Abb$ in Theorem~\ref{thm:Extensions}
 with $\beta_0 = 0 $,
 $\beta_1 = I$
 is of particular interest.
It can be seen as an abstract analogue  of the Laplacian
 with Neumann boundary condition from
 Section~\ref{DNMAP}.
 Note that in this case
   $Q(z) = - (M(z))^{-1}$ and $Q(0) = -\Lambda^{-1}$.


\begin{cor}\label{cor:aux1}
 Suppose $\Lambda$ is boundedly invertible.
  Then operator~$\An$ defined as a restriction of~$A$
  to the set $\Dom(A_1) = \{ u \in \mathscr D\mid \Gn u =0\}$
  is selfadjoint and boundedly invertible.
   For $ z \in\rho(\Ad)\cap \rho(\An)$
\begin{equation}\label{eqn:cor55}
 \begin{aligned}
  (\An -zI)^{-1} &= (\Ad -zI)^{-1} - (I - z\Ad^{-1})^{-1}\Pi
   (M(z))^{-1} \Pi^*(I - z\Ad^{-1})^{-1}
  \\
  \text{ where }\,\, (M(z))^{-1} & = \Lambda^{-1} - z \Lambda^{-1} \Pi^* (I - z
  \An^{-1})^{-1} \Pi \Lambda^{-1}, \quad
    z \in \rho(\An).
\end{aligned}
\end{equation}
Moreover, for $ z \in\rho(\Ad)\cap \rho(\An)$
\begin{equation}\label{eqn:cor55bis}
  (\An -zI)^{-1} = (\Ad -zI)^{-1} - (I - z\An^{-1})^{-1} \pi M(z)\pi^* (I - z\An^{-1})^{-1},
\end{equation}
where $\pi = (\Gd \An^{-1})^*$ is bounded with $\Ran(\pi^*) \subset \Dom(\Lambda)$.

 In particular,
  $\An^{-1} =  \Ad^{-1} - \Pi \Lambda^{-1} \Pi^* = \Ad^{-1} - \pi \Lambda \pi^*$ where 
both  $ \Pi \Lambda^{-1} \Pi^* $ and $ \pi \Lambda \pi^*$ are bounded operators.
\end{cor}


\begin{proof}
The first equality~(\ref{eqn:cor55})
 follows directly from~(\ref{eqn:ResolventAbb})
 and Theorem~\ref{thm:Extensions}.
 Selfadjointness of~$\An$ is a consequence of representation of $\An^{-1}$
 as a sum of two bounded selfadjoint operators.
Because $\An^{-1}$ is bounded,  the analytic operator function
 $(M(z))^{-1}$  from (\ref{eqn:cor55})
  can be analytically continued from a neighborhood
 of the origin~$z =0$ to all $z \in \rho(\An)$.
The alternative representation~(\ref{eqn:cor55bis})
 is obtained from~(\ref{eqn:cor55}) with the help of
 equalities~(\ref{eqn:GdResolventAbb}) and~(\ref{eqn:ResolventIdentityAdAbb}).
Boundedness of the operator function~$\pi M(z)\pi^*$,
$z\in\rho(\Ad)$, or equivalently of the operator~$\pi\Lambda\pi^*$,
  is ensured by the calculations
 \[
   \pi^* = \Gd \An^{-1} = \Gd (\Ad^{-1} - \Pi \Lambda^{-1}\Pi^*)
    = -\Gd \Pi \Lambda^{-1}\Pi^* = -\Lambda^{-1} \Pi^*,
 \]
so that $\Lambda \pi^* = - \Pi^*$.
This equality also follows from~(\ref{eqn:GdResolventAbb})
 with $z =0$.
\qed
\end{proof}


There exists a close relationship between analytical properties of
the
 ope\-ra\-tor-func\-ti\-on~$Q_{\beta_0,\beta_1}(z)$
 and spectral characteristics
 of $\Abb$.
For example, papers~\cite{DM1}, \cite{DM2}  report some
 general results obtained within the boundary triplet based framework
 when ~$\beta_1 =I$, $\beta_0$ is closed, $\Ran(\Gd) = \Ran(\Gn) = E$,
 and therefore~$M(z)$ is bounded.
The next theorem regarding the point spectrum
 of~$\Abb$ renders similar results in the paper's setting.
Much more complicated
 relationships between spectral properties of nonselfadjoint operators and
  their M\nobreakdash-funct\-ions are discussed in~\cite{BHMNW}, \cite{BMNW}.
%


%
%
\begin{thm}\label{thm:discreteSpectrum}
 Assume the operator~$\mathscr B = \overline{\beta_0 + \beta_1 \Lambda}$
 is boundedly invertible.
Then for any $z\in \rho(\Ad)$ the mapping~$\varphi \mapsto S_z
 \varphi$ establishes a one-to-one correspondence between
 $\{\varphi \in \Dom(\mathscr B) \mid (\overline{\beta_0 +
\beta_1 M(z)})\varphi = 0\}$ and $\Ker(\Abb - zI)$.
In particular,
 $\Ker(\overline{\beta_0 + \beta_1 M(z)}) = \{0\}$ is equivalent to
 $\Ker(\Abb -zI) = \{0\}$
 for $z\in \rho(\Ad)$.
\end{thm}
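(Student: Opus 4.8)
The plan is to exhibit $\varphi \mapsto S_z\varphi$ as the claimed bijection by constructing its inverse explicitly and checking it maps each null space into the other. First I would show that if $\varphi \in \Dom(\mathscr B)$ satisfies $(\overline{\beta_0 + \beta_1 M(z)})\varphi = 0$, then $u := S_z\varphi$ lies in $\Ker(\Abb - zI)$. Since $\Abb \subset A$ by Theorem~\ref{thm:Extensions}, and $S_z\varphi \in \Ker(A - zI)$ by Proposition~\ref{prop:aux1}, it suffices to verify $u \in \Dom(\Abb)$. For this I would use the characterization $\Dom(\Abb) \subset \Ker(\beta_0\Gd + \beta_1\Gn)$ from~(\ref{eqn:Inclisions}); more precisely, I would show $u \in \mathscr H_\mathscr B$ (which holds since $\varphi \in \Dom(\mathscr B)$, so $u = S_z\varphi = (I - z\Ad^{-1})^{-1}\Pi\varphi$ is of the required form $\Ad^{-1}(z\Pi\varphi) + \Pi\varphi$) and compute $(\beta_0\Gd + \beta_1\Gn)u$. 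Using Lemma~\ref{lem:B0G0B1G1Boundedness} together with the extension of~(\ref{eqn:beta0G0beta1G1_2}) to $\Dom(\mathscr B)$ noted in Remark~\ref{rem:B0B01MClosed}, one gets $(\beta_0\Gd + \beta_1\Gn)S_z\varphi = (\overline{\beta_0 + \beta_1 M(z)})\varphi = 0$. Then $u$ solves $(A - zI)u = 0$ with homogeneous boundary condition; comparing with the resolvent expression~(\ref{eqn:ResolbventAbbShort}) applied to $f = (\Abb - zI)u$, or invoking uniqueness in Theorem~\ref{thm:BVPwithBetSolution} with $f = \varphi_{\text{data}} = 0$, forces $u \in \Dom(\Abb)$ and $(\Abb - zI)u = 0$.

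Conversely, given $u \in \Ker(\Abb - zI)$, I would show $\varphi := \Gd u$ lies in $\Dom(\mathscr B)$, satisfies $(\overline{\beta_0 + \beta_1 M(z)})\varphi = 0$, and $S_z\varphi = u$. Since $\Abb \subset A$, we have $u \in \Ker(A - zI)$, so by Proposition~\ref{prop:aux1} $u = S_z\Gd u = S_z\varphi$ automatically — this gives both $S_z\varphi = u$ and injectivity of $S_z$ on $\Dom(\mathscr B)$ for free (as $\Gd S_z = I$). That $\varphi \in \Dom(\mathscr B)$ follows because $u \in \Dom(\Abb) \subset \mathscr H_\mathscr B$ and $\Gd$ maps $\mathscr H_\mathscr B$ into $\Dom(\mathscr B)$ by the definition of $\mathscr H_\mathscr B$. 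Finally, $u \in \Dom(\Abb) \subset \Ker(\beta_0\Gd + \beta_1\Gn)$, and evaluating this condition on $u = S_z\varphi$ via the same extended formula~(\ref{eqn:beta0G0beta1G1_2}) gives $(\overline{\beta_0 + \beta_1 M(z)})\varphi = 0$.

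The two directions together establish the bijection. The final assertion is the special case $\varphi = 0$: $\Ker(\overline{\beta_0 + \beta_1 M(z)}) = \{0\}$ says the left-hand set is trivial, which under the bijection is equivalent to $\Ker(\Abb - zI) = \{0\}$.

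The main obstacle I anticipate is the careful bookkeeping around domains: one must ensure that $(\beta_0\Gd + \beta_1\Gn)$ applied to $S_z\varphi$ for $\varphi$ only in $\Dom(\mathscr B)$ (not necessarily $\Dom(\Lambda)$) genuinely equals $(\overline{\beta_0 + \beta_1 M(z)})\varphi$, which requires a density/closure argument exactly of the type carried out in the proof of Theorem~\ref{thm:BVPwithBetSolution} — approximating $\varphi$ by a sequence in $\Dom(\Lambda)$ in the graph norm of $\mathscr B$, using continuity of $\beta_0\Gd + \beta_1\Gn$ on $\mathscr H_\mathscr B$ from Lemma~\ref{lem:B0G0B1G1Boundedness}, and passing to the limit. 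Once that identification is in hand, the rest is a direct consequence of Proposition~\ref{prop:aux1} and the inclusions~(\ref{eqn:Inclisions}).
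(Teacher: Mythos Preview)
Your converse direction is sound and matches the paper's strategy. The forward direction, however, has a genuine gap at the step ``forces $u \in \Dom(\Abb)$''. The inclusion~(\ref{eqn:Inclisions}) only asserts $\Dom(\Abb) \subset \Ker(\beta_0\Gd + \beta_1\Gn)$, so verifying that $u = S_z\varphi$ lies in $\mathscr H_{\mathscr B}$ and satisfies the homogeneous boundary condition does not by itself place $u$ in $\Dom(\Abb)$. Your two proposed remedies both fail: applying the resolvent expression to $f = (\Abb - zI)u$ presupposes $u \in \Dom(\Abb)$, which is circular; and Theorem~\ref{thm:BVPwithBetSolution} requires $\overline{\beta_0 + \beta_1 M(z)}$ to be boundedly invertible at the particular $z$, whereas here you are working precisely at a $z$ where it has nontrivial kernel.

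The paper closes this gap by exploiting the hypothesis that $\mathscr B$ is boundedly invertible, which via Corollary~\ref{cor:Abb-1} gives the explicit bounded inverse $A_{\beta_0,\beta_1}^{-1} = \Ad^{-1} + \Pi Q(0)\Pi^*$ with $Q(0) = -\mathscr B^{-1}\beta_1$. From $(\overline{\beta_0 + \beta_1 M(z)})\varphi = 0$ one extracts $\varphi = z Q(0)\Pi^* u$, and then a direct computation yields $u = z A_{\beta_0,\beta_1}^{-1} u$; this equality exhibits $u$ as lying in $\Ran(A_{\beta_0,\beta_1}^{-1}) = \Dom(\Abb)$ and simultaneously gives $(\Abb - zI)u = 0$. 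Your route can in fact be salvaged along similar lines: under the invertibility of $\mathscr B$ the inclusion in~(\ref{eqn:Inclisions}) is actually an equality (if $u = \Ad^{-1}f + \Pi\psi \in \mathscr H_{\mathscr B}$ satisfies $\mathscr B\psi + \beta_1\Pi^* f = 0$, then $\psi = Q(0)\Pi^* f$ and $u = A_{\beta_0,\beta_1}^{-1}f$), but you would need to prove this rather than invoke Theorem~\ref{thm:BVPwithBetSolution}.
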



\begin{proof}
 We start with the observation
 that under the Theorem's assumptions
 the operator $Q(0) = - \mathscr B^{-1}\beta_1$
 is bounded.
Hence,
 according to Corollary~\ref{cor:Abb-1},
 $A_{\beta_0, \beta_1}$ is boundedly invertible and
 $ A_{\beta_0, \beta_1}^{-1} = \Ad^{-1} + \Pi Q(0)\Pi^* $.
%


Assume that $(\overline{\beta_0 + \beta_1 M(z)})\varphi  = 0 $ for
some
 $z\in \rho(\Ad)$
 and  $\varphi \in \Dom(\mathscr B)$.
 Let $u  = S_z \varphi $ be the corresponding solution to the
 equation~$(A -zI) u  =0$ satisfying condition $\Gd u = \varphi$.
Then
\[
 0 =
(\overline{\beta_0 + \beta_1 M(z)})\varphi =
  (\overline{\beta_0 + \beta_1 \Lambda})\varphi  + z \beta_1\Pi^*(I -
 z\Ad^{-1})^{-1}\Pi\varphi
\]
and therefore $\varphi$ can be expressed in terms of $u = (I -
 z\Ad^{-1})^{-1}\Pi\varphi$ as follows
\[
 \varphi = - z (\overline{\beta_0 + \beta_1 \Lambda})^{-1}
\beta_1\Pi^*(I -
 z\Ad^{-1})^{-1}\Pi\varphi =
  -z {\mathscr B}^{-1}\beta_1 \Pi^*S_z \varphi
 = z Q(0)\Pi^* u
\]
Owing to identity~$(I - z\Ad^{-1})^{-1} = I + z\Ad^{-1} (I -
z\Ad^{-1})^{-1}$ we  obtain
 \begin{multline*}
 u   = (I - z\Ad^{-1})^{-1}\Pi\varphi
  =  \Pi\varphi + z\Ad^{-1} (I - z\Ad^{-1})^{-1}\Pi  \varphi
  = \Pi \varphi  + z \Ad^{-1}S_z \varphi
\\
 = z \Pi Q(0)\Pi^* u + z \Ad^{-1} u = z (\Ad^{-1} + \Pi Q(0)\Pi^*) u
 = z A_{\beta_0,\beta_1}^{-1} u
 \end{multline*}
 It means inclusion  $u \in \Dom(\Abb)$.
It follows that  $( \Abb  - zI)u  = (A -zI) u=0$
since $\Abb \subset A$ .

%


Suppose now  that $u \in \Ker(\Abb -zI)$ and denote $\varphi =\Gd u
\in E$.
Then~$u$ has the form $u = (I -z\Ad^{-1})^{-1}\Pi \varphi$ because
$\Abb \subset A$
 and therefore $u \in\Ker (A -zI)$.
We need to show that $\varphi$ belongs to
 the domain of $\mathscr B = \overline{\beta_0 + \beta_1 \Lambda}$
 and $\mathscr B \varphi  = - z \beta_1\Pi^* u$.
The  equality $(\Abb -zI) u =0$ implies $(I - z
A_{\beta_0,\beta_1}^{-1})u =0$.
 Hence $ u =  z
A_{\beta_0,\beta_1}^{-1}u =
 z \left(\Ad^{-1} + \Pi Q(0) \Pi^*\right)u$.
 Application of $\Gd$ to both sides  yields
 $\varphi = \Gd u = z Q(0)\Pi^* u $.
 Recall now that $Q(0)   =  - (\overline{\beta_0 + \beta_1 \Lambda})^{-1} \beta_1  =  - \mathscr B^{-1} \beta_1$
 and the required identity $\mathscr B \varphi = - z \beta_1 \Pi^* u$
 follows.
\qed
\end{proof}

%
%
%
%
%
%
%
%
%

The rest of this section is devoted to the special case of
operators~$\Abb$ inspired by the Birman-Krein-Vishik theory of
extensions of positive symmetric operators~\cite{Birman}, \cite{Krein2}, \cite{Vi}.
 Only a simplified version of this theory is considered assuming
 that the extension parameter (operator~$B$ below)
 is densely defined and boundedly invertible
 in the space~$\overline {\Ran(\Pi)}$.
For the general case of the Birman-Krein-Vishik theory  the
  reader is referred,
  apart from the original publications cited above,
  to the work~\cite{Grubb} for the exhaustive treatment
  and to the paper~\cite{AS} for an overview.


Denote $\mathcal H := \overline{\Ran(\Pi)} = \overline{\Ker(A)}$.
Recall that
 according to
Remark~\ref{rem:DomA00RanA00}
 the orthogonal complement of~$\mathcal H$ is the
 subspace
 $\mathcal H^\perp = H \ominus \overline{\Ker(A)} =  \Ran (A_{00})$.
 Let $B$ be a closed
densely defined operator in $\mathcal H$ such that $\Dom(B)\supset
\Pi\Dom(\Lambda)$.
Consider the restriction~$L_B$ of $A$ to the set
\[
\Dom (L_B) = \left\{ \Ad^{-1} ( f_\perp + B h) + h \mid f_\perp \in
\mathcal H^\perp, h \in  \Pi \Dom(\Lambda) \right\}
\]
Since $L_B\subset A$ by definition, we have
\begin{equation}\label{eqn:defLB}
L_B :  \Ad^{-1} ( f_\perp + B h) + h \mapsto f_\perp + B h, \quad
f_\perp \in \mathcal H^\perp,\, h \in \Pi\Dom(\Lambda)
\end{equation}
Clearly  $A_{00}\subset L_B $ because $\Dom(A_{00}) = \Ad^{-1}
 \mathcal H^\perp\subset \Dom(L_B)$.
We would like to show that~$L_B$ is closed and $L_B = \Abb$ for some
$\beta_0$, $\beta_1$.
To simplify the matter, additional conditions of the
  boundedness and invertibility of~$\Pi^* B\Pi$ are imposed in the following Theorem.
%


\begin{thm}\label{thm:LB_and_Abb}
Suppose
 the set
$ B\Pi \Dom(\Lambda)$ is dense in~$\mathcal H $ and the operator
$\Pi^* B\Pi$ is bounded and boundedly invertible in $E$.
Then  the inverse $L_B^{-1}$ exists and
\begin{equation}\label{eqn:aux15}
   L_B^{-1} = \Ad^{-1} + \Pi (\Pi^* B \Pi)^{-1}\Pi^*
\end{equation}
Moreover, $L_B =  A_{\beta_0, \beta_1} $ with
 $\beta_1 = - I_E$ and $\beta_0 = \Lambda +  \Pi^* B \Pi$.
In particular, if  the function
\[
M_B(z) = \Lambda_B + z\Pi^*(I - z\Ad^{-1})^{-1}\Pi,
\qquad \text {with }\quad \Lambda_B = -\Pi^* B \Pi
\]
is boundedly invertible for some~$z\in\rho(\Ad)$, then
$z\in\rho(L_B)$ and
\[
 (L_B - zI)^{-1} = (\Ad - zI)^{-1} -   (I - z\Ad^{-1})^{-1} \Pi
 M_B^{-1}(z) \Pi^* (I - z\Ad^{-1})^{-1}
\]
\end{thm}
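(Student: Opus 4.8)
The plan is to identify $L_B$ with one of the operators $\Abb$ already constructed in Theorem~\ref{thm:Extensions} by guessing the correct boundary parameters $(\beta_0,\beta_1)$ and then checking that the domains agree. First I would observe that a generic element of $\Dom(L_B)$ has the form $u = \Ad^{-1}(f_\perp + Bh) + h$ with $h = \Pi\varphi$, $\varphi \in \Dom(\Lambda)$, so $u = \Ad^{-1}f_\perp + \Ad^{-1}B\Pi\varphi + \Pi\varphi$. Since $f_\perp \in \mathcal H^\perp = \Ran(\Pi)^\perp = \Ker(\Pi^*)$, applying the boundary maps gives $\Gd u = \varphi$ (because $\Ad^{-1}(f_\perp+B\Pi\varphi)\in\Dom(\Ad)=\Ker(\Gd)$) and $\Gn u = \Pi^*(f_\perp + B\Pi\varphi) + \Lambda\varphi = \Pi^*B\Pi\varphi + \Lambda\varphi$ (using $\Pi^*f_\perp = 0$ and $\Gn\Ad^{-1} = \Pi^*$). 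Hence on $\Dom(L_B)$ the combination $(\Lambda + \Pi^*B\Pi)\Gd u - \Gn u = 0$, which shows $L_B$ satisfies the boundary condition with $\beta_0 = \Lambda + \Pi^*B\Pi$ and $\beta_1 = -I_E$; note $\beta_0 + \beta_1\Lambda = \Pi^*B\Pi$, so $\mathscr B = \overline{\Pi^*B\Pi} = \Pi^*B\Pi$ is bounded and, by hypothesis, boundedly invertible, so Corollary~\ref{cor:Abb-1} applies and $\Abb$ is boundedly invertible with $\Abb^{-1} = \Ad^{-1} + \Pi Q(0)\Pi^*$ where $Q(0) = -\mathscr B^{-1}\beta_1 = (\Pi^*B\Pi)^{-1}$. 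This gives exactly formula~(\ref{eqn:aux15}), \emph{provided} $L_B = \Abb$.

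The core of the proof is therefore the equality $L_B = \Abb$, and I would establish it by proving $L_B^{-1} = \Abb^{-1}$ as bounded operators, which forces equality of the (closed) operators. One inclusion is already in hand: every $u\in\Dom(L_B)$ lies in $\mathscr H_\mathscr B$ and satisfies $(\beta_0\Gd+\beta_1\Gn)u=0$, hence $\Dom(L_B)\subseteq\Ker(\beta_0\Gd+\beta_1\Gn)$; but this does not immediately give $\Dom(L_B)\subseteq\Dom(\Abb)$ since Theorem~\ref{thm:Extensions} only asserts $\Dom(\Abb)\subseteq\Ker(\beta_0\Gd+\beta_1\Gn)$, not equality. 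So instead I would argue directly with inverses. For $u\in\Dom(L_B)$ written as above, one computes $L_B u = f_\perp + B\Pi\varphi$, and I would verify that $\Abb^{-1}(f_\perp + B\Pi\varphi) = u$ by plugging into $\Abb^{-1} = \Ad^{-1} + \Pi(\Pi^*B\Pi)^{-1}\Pi^*$: the term $\Ad^{-1}(f_\perp + B\Pi\varphi)$ reproduces the first summand of $u$, and $\Pi(\Pi^*B\Pi)^{-1}\Pi^*(f_\perp+B\Pi\varphi) = \Pi(\Pi^*B\Pi)^{-1}\Pi^*B\Pi\varphi = \Pi\varphi = h$ using $\Pi^*f_\perp=0$, which reproduces the second summand. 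This shows $L_B\subseteq\Abb$ and that $\Abb^{-1}$ restricted to $\Ran(L_B) = \mathcal H^\perp + B\Pi\Dom(\Lambda)$ equals $L_B^{-1}$. Since $B\Pi\Dom(\Lambda)$ is dense in $\mathcal H$ by hypothesis, $\Ran(L_B)$ is dense in $H$; because $\Abb^{-1}$ is bounded and agrees with a bounded operator on a dense set, and $L_B^{-1}$ is bounded there too, one concludes $L_B^{-1}$ extends to $\Abb^{-1}$ on all of $H$, whence $L_B = \Abb$.

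The remaining assertions are then routine corollaries. With $L_B = \Abb$ and $\beta_0 = \Lambda + \Pi^*B\Pi$, $\beta_1 = -I_E$, the operator $\overline{\beta_0 + \beta_1 M(z)}$ becomes $\overline{\Lambda + \Pi^*B\Pi - M(z)} = \Lambda_B + \Pi^*B\Pi + \Lambda - M(z)\;$--- more transparently, using $M(z) = \Lambda + z\Pi^*(I-z\Ad^{-1})^{-1}\Pi$ from Theorem~\ref{thm:PropertiesOfMz}(1), one gets $\beta_0 + \beta_1 M(z) = \Pi^*B\Pi - z\Pi^*(I-z\Ad^{-1})^{-1}\Pi = -\bigl(\Lambda_B + z\Pi^*(I-z\Ad^{-1})^{-1}\Pi\bigr) = -M_B(z)$ with $\Lambda_B = -\Pi^*B\Pi$, which is bounded, so no closure is needed. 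Thus whenever $M_B(z)$ is boundedly invertible the hypothesis of Theorem~\ref{thm:Extensions} holds, $z\in\rho(L_B)$, and substituting $(\overline{\beta_0+\beta_1 M(z)})^{-1}\beta_1 = (-M_B(z))^{-1}(-I) = M_B(z)^{-1}$ into Krein's formula~(\ref{eqn:ResolventAbb}) gives precisely the stated resolvent expression. The one point demanding genuine care --- the main obstacle --- is the density/closedness step in the previous paragraph: concluding $L_B=\Abb$ from agreement of inverses on the dense set $\Ran(L_B)$ requires knowing $L_B$ is closed (equivalently that $\Ran(L_B)$ is exactly $\Dom(\Abb^{-1})=H$ after extension), which is why passing through the bounded inverse $\Abb^{-1}$, rather than comparing domains directly, is the right route.
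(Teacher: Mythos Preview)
Your proposal is correct and follows essentially the same route as the paper. The paper verifies~(\ref{eqn:aux15}) by the same direct computation you perform (applying $\Ad^{-1}+\Pi(\Pi^*B\Pi)^{-1}\Pi^*$ to $L_Bu=f_\perp+B\Pi\varphi$ and recovering $u$, then the reverse composition on the dense set $\mathcal H^\perp + B\Pi\Dom(\Lambda)$), then posits $\beta_0=\Lambda+\Pi^*B\Pi$, $\beta_1=-I$, computes $Q(0)=(\Pi^*B\Pi)^{-1}$, and observes that the resulting $\Abb^{-1}$ from Theorem~\ref{thm:Extensions} coincides with the formula just established for $L_B^{-1}$. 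The only organizational difference is that you first \emph{derive} the choice of $(\beta_0,\beta_1)$ by evaluating $\Gd,\Gn$ on $\Dom(L_B)$, whereas the paper simply declares it; and you explicitly flag the closedness/density issue, which the paper handles with the phrase ``usual density arguments.''
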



\begin{proof}
Formula~(\ref{eqn:aux15})  is verified by direct computations.

\noindent
 Assuming $u = \Ad^{-1}( f_\perp  + B \Pi \varphi) + \Pi \varphi$ with
$f_\perp \in \mathcal H^\perp = \Ker(\Pi^*)$ and $\varphi \in
\Dom(\Lambda)$ we have
\begin{multline*}
    \left(\Ad^{-1} + \Pi (\Pi^* B \Pi)^{-1} \Pi^*
    \right) L_B u =
    \left(\Ad^{-1} + \Pi (\Pi^* B \Pi)^{-1} \Pi^*
    \right)(f_\perp + B \Pi\varphi)
    \\
    = \Ad^{-1} (f_\perp + B \Pi\varphi) +
    \Pi (\Pi^* B \Pi)^{-1} \Pi^* B \Pi\varphi =
    \Ad^{-1} (f_\perp + B \Pi\varphi) + \Pi \varphi = u
\end{multline*}
From the other side, consider $f\in H$ in the form~$f = f_\perp  +
B\Pi \varphi$ with $f_\perp \in \mathscr H^\perp$, $\varphi \in
\Dom(\Lambda)$.
By assumptions the set of such vectors~$f$ is dense in the
space~$H$.
Analogously, for the right hand side of~(\ref{eqn:aux15}) 
\[
\left(\Ad^{-1} + \Pi (\Pi^* B \Pi)^{-1} \Pi^*
    \right)f
    = \left(\Ad^{-1} + \Pi (\Pi^* B \Pi)^{-1} \Pi^*
    \right)(f_\perp  + B\Pi \varphi)
    = u
\]
where $u = \Ad^{-1}(f_\perp  + B\Pi \varphi) + \Pi \varphi$.
Application of $L_B$ defined in~(\ref{eqn:defLB})   to both sides gives the desired result
(here $h = \Pi \varphi$):
\begin{multline*}
 L_B \left(\Ad^{-1} + \Pi (\Pi^* B \Pi)^{-1} \Pi^*
    \right)f  = L_B  \left( \Ad^{-1}(f_\perp  + B\Pi \varphi) + \Pi
    \varphi\right)
\\
    =f_\perp  + B\Pi \varphi = f
\end{multline*}
The formula $L_B^{-1} =\Ad^{-1} + \Pi (\Pi^* B \Pi)^{-1} \Pi^*$ now
follows from the usual density arguments.


Consider operator-function~$Q(z) = Q_{\beta_0, \beta_1}(z)$ with
$\beta_0 = \Lambda + \Pi^* B\Pi$ and $\beta_1 = -I$.
\[
Q(z) = - (\overline{\beta_0 + \beta_1
M(z)})^{-1}  \beta_1 =
(\overline{\Lambda + \Pi^* B\Pi -  M(z)})^{-1} =
- (M_B(z))^{-1}
\]
Since $Q(0) = - (M_B(0))^{-1} =  - \Lambda_B^{-1} = (\Pi B
\Pi^*)^{-1}$ is bounded by assumption, ope\-ra\-tors~$Q_{\beta_0,
\beta_1}(z)$ exist and are bounded at least for small~$|z|$.
According to Theorem~\ref{thm:Extensions}
 and representation~(\ref{eqn:ResolventAbb}) (see Corollary~\ref{cor:Abb-1})
  the inverse $(\Abb)^{-1}$ is bounded  and
\[
    (\Abb)^{-1} = \Ad^{-1} + \Pi Q(0)\Pi^*
    = \Ad^{-1} - \Pi \Lambda_B^{-1}\Pi^*
    = \Ad^{-1} + \Pi (\Pi^* B \Pi)^{-1}\Pi^*
\]
which coincides with~$L_B^{-1}$.
The last assertion again follows from Theorem~\ref{thm:Extensions}. 
\qed
\end{proof}


Simple corollaries of
 Theorem~\ref{thm:LB_and_Abb} and definition of~$L_B$ are given below.
Their consequences will not be pursued here, see~\cite{Ryz1}, \cite{Ryz2} for further details.
\begin{rem}
   Statements of Theorem~\ref{thm:LB_and_Abb} can be used
   to describe
   dependence of M-operator on the particular choice
   of $\Lambda$ in Definition~\ref{defn:Gn} of
   boundary operator $\Gn$.
 Obviously, if $B = B^*$ and the operator~$\Gn$ is defined with~$\Lambda$ replaced by~$\Lambda_B$, i.~e.
 as ~$\Gn : \Ad^{-1} f + \Pi \varphi \mapsto \Pi^* f + \Lambda_B
\varphi$ for $f \in H$, $\varphi \in \Dom(\Lambda)$,
 then all results  remain valid with $L_B$ playing the role of operator~$\An$
 with $M_B(z)$ being the M\nobreakdash-function.
\end{rem}
\begin{rem}
 The equality $\Lambda = \Lambda_B$ 
 is only possible if
\[
 \Gn (I + \Ad^{-1}B )\Pi \varphi =0, \quad \varphi \in \Dom(\Lambda)
\]
by virtue of representations~$\Lambda = \Gn \Pi$ and~$\Lambda_B = - \Pi^* B\Pi = - \Gn \Ad^{-1} B\Pi$.
It does not follow that the solution to this equation is~$B =  - \Ad$.
In fact, this equality contradicts
  the assumption~$\Dom(B)\supset \Pi\Dom(\Lambda)$ about operator~$B$ since~$\Dom(\Ad) \cap \Ran(\Pi) = \{0\}$.
When $B$ is such that $\Lambda = \Lambda_B$  then~$L_B$ and $\An$ coincide due to Corollary~\ref{cor:aux1}
 (or the equalities $\beta_1 = - I$, $\beta_0 = 0$  that follow from Theorem~\ref{thm:LB_and_Abb}).

\end{rem}
\begin{rem}
    The operator~$A_K$ corresponding to $B = 0$ is an analogue of
    Krein's  extension of $A_{00}$ characterized by the
    boundary condition~$(\Gn - \Lambda\Gd)u  =0$
    see~\cite{AS}, \cite{Grubb}, \cite{Grubb2}, \cite{Krein2}.
 Note that the semiboundedness of~$A_{00}$ is not required for this definition of Krein's extension (cf.~\cite{HKS}).
 The formal equality~$B = \infty$ corresponds to the operator~$L_B = \Ad$.
\end{rem}


\section{Cayley Transform of M-function. Applications to the scattering theory}\label{CAYLEY}

This section outlines basic results on the scattering theory
 for operators corresponding to boundary conditions studied in the previous section.
In order to investigate selfadjoint and nonselfadjoint
  cases simultaneously
   the schema based on the functional model
    for linear operators suggested by S.~Naboko in papers~\cite{Na2}, \cite{Na3}
     is proposed.
The central ingredient of this schema
 is a dissipative operator whose
  B.~Sz.-Nagy and C.~Foias
  functional model~\cite{NagyFoias}
   serves as the model space for all
    operators under consideration.
 Papers~\cite{Na2}, \cite{Na3} are devoted to the case of additive perturbations of a selfadjoint operator
  and offer an explicit form for the ``model'' dissipative operator used in the study.
 The crucial element of the approach is availability of a certain factorization of the perturbation
 which allows the subsequent model construction.
These assumptions regarding perturbations render
 the schema of~\cite{Na2}, \cite{Na3} not applicable
 to linear operators associated with boundary value problems because
  they can not be represented as additive perturbations of one another.
 The obvious way to circumvent this difficulty, at least
  in case of elliptic boundary value problems,
   is to investigate inverse operators
    instead of the ``direct'' ones~\cite{BirScattering}.
Since the inverses are bounded,
 their differences are well defined bounded operators, and the method of~\cite{Na2}, \cite{Na3}
  is fully applicable provided the ``model'' dissipative operator is suitably chosen.
This chapter suggests such an operator
 based on considerations involving the Cayley transform of M\nobreakdash-function.
In addition, the required factorizations turn out to
  be direct consequences
   of formulas of previous section, see especially Corollary~\ref{cor:Abb-1}.
 Necessary connections
 to the theory of dissipative operators and functional models
 are established by
 the relationship between the M\nobreakdash-function
 and the so-called characteristic
 function of a ``minimal'' symmetric operator
 discussed in papers~\cite{DM1}, \cite{Kochub1}, \cite{Kochub2}, \cite{Straus}
 in other contexts.
 The exposition in this section is carried out
 in the spirit of
 nonselfadjoint operator theory
 and concludes with a brief sketch illustrating the proposed approach in  
 Remark~\ref{rem:Scattering}.
%
%


It is convenient to begin with the following observation.
Since values of $M(z)$,  $z\in \Complex_+$ are (possibly unbounded)
operators with positive imaginary part,
 operators $M(z) + i I$ are boundedly invertible
 for $z\in \Complex_+$.
 Moreover, a short argument shows that
 the Cayley transform of $M(z)$ defined as
 $\Theta(z) = (M(z) - iI)(M(z) + i I)^{-1}$
 is analytic and contractive for $z\in\Complex_+$.
 It turns out that $\Theta(z)$ for $z\in \Complex_+$
 is the characteristic function of some dissipative operator~$L$
 in the sense of A.~\v{S}traus~\cite{StrausCharF}.
 This fact was first observed in~\cite{Kochub1}, \cite{Kochub2}, \cite{Straus}
 for the characteristic function of Cayley transform of
 $A_{00}$ extended by the null map on $[\Ran(A_{00} + iI)]^\perp$ to the partial isometry defined everywhere in~$H$,
 and then reformulated in the setting of boundary triplets (under assumptions~$\Ran(\Gd) = \Ran(\Gn) = E$
 and  $M(z)$ bounded)
 for the characteristic function of respective dissipative operator in~\cite{DM1}.
The following Theorem renders these results in the form convenient for the discussion
 of nonselfadjoint scattering theory
  at the end of this section.
Note that boundedness of~$M(z)$ below is not required.
%

\begin{thm}\label{thm:CharFunct}
Operator~$L$ definded by the
 boundary condition
$(\Gn - i \Gd) u =0$ according to the Theorem~\ref{thm:Extensions}
 with $\beta_0 = -  iI$, $\beta_1 = I$
 is dissipative and boundedly
 invertible.
The inverse of $L$ is the operator $
 T = \Ad^{-1} - \Pi (\Lambda - i I )^{-1}\Pi^*.
$
For
 $z\in \Complex_+$
 the characteristic function of $L$ is given by the
 formula:
\[
 \Theta(z)   = (\Lambda -iI)(\Lambda + iI)^{-1}
 + 2 i z (\Lambda + iI)^{-1} \Pi^* (I -  zT^*)^{-1} \Pi (\Lambda + iI)^{-1}
\]
For $z\in \Complex_+$ this function
 coincides with the Cayley transform of~$M(z)$,
 \[
  \Theta(z) = (M(z) - iI)(M(z) + iI)^{-1}, \quad z \in  \Complex_+
 \]
\end{thm}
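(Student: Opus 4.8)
The plan is to apply the machinery of Section~\ref{EXTENT} directly with the specific choice $\beta_0 = -iI$, $\beta_1 = I$, and then to verify the dissipativity and the characteristic-function identity by explicit computation. First I would check that Assumption~\ref{assum:3} is trivially satisfied here: since $\beta_1 = I$ is bounded and $\beta_0 = -iI$ is bounded, the operator $\beta_0 + \beta_1\Lambda = \Lambda - iI$ is already closed on $\Dom(\Lambda)$ (as $\Lambda$ is selfadjoint by Assumption~\ref{assum:2}), so $\mathscr B = \Lambda - iI$ with $\Dom(\mathscr B) = \Dom(\Lambda)$. Moreover $\Lambda - iI$ is boundedly invertible because $\Lambda$ is selfadjoint, so Corollary~\ref{cor:Abb-1} applies: $L := A_{-iI,\,I}$ is boundedly invertible with
\[
 L^{-1} = \Ad^{-1} + \Pi Q(0)\Pi^*, \qquad Q(0) = -(\Lambda - iI)^{-1}\beta_1 = -(\Lambda - iI)^{-1},
\]
which gives $L^{-1} = \Ad^{-1} - \Pi(\Lambda - iI)^{-1}\Pi^* = T$, the stated formula for $T$.

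Next I would establish dissipativity. Here one uses Green's formula: for $u \in \Dom(L) \subset \mathscr D$ one has $(\Gn - i\Gd)u = 0$, i.e. $\Gn u = i\,\Gd u$, so
\[
 (Lu,u)_H - (u,Lu)_H = (\Gn u,\Gd u)_E - (\Gd u,\Gn u)_E = i\|\Gd u\|_E^2 + i\|\Gd u\|_E^2 = 2i\|\Gd u\|_E^2,
\]
whence $\I(Lu,u)_H = \|\Gd u\|_E^2 \ge 0$, so $L$ is dissipative. (A symmetric identity with $\beta_0 = iI$, or direct adjoint computation, shows $L^*$ corresponds to the boundary condition $(\Gn + i\Gd)u = 0$ and is accretive, confirming $T = L^{-1}$ is the correct inverse.)

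For the characteristic-function formula, I would first invoke the definition of the characteristic function of the dissipative operator $L$ (or equivalently of its inverse $T$) in the sense of \v{S}traus, expressed through the resolvent, and then substitute the Krein-type resolvent formula from Theorem~\ref{thm:Extensions}. Concretely, with $\beta_0 = -iI$, $\beta_1 = I$ we have $Q(z) = -(\overline{\beta_0 + \beta_1 M(z)})^{-1}\beta_1 = -(M(z) - iI)^{-1}$, and Corollary~\ref{cor:Abb-1} gives the representation $Q(z) = Q(0) + zQ(0)\Pi^*(I - zL^{-1})^{-1}\Pi Q(0)$ near $z = 0$, with $Q(0) = -(\Lambda - iI)^{-1}$ and $L^{-1} = T$; since $T$ is bounded this continues analytically to all $z \in \rho(L) \supset \Complex_+$ (the half-plane lies in $\rho(L)$ by dissipativity). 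The Cayley transform then unwinds algebraically: from the operator identity
\[
 (M(z) - iI)(M(z)+iI)^{-1} = I - 2i(M(z)+iI)^{-1} = I + 2i\bigl((M(z)-iI) - 2iI\bigr)^{-1}\!,
\]
and the first-resolvent-type relation $M(z) + iI = (M(z)-iI) + 2iI$ combined with $Q(z) = -(M(z)-iI)^{-1}$, one obtains $\Theta(z) = I + 2iQ(z)(I + 2iQ(z))^{-1}$ — no, more cleanly, one writes $\Theta(z) = (M(z)-iI)(M(z)+iI)^{-1}$ and resolves $(M(z)+iI)^{-1}$ via the resolvent identity $(M(z)+iI)^{-1} = (M(z)-iI)^{-1} - 2i(M(z)-iI)^{-1}(M(z)+iI)^{-1}$, substituting the series for $Q(z) = -(M(z)-iI)^{-1}$ throughout; the $z$-independent term becomes $(\Lambda - iI)(\Lambda + iI)^{-1}$ (by the same algebra at $z = 0$, using $M(0) = \Lambda$), and the $z$-dependent correction collapses, after inserting $Q(0) = -(\Lambda+iI)^{-1}\cdot$(something) and $T^* = L^{-*}$, to $2iz(\Lambda+iI)^{-1}\Pi^*(I - zT^*)^{-1}\Pi(\Lambda+iI)^{-1}$.

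The main obstacle I anticipate is bookkeeping the Cayley-transform algebra so that the resolvent $(I - zT^*)^{-1}$ — rather than $(I - zT)^{-1}$ — appears with the correct conjugates: one must track carefully that the characteristic function of $L$ involves $T^* = (L^*)^{-1}$ on the "input" side, which forces the appearance of $(S_{\bar z})^*$-type factors from Theorem~\ref{thm:PropertiesOfMz}(3) and their adjoints, and that the symmetrized constants $(\Lambda + iI)^{-1}$ (with $+i$) emerge on both flanks even though $Q(0)$ is built from $(\Lambda - iI)^{-1}$. Reconciling these requires using selfadjointness of $\Lambda$ to pass between $(\Lambda - iI)^{-1}$ and $(\Lambda + iI)^{-1} = ((\Lambda - iI)^{-1})^*$ and the identity $M(z)^* = M(\bar z)$ from Theorem~\ref{thm:PropertiesOfMz}(3). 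Once the correspondence $\Theta(z) = $ Cayley transform of $M(z)$ is checked on a neighborhood of $z = 0$, analyticity of both sides on $\Complex_+$ (for the left side, because $T$ is bounded; for the right, because $\I M(z) > 0$ there makes $M(z)+iI$ boundedly invertible) extends the identity to all of $\Complex_+$.
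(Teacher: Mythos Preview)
Your treatment of invertibility and of the formula $T = L^{-1} = \Ad^{-1} - \Pi(\Lambda-iI)^{-1}\Pi^*$ is correct and matches the paper. Your dissipativity argument via Green's formula is a legitimate alternative to the paper's route: the paper instead computes $\I(T^*) = \Pi(\Lambda+iI)^{-1}(\Lambda-iI)^{-1}\Pi^* \ge 0$ directly from the formula for $T$. Your version is arguably cleaner, and it works because here $\Dom(\mathscr B) = \Dom(\Lambda)$, so $\Dom(L)\subset\mathscr D$ and Green's formula applies literally.

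The genuine gap is in the derivation of the characteristic-function formula itself. You write that you would ``invoke the \v{S}traus definition \dots\ expressed through the resolvent'' and then immediately pivot to manipulating the Cayley transform of $M(z)$. But the \v{S}traus definition is not a formula in the resolvent alone: it requires first identifying the boundary spaces $\mathfrak L$, $\mathfrak L'$ and boundary operators $\Gamma$, $\Gamma'$ for $L$ and $-L^*$ from the sesquilinear form $\Psi(u,v) = \tfrac{1}{2i}[(Lu,v)-(u,Lv)]$. The paper does exactly this: writing $u = Tf$, $v = Tg$ it finds $\Psi(u,v) = ((\Lambda-iI)^{-1}\Pi^*Lu,\,(\Lambda-iI)^{-1}\Pi^*Lv)$, which identifies $\Gamma = (\Lambda-iI)^{-1}\Pi^*L$ and analogously $\Gamma' = (\Lambda+iI)^{-1}\Pi^*L^*$. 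Only then does the defining relation $\Theta(z)\Gamma u = \Gamma'(L^*-zI)^{-1}(L-zI)u$ become a concrete computation, carried out by writing $(L^*-zI)^{-1}(L-zI)T = (I-zT^*)^{-1}(I-zT)$ and splitting $I - zT = (I - zT^*) + z(T^*-T) = (I-zT^*) + 2iz\,\I(T^*)$. This is where the factor $(I - zT^*)^{-1}$ and the two flanking $(\Lambda+iI)^{-1}$'s emerge naturally --- precisely the ``bookkeeping obstacle'' you flag but do not resolve. Without this step your proposal never actually produces the displayed formula for $\Theta(z)$; it only attempts to match it to the Cayley transform, which is circular.

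For the final identity $\Theta(z) = (M(z)-iI)(M(z)+iI)^{-1}$, your direct-algebra approach is workable in principle but, as your own hesitation (``no, more cleanly\dots'') signals, it is awkward. The paper's device is cleaner: compute the adjoint $[\Theta(\bar z)]^*$, in which $(I - zT)^{-1}$ (not $(I-zT^*)^{-1}$) appears, recognize the $z$-dependent term as $Q(z)-Q(0)$ via Corollary~\ref{cor:Abb-1}, substitute $Q(z) = -(M(z)-iI)^{-1}$, and obtain $[\Theta(\bar z)]^* = (M(z)+iI)(M(z)-iI)^{-1}$; then take adjoints using $M(z)^* = M(\bar z)$. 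This sidesteps the $T$-versus-$T^*$ headache entirely.
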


%


Before turning to the proof,
 let us recall the definition of characteristic function of~$L$
  according to~\cite{StrausCharF}.
 This definition is equivalent to the definition given by M.~Liv\v{s}i{c}~\cite{LivChar1}
 and independently by B.~Sz.-Nagy and C.~Foias~\cite{NagyFoias}
 and has been proven more convenient in practical applications.
%


%
Following~\cite{StrausCharF},
introduce a
 sesquilinear form~$\Psi (\cdot,
\cdot)$ defined on the do\-main $\Dom(L) \times \Dom(L)$:
\[
  \Psi(u, v) = \frac{1}{2i} [ (L u,v)_H - (u, L v)_H], \quad u, v \in \Dom(L)
\]
and a linear set~$
 \mathcal G(L) = \big\{ v \in \Dom(L) \mid \Psi (u,v) = 0,  \forall
 v
 \in \Dom(L) \big\}$.
Define the linear space~$\mathfrak L (L)$  as a closure of the
quotient space $\Dom(L) / \mathcal G(L)$ endowed with the inner
product $[\xi, \eta]_{\mathfrak{L}} = \Psi(f,g)$, where $\xi, \eta
\in \mathfrak L (L)$ and
 $u\in \xi$, $ v\in\eta$.
Obviously, $\mathfrak L(L) = \{0\}$ if $L$  is symmetric.
 A
\textit{boundary space}  for the operator $L$ is any linear space
$\mathfrak L$ which is isomorphic to $\mathfrak L(L)$.
A \textit{boundary operator} for the operator~$L$ is the linear
 map~$\Gamma$ with the domain~$\Dom(L)$ and the range in
 the boundary space~$\mathfrak L $ such that
\[
 [\Gamma u, \Gamma v ]_{\mathfrak L} = \Psi(u, v),
 \quad u, v \in \Dom (L)
\]
 Let $\mathfrak L'$ with the inner product~$[\cdot, \cdot]'$ be a boundary space for
 $-L^*$ with the boundary operator $\Gamma'$ mapping $\Dom(L^*)$ onto
 $\mathfrak L'$.
A \textit{characteristic function} of the operator~$L$ is an
operator-valued function~$\theta$ defined on the set~$\rho(L^*)$
whose values~$\theta(z)$ map $\mathfrak L$ into $\mathfrak L'$
according to the equality
 \[
\theta(z)\Gamma u = \Gamma'(L^* - zI)^{-1}(L-zI) u, \quad
u\in\Dom(L).
 \]
Since the right hand side of this formula is analytic with regard to
$z\in\rho(L^*)$, the function~$\theta$ is analytic on $\rho(L^*)$.


This construction needs to be applied
 to the  operator~$L$ from
 Theorem~\ref{thm:CharFunct} defined by the  boundary
 condition~$(\Gn - i \Gd)u =0 $.
 To that end, notice that $\beta_0 = -i I$ and  $\beta_1 = I$,
 and therefore $\mathscr B = \beta_0 + \beta_1 \Lambda  = -i I  + \Lambda$
 is boundedly invertible as $\Lambda = \Lambda^*$.
 The operator func\-ti\-on~$Q(z) = -(\overline{\beta_0 + \beta_1 M(z)})^{-1}\beta_1$
 has the representation~ $Q(z) = - (M(z) - iI)^{-1}$ and is bounded for ~$z\in
 \Complex_-$.
 In accordance with
 Theorem~\ref{thm:Extensions} and
 Corollary~\ref{cor:Abb-1} the inverse $L^{-1}$ exists and
 \[
  L^{-1} = \Ad^{-1} + \Pi Q(0)\Pi^* = \Ad^{-1} - \Pi (\Lambda -iI)^{-1}\Pi^*
 \]
 Denote $T = L^{-1}$ and compute the imaginary part of~$T$ defined as $\I (T)  = (T - T^*)/2i$.
 We have

  \begin{multline*}
  T - T^*   = L^{-1} - (L^{-1})^* =
   - \Pi (\Lambda -iI)^{-1}\Pi^* +
    \Pi (\Lambda  + iI)^{-1}\Pi^*
 \\
  =  \Pi (\Lambda  + iI)^{-1}
  \left[( \Lambda - iI) - (\Lambda + i I) \right] (\Lambda  - iI)^{-1} \Pi^*
\\
   = - 2i \Pi (\Lambda  + iI)^{-1} (\Lambda  - iI)^{-1} \Pi^*
  \end{multline*}
 Therefore
 \[
  \I (T) = \frac{T - T^*}{2i} = - \Pi (\Lambda  + iI)^{-1} (\Lambda  - iI)^{-1} \Pi^*
  \]
 which shows that  $T^*$ is  dissipative:
\begin{equation}\label{eqn:ImDiss}
 \I (T^*)  = \Pi (\Lambda  + iI)^{-1} (\Lambda  - iI)^{-1} \Pi^* \ge
0
\end{equation}


 The proof of Theorem~\ref{thm:CharFunct} is based on direct computations
 that closely follow the schema of A.~\v{S}traus~\cite{StrausCharF}.

\begin{proof}
 Suppose $u,v\in\Dom(L)$
  and denote $Lu = f$, $L v = g$.
 Then $f = T u$, $g = T v$ where $T = L^{-1}$
 and  for the form~$\Psi(\cdot;\cdot)$ we have
 \begin{multline*}
 \Psi(u,v)  = \frac{1}{2i}[(L u, v)-(u, L v)] =
 \frac{1}{ 2i} [(f,T g ) - (T f, g) ] 
\\
 =   \left(\frac{T^* - T}{2i} f, g\right)  =   \left(\I (T^*)f, g\right)
   = ((\Lambda  - i I)^{-1}\Pi^* f, (\Lambda  - i I)^{-1}\Pi^*  g)
\\
  = ((\Lambda  - i I)^{-1}\Pi^* L u, (\Lambda  - i I)^{-1}\Pi^* L v )
 \end{multline*}
Thus, the boundary space~$\mathfrak L$ for $L$ can be chosen as a
closure of $\Ran((\Lambda - iI)^{-1}\Pi^*)$ with the boundary
ope\-ra\-tor~$\Gamma = (\Lambda  - i I)^{-1}\Pi^* L  $:
\[
 \mathfrak L = \overline{ \Ran((\Lambda - i I)^{-1}\Pi^* L)}, \qquad
  \Gamma : u \mapsto (\Lambda - i I)^{-1}\Pi^* L u, \quad
  u \in \Dom(L)
\]
Note that the metric in~$\mathfrak L$ is positive definite,
 and $\mathfrak L$ is in fact a Hilbert space.
Analogous computations  for~$( - L^*) $ justify the following choice
of boundary space~$\mathfrak L'$ and boundary operator~$\Gamma'$
\[
 \mathfrak L' = \overline{ \Ran((\Lambda +iI)^{-1}\Pi^* L^*)}, \qquad
  \Gamma' : v \mapsto (\Lambda + i I)^{-1}\Pi^* L^* v, \quad
  v \in \Dom(L^*)
\]
Here $\mathfrak L'$ is a Hilbert space.
%
%


In order to calculate the characteristic function~$\Theta(z)$ of
 operator~$L$ corresponding to this choice of boundary spaces and
 operators, set again $u = Tf$ with $f \in H$
 so that $f = Lu$.
 For $z\in \rho(L^*)$ we have
 \[
  \begin{aligned}
   \Gamma' (L^* & - zI)^{-1} (L -z I)u   =
 (\Lambda + iI)^{-1}\Pi^* L^* (L^* - zI)^{-1} (L -zI)L^{-1}f
 \\
  &  = (\Lambda + iI)^{-1}\Pi^* (I - zT^*)^{-1} (I - zT)f
   \\
   & = (\Lambda + iI)^{-1}\Pi^* (I - zT^*)^{-1} (I - zT^* + z(T^* - T))f
   \\
   &= (\Lambda + iI)^{-1}\Pi^*\left[ I + 2i z (I - zT^*)^{-1}(\I (T^*))
   \right]f
   \\
   &  = (\Lambda + iI)^{-1}\Pi^*\left[ I + 2i z (I - zT^*)^{-1}\Pi(\Lambda + iI)^{-1}
 (\Lambda - iI)^{-1}\Pi^*f
   \right]
   \\
    & = \left[  (\Lambda - iI)(\Lambda + i I)^{-1} + 2i z
   (\Lambda + iI)^{-1}\Pi^* (I - zT^*)^{-1}\Pi(\Lambda + iI)^{-1}
    \right]  \times
\\ 
 &  \hspace{240pt}
\times (\Lambda - iI)^{-1}\Pi^* f
   \end{aligned}
 \]
Since $(\Lambda - iI)^{-1}\Pi^* f =  (\Lambda - iI)^{-1}\Pi^* L u =
 \Gamma u$, this formula shows that the characteristic function
 of $L$ coincides with the expression in brackets,
 that is, the function~$\Theta(z)$ from the Theorem statement.
%
%


 For the verification
 of identity~$\Theta = (M - iI)(M +
  iI)^{-1}$ write down the adjoint of function~$\Theta$
\[
 \left[\Theta(\bar z)\right]^* =
  (\Lambda + iI)(\Lambda - i I)^{-1}
   - 2i z
   (\Lambda - iI)^{-1}\Pi^* (I - zT)^{-1}\Pi(\Lambda - iI)^{-1},
   \quad z \in \Complex_-
\]
By virtue of equality~$Q(0) = - (\Lambda - iI)^{-1}$
 and Corollary~\ref{cor:Abb-1}
 \begin{multline*}
   z (\Lambda - iI)^{-1}\Pi^*
  (I - zT)^{-1}\Pi(\Lambda - iI)^{-1}
  \\
   =  z Q(0) \Pi^*  (I - z L^{-1})^{-1}\Pi Q(0)
  = Q(z) - Q(0)
   = - (M(z) - iI )^{-1} + (\Lambda - iI)^{-1}
 \end{multline*}
Therefore
\[
 \begin{aligned}
 \left[\Theta(\bar z)\right]^*   & =
   (\Lambda + iI)(\Lambda - i I)^{-1}
   + 2i (M(z) - iI )^{-1} -2i (\Lambda - iI)^{-1}
   \\
    & = I + 2i (M(z) - iI )^{-1} =
    (M(z) + iI) (M(z) - iI )^{-1}
 \end{aligned}
\]
By passing to the adjoint operators and noticing that $[M(\bar z)]^*
= M(z)$  the claimed identity follows.
\qed
\end{proof}


\begin{rem}
  The characteristic function
  of a linear operator is not determined uniquely~\cite{Brodskii}, \cite{NagyFoias}, \cite{StrausCharF}.
 Namely,
 consider two isometries $\tau : \mathfrak L \to \widetilde{\mathfrak L} $
 and $\tau' : \mathfrak L' \to \widetilde{\mathfrak L}' $ of the
  boundary spaces~$\mathfrak L  $, $\mathfrak L'  $
 of operator $L$
 to another
 pair of spaces~$\widetilde{\mathfrak L} $,
 $\widetilde{\mathfrak L}' $.
 It is easy to see that the characteristic function of $L$
 corresponding to the pair~$\widetilde{\mathfrak L}$,
 $\widetilde{\mathfrak L}' $ is the function~$\widetilde \theta(z) = \tau' \theta(z)\tau^* $.
In application to the characteristic function~$\Theta(z)$ above
 observe that the operator~$U = (\Lambda - iI)(\Lambda + iI)^{-1}$  is
 an unitary in~$E$.
 Therefore,
 both functions~$U^*\Theta(z)$ and $\Theta(z)U^*$, $z\in\Complex_+$
\[
 \begin{aligned}
  U^*\Theta(z) &  = I + 2i z(\Lambda - iI)^{-1} \Pi^* (I - zT^*)^{-1}
  \Pi (\Lambda + iI)^{-1}
  \\
  \Theta(z) U^* &  = I + 2i z(\Lambda + iI)^{-1} \Pi^* (I - zT^*)^{-1}
  \Pi (\Lambda - iI)^{-1}
 \end{aligned}
\]
 are characteristic
 functions of $L$, although corresponding to alternative
 choices of boundary spaces and operators.
\end{rem}


\begin{rem}
 Direct calculations according to the schema
outlined above
 yield the following expression  for the characteristic
  function~$\vartheta(z)$ of dissipative operator~$T^* =  (L^*)^{-1}$:
  \[
   \vartheta(\zeta) = I + 2i (\Lambda + iI)^{-1}\Pi^* (T - \zeta I )^{-1}
   \Pi(\Lambda -   iI)^{-1}, \quad \zeta \in\Complex_+
  \]
By virtue of~(\ref{eqn:ImDiss})
 this characteristic function is given in its ``standard'' form, which is consistent with
 the expression for characteristic function~$W(z) = I + 2i K^*(A^* - zI)^{-1}K$ of a bounded
 dissipative operator $A = R + i Q$
 with $R = R^*$, $Q = Q^* \geq 0$ and $Q=  K K^*$
 (or the corresponding operator node) that can be found in the literature~\cite{Brodskii}, \cite{NagyFoias}.
A close relationship between~$\vartheta(\zeta)$ and $\Theta(z)$ is clarified
by the
 substitution $\zeta\to z = 1/\zeta$
 \[
 \vartheta(1/z) = I - 2 i z
(\Lambda + iI)^{-1}\Pi^* (I - z T)^{-1}\Pi(\Lambda -
   iI)^{-1}, \quad z \in\Complex_-
 \]
Comparison with the expression for the adjoint of~$\left[\Theta(\bar
z)U^*\right]$ leads to
  the identity
  \[
   \vartheta(1/z) = U \left[ \Theta(\bar z)\right]^*, \quad z \in
   \Complex_-
  \]
where $U = (\Lambda - iI)(\Lambda + iI)^{-1}$ is an unitary.
\end{rem}


\begin{rem}\label{rem:Scattering}
  Dissipative operator~$T^*  = (L^{-1})^* = \Ad^{-1} - \Pi (\Lambda + iI)^{-1}\Pi^*$
  can be employed for the development of scattering theory
  of (in general, nonselfadjoint) operators~$L^\varkappa$
  defined by boundary
  conditions $(\Gn  + \varkappa\Gd) u = 0 $
  with $\varkappa : E \to E$.
 Assume $\Lambda + \varkappa$ is boundedly invertible.
Then the inverse~$T_\varkappa =  (L^\varkappa)^{-1}$ exists and
$T_\varkappa = \Ad^{-1} - \Pi (\Lambda + \varkappa)^{-1}\Pi^*$
 by Corollary~\ref{cor:Abb-1}.
 The functional model construction for additive perturbations~\cite{Na2}
   is fully applicable to~$\Ad^{-1}$, $T^*$, $T_\varkappa$,
    which makes possible development of the scattering theory for $\Ad^{-1}$ and $T_\varkappa$.
 Application of the invariance principle for the function~$t \to
  (1/t)$, $t\in \Real$,
   $t\neq 0$
   yields existence and completeness results for the local wave operators for the pairs
    $(\Ad,L^\varkappa)$, and $(L^\varkappa, \Ad)$.
 The
 interested reader is referred to the works~\cite{Na2}, \cite{Na3}, \cite{Ryz5}, \cite{Ryz3}
  for further details on the functional model of nonselfadjoint operators and
   its applications to the scattering theory.
\end{rem}

\section{Singular Perturbations}\label{EXAMPLE}
The schema developed in preceding sections is essentially axiomatic.
The only condition imposed on
 the set $\{\Ad^{-1}, \Pi, \Lambda\}$ is the validity of
  two Assumptions
   from Section~\ref{ABST}, whereas
     nothing specific is requested of the ``boundary''.
Due to this fact, our approach is applicable in situations not
 readily covered by the traditional boundary problems technique.
For instance, it makes possible a construction of ``boundary value
 problem'' when no boundary is given a priori.
 Introduction of an artificial boundary is a certain
  form of perturbation that is not ``regular'' in the
   traditional sense.
 Such ``singular'' perturbations are typical
  in the open systems theory  where they are
   identified with the open channels connecting the
    system with its environment~\cite{Liv}.
 From  this point of view,
 the selfadjoint operator~$\Ad$
 acting in the ``inner space''~$H$
 describes the ``unperturbed system''
 coupled with the  ''external space``~$E$
 by means of the ``channel'' operator~$\Pi : E \to H$.
The ``coupling'' takes place at the ``boundary''.
More details on connections to the
 open systems theory can be found in~\cite{Ryz2}.
%


%
%
%
 This section offers an
  illustration of these ideas
   by means of an elementary example 
   considered previously within the 
   framework of boundary triples in~\cite{Ryz4}. 
We study
 the physical model of a quantum particle
  in the potential field of finite number of singular
   interactions modeled by Dirac's $\delta$-functions.
The free particle is described by the
 Hamiltonian operator which in this case is the ``free'' Laplacian acting in
 $L^2(\Real^3)$, and the
 point interactions define ``perturbations'' of the unperturbed
 system (see~\cite{AFHKL}, \cite{AGHH}, \cite{AlbevKur} and
 references therein).
 Within the paper's context, the points where
 the interactions are situated
 form the ``boundary'' of the ``boundary value problem.''


Let~$H := L^2(\Real^3)$.
 Denote  $\Ad$ the selfadjoint boundedly
 invertible operator $I - \Delta $   in $H$
 with domain~$\Dom(\Ad) :=
H^2(\Real^3) $.
The fundamental solution to the equation~$((I -\Delta ) - zI) u =0$,
$z\in\Complex \setminus [1, \infty)$ is the square summable
function~$\mathscr G_z(x) = \frac{1}{4\pi}
\frac{\exp{(i\sqrt{z-1}|x|)}}{|x|}
$.
Fix a finite set of distinct points  $x_j \in\Real^3$, $j = 1,2,
\dots, n$ and introduce $n$ functions~$\mathcal
 G_j(x, z)  := \mathscr G_z(x - x_j)$.
Formally, each $\mathcal G_j(x, z)$ is the  solution to the partial
differential equation~$((I -\Delta ) - zI)u = \delta(x - x_j)$.
Any function~$\mathcal G_j(x, z)$ is infinitely differentiable in
any domain that does not contain~$x_j$.
Because of the singularity at~$x\to x_j$ functions~$\mathcal G_j(x,
z)$ are not in~$\Dom(\Ad)$.
However, for any $z,\zeta\in\Complex\setminus[1,\infty)$ the
difference~$\mathcal G_j(x, z) - \mathcal G_j(x, \zeta)$
 lies in~$\Dom(\Ad)$.
In the following the abridged notation~$\mathcal G_j$
for $\mathcal G_j(x, 0)$ will be used.
Notice that $\mathcal G_j$ are linearly independent as elements
of~$H = L_2(\Real^3)$.


 Choose the space~$E$ to be the $n$-dimensional
 Euclidian $E =\Complex^n$ with the orthonormal basis~$\{e_j\}_1^n$
 and define the operator~$\Pi : E\to H$ on $\{e_j\}_1^n$ by
 $\Pi : e_j \mapsto \mathcal G_j$.
 It follows that  $\Pi : a \mapsto \sum a_j \mathcal G_j$ where~$a = \sum a_j e_j$
 is an element of~$E$.
 Since $\Ran(\Pi)\cap \Dom(\Ad) = \{0\}$
 and the inverse to $\Pi$ is the mapping~$\sum a_j \mathcal G_j \mapsto \{ a_j\}_{j=1}^n$,
 Assumption~\ref{assum:1} holds.
Therefore
 we can introduce the operator~$A$ on
 domain~$\Dom(A) := \Dom(\Ad) \dot{+}{\mathcal H}$,
 where ${\mathcal H} := \Ran(\Pi) = \bigvee \mathcal G_j$.
 According to the Section~\ref{ABST},
 $A : \Ad^{-1} f + \sum a_j \mathcal G_j \mapsto f$, $f\ \in H$.
 The equality~$\Ker(A) = {\mathcal H}$ can be understood literally,
 because $(I -\Delta) \mathcal G_j = \delta(x-x_j) $ and the right
 hand side is supported on the set of zero Lebesgue measure in~$\Real^3$.
 Further,  the boundary operator~$\Gd$
 defined on~$\Dom(\Gd) = \Dom(A)$
 acts according to the rule $\Gd : f_0 + \sum a_j \mathcal G_j \mapsto \{a_j\}_1^n $,
 where $f_0 \in \Dom(\Ad)$ and $\{a_j\}_1^n\in E$.
 Due to  identity $\Gd \mathcal G_j = e_j$ we have
  $\Ker (\Gd) = \Dom(\Ad)$.
The requirements
 $\Gd\Pi =I_E$ and $\Pi\Gd \mathcal G_j = \mathcal G_j$ therefore are met.

%

 The operator~$S_z$ maps $a\in E $ into a unique solution~$u_z$
 of the equation $(A - zI) u =0 $ satisfying condition~$\Gd u = a$.
It is not difficult to see that $S_z$ has the form
\[
  S_z : \{a_j\}_1^n \mapsto u_z = \sum_j a_j \mathcal G_j(x, z), \quad z \in
  \Complex_\pm
\]
Indeed, the fact~$\mathcal G_j (x, z) \in \Ker(A -zI)$ was discussed
 above, and the boundary condition is verified by direct
 computations.
 For $a = \sum_j a_j e_j$ we have
\[
 \Gd S_z a = \sum_j a_j \Gd \mathcal G_j(x, z)
 = \sum_j a_j \Gd \mathcal G_j + \sum_j a_j \Gd \left(\mathcal G_j(x, z) - \mathcal G_j\right)
 = \sum_j a_j e_j = a
\]
because $\Gd \mathcal G_j = I $ and the difference $\mathcal G_j(x,
z) - \mathcal G_j$ belongs to $\Dom(\Ad)$,  therefore to
$\Ker(\Gd)$.


To calculate the adjoint~$\Pi^* : H \to E$ and choose
 the operator~$\Lambda $ in the representation $\Gn =
\Pi^* A + \Lambda \Gd$ appropriately
 suppose $a = \sum a_j e_j$ and $f\in H$.
 Then
 $(\Pi a,f) = \sum a_j(\mathcal G_j,f) = \langle a, \sum (f,\mathcal
 G_j)e_j\rangle$,
 hence~$\Pi^* $ is defined as
 $\Pi^* : f \mapsto \sum (f, \mathcal G_j)e_j$.
 If $f = \Ad f_0$ with some $f_0\in \Dom(\Ad)$,
 then $\Pi^* A f_0 = \Pi^* \Ad f_0 = \sum (\Ad f_0, \mathcal G_j)e_j$.
Summands here are easy to compute.
 It follows from the properties of fundamental
 solutions~$\mathcal G_j $ that $(\Ad f_0, \mathcal G_j) = f_0(x_j)$,
 therefore
 $\left.\Gn\right|_{\Dom(\Ad)} = \Pi^* A_0 : f_0 \mapsto \sum f_0(x_j)e_j$ for $f_0 \in \Dom(\Ad)$.
%


The  operator~$\Lambda$ describing $\Gn$ restricted to the
set~$\Ran(\Pi) $ can be chosen arbitrarily as long as it is
selfadjoint.
 For example, it could be taken as the identity $\Lambda = I_E$
 or the null operator $\Lambda : a\mapsto 0$, $a \in E$.
 However, it is convenient to define the action of $\Gn$ on $\Ran(\Pi)$
 consistently with its  action on $\Dom(\Ad)$.
 Since $\left.\Gn\right|_{\Dom(\Ad)}$ evaluates functions~$f_0 \in \Dom(\Ad)$
 at the points~$\{x_j\}_1^n$ and then builds a
 corresponding vector~$\{f_0(x_j)\}_1^n$ in~$E = \Complex^n$, we would
 like $\left.\Gn\right|_{\Ran(\Pi)}$ to act similarly.
 Functions $\mathcal G_j(x)$ are easily evaluated  at $x_s$ for $s\neq j$, but
 $\mathcal G_j $ is
 not defined at $x = x_j$; thus is not possible to
 define~$\Gn$ on $\Ran(\Pi) = \bigvee \mathcal G_j$
 to be the evaluation operator.
 To circumvent this problem recall that in the neighborhood of~$x_j$
 the function~$\mathscr G_z (x - x_j)$ has the following asymptotic expansion
 \begin{multline*}
 \mathscr G_z (x - x_j)  = \frac{1}{4\pi}\frac{   \exp{(i\sqrt{z-1}|x -x_j|)}  }{|x-x_j|}
\\ 
\sim
 \frac{1}{4\pi}\left(\frac{1}{|x-x_j|}  + i\sqrt{z-1} + O (|x -x_j|)\right)
  \end{multline*}
Define the action $\Gn$ on the vector~$\mathscr G_z(x - x_j)$ as
 \[
  \Gn : \mathscr G_z(x - x_j) \mapsto  \frac{i\sqrt{z-1}}{4\pi} e_j +
  \sum_{s\neq j} \mathscr G_z(x_j - x_s)e_s
 \]
 where $\frac{i\sqrt{z-1}}{4\pi}$ is the
 coefficient in the asymptotic expansion above corresponding to $|x-x_j|$ to the power
 $0$.
In particular, for $z =0$
\[
  \Gn : \mathcal G_j  \mapsto  -\frac{1}{4\pi} e_j +
  \sum_{s\neq j} \mathcal G_j\big|_{x = x_s}e_s
\]
where $ \mathcal G_j\big|_{x = x_s}  = \mathcal G_j (x_s, 0)=
\mathscr G_0(x_j - x_s)$, $s\neq j$.
 Thus for $a = \{a_j\}_1^n \in E$
 \[
 \Gn : \Pi a =
 \sum_j a_j \mathcal G_j \mapsto \Big\{- a_j \frac{1}{4\pi} + \sum_{s\neq j} a_s
 \mathcal G_s \big|_{x = x_j}
 \Big\}_{j = 1}^n
 \]


 The next step is the calculation of M-operator
 of $A$.
Quite analogously to the computation of $\Gn \Pi $ above we have for
$a = \{ a_j\}_1^n = \sum_j a_j e_j\in E$
\[
 \Gn : \sum_j a_j \mathcal G_j (x, z) \mapsto
 \Big \{a_j \frac{i\sqrt{z -1} }{4\pi} +
  \sum_{s\neq j}a_s \;\mathcal  G_s(x_j, z)
  \Big\}_{j =  1}^n
\]
 Since $S_z a = \sum_{j} a_j \mathcal G_j(x, z)$, this
 formula
 yields  for $M(z)a = \Gn S_z a$
\begin{multline*}
 M(z)a =  \Gn \Big( \sum_j a_j \mathcal G_j (x, z)\Big)
\\
 = \frac{1}{4\pi}\Big\{i a_j \sqrt{z-1}
 + \sum_{s\neq j} a_s \frac{ \exp{(i\sqrt{z-1}|x_j -x_s|)}}{|x_j -x_s|}
 \Big\}_{j =1}^n
\end{multline*}
Therefore the operator-function~$M(z)$  is the $n\times
n$\nobreakdash-matrix function with elements
\[
 M_{js}(z) = \frac{1}{4\pi}\left\{\quad
  \begin{aligned}
  i \sqrt{z -1}\; ,   & \quad j = s
  \\
  \frac{ \exp{(i\sqrt{z-1}\,|x_j -x_s|)}}{|x_j - x_s|}\; ,  & \quad j \neq s
  \end{aligned}
\right.
\]
%
%

%
%
%
%

By the change of variable $z\mapsto z + 1$ the matrix~$M(z+1)$ can
be interpreted as the M\nobreakdash-function
 of the Laplacian $-\Delta = A - I$
 in~$L_2(\Real^3)$ perturbed by a set of point
interactions~$\{\delta(x - x_j)\}_1^n$.
To elaborate more on this statement consider
  extensions of symmetric operator~$A_{00}$
 defined as $-\Delta + I$ on the domain
\[
 \Dom(A_{00}) =
 \{ u \in \Dom (\Ad)\mid \Gn u =0
 \} = \{ u \in H^2(\Real^3) \mid u(x_s) =0, \, s = 1,2,\dots n\}
 \]
Suppose
 the operator~$A^\beta$ is defined as a restriction of
 $A$ to
 domain~$\Dom(A^\beta) = \{ u \in \Dom(A)\mid (\beta_0\Gd + \beta_1\Gn)u =
 0\}$
where $\beta_0$, $\beta_1$ are arbitrary  $n\times
n$\nobreakdash-matrices.
The
 resolvent of~$A^\beta$
 is described in Theorem~\ref{thm:Extensions}.
 In particular, assuming that $\beta_0 + \beta_1\Lambda$ where $\Lambda = M(0)$
 is boundedly invertible, the
 inverse of $A^\beta$ as given by Corollary~\ref{cor:Abb-1} is
 \begin{equation}\label{eqn:Abeta-1}
   (A^\beta)^{-1} = \Ad^{-1} - \Pi (\beta_0 + \beta_1\Lambda)^{-1}\beta_1\Pi^*
 \end{equation}
Consider sesquilinear forms of both sides of this identity
 on a pair of vectors~$f,g\in H$.
Since~$\Ran(\Ad)  = H$, vectors~$f$ and $g$ can be
     represented as $f = \Ad
 u$,  $g = \Ad v$ with some $u, v \in \Dom(\Ad)$.
Then the form on the right is
\[
  (\Ad^{-1} f,g) - (\Pi (\beta_0 + \beta_1\Lambda)^{-1}\beta_1\Pi^*f,g)
   = (\Ad u,v) - ((\beta_0 + \beta_1\Lambda)^{-1}\beta_1\Gn u,\Gn v)
\]
 due to equalities $\Pi^* f = \Gn \Ad^{-1} \Ad u = \Gn u$
 and  $\Pi^* g = \Gn v$.
Notice that vectors~$\Gn u$ and $\Gn v$ are known explicitly, namely
$\Gn u = \{u(x_j)\}_1^n$ and $\Gn v = \{ v(x_j)\}_1^n$.

%
%
%
%

In order to clarify meaning of the form~$((A^\beta)^{-1}f,g)$ of
 the operator on
 the left hand side of~(\ref{eqn:Abeta-1})
 we need to recall some basic concepts from
 the theory of scales of Hilbert spaces~\cite{Berez}.
 Introduce the rigging~$H^+ \subset H
 \subset H^-$ of~$H$ constructed by the positive boundedly
 invertible operator ~$\Ad = -\Delta + I$.
The positive space~$H^+$ consists of elements from~$\Dom(\Ad)$
 and is equipped with the norm~$\|u\|_+ = \|\Ad u\|_H$, $u\in \Dom(\Ad)$.
It follows that~$\Ad$ acts as an isometry from $H^+$ onto $H$.
 The dual space~$H^-$ is
 identified with the Hilbert space
 of all antilinear functionals over elements from  $H^+$
 with respect to the inner product in $H$.
 In the usual way,
 the product~$(f,g)_H$ of two vectors $f,g \in H$
 is naturally extended to
 the duality relation
 between~$f \in H^-$ and $g \in H^+$.
This construction allows one to consider a continuation~$\Ad^+$
of~$\Ad$ from
 the domain~$\Dom(\Ad)$ to the whole of~$H$.
The map~$\Ad^+$ is defined on~$H$
 by the formula~$(\Ad^+ f, v ) = (f,\Ad v)$, $f\in H$,  $v\in H^+$
 and its range coincides with~$H^-$.
The sesquilinear form of~$(A^\beta)^{-1}$ on the left hand side
of~(\ref{eqn:Abeta-1}) calculated on the pair $\Ad u, \Ad v$ now can
be written as
\[
 ((A^\beta)^{-1}\Ad u,\Ad v) = (\Ad^+ (A^\beta)^{-1}\Ad u,v),
 \quad u, v\in H^+
\]
Thus the operator $\mathscr A^\beta := \Ad^+ (A^\beta)^{-1}\Ad $
acts from~$H^+$ into $H^-$ and its sesquilinear form is
\begin{equation}\label{eqn:DeltaInteractions}
  (\mathscr A^\beta u, v) = (u,v) + (-\Delta u, v )  +
  \sum_{j,k}\alpha_{jk}u(x_k)\overline{v(x_j)},
  \quad u,v\in H^2(\Real^3)
\end{equation}
where $\alpha_{jk}$ are the matrix elements of the operator~$-
 (\beta_0 + \beta_1\Lambda)^{-1}\beta_1$ in the basis~$\{e_j\}_1^n$.
%
%

%
%
%

Formula~(\ref{eqn:DeltaInteractions}) relates  ideas of
 this section to the conventional theory of point
 interactions.
It is easily seen that  the mapping~$L^\beta = \Ad^+ (A^\beta)^{-1}
\Ad $ is formally  represented as
 $-\Delta + I +  \alpha  (\cdot \, , \,\vec\delta) \, \vec \delta$
 where  $\vec \delta = \{\delta(x - x_j)\}_1^n$
 and ${\alpha}$ is the matrix~$\alpha = \|\alpha_{jk}\|$.
 Non-diagonal elements of~$\alpha$ describe pairwise
 interactions between points $\{x_j\}$ themselves
 (the so called ``non-local model''~\cite{KurPos}),
 whereas
 the standard case of $n$ mutually independent point
 interactions is recovered from~(\ref{eqn:DeltaInteractions})
 when the matrix~$\alpha$ is diagonal.
 Under assumption~$\beta_0 \beta_1^* = \beta_1 \beta_0^*$
  the operator~$A^\beta$ is selfadjoint according to
   Corollary~\ref{cor:AbbSelfadjoint}.
Finally,  Theorem~\ref{thm:discreteSpectrum} reduces the
 question of point
 spectrum of~$A^\beta$ to the study of
 $\det (\beta_0 + \beta_1 M(z) )$,
 where~$M(z)$ is the M\nobreakdash-function
 discussed above.
The point spectrum in the case~$\beta_1 =I$ and the matrix~$\beta_0$ diagonal
 was investigated in the work~\cite{Thom}.


Notice in conclusion that
 considerations of this section suggest a consistent way to construct singular
 perturbations of differential
 operators by ``potentials'' supported by sets of Lebesgue
 measure zero in $\Real^n$, cf.~\cite{AFHKL}.

%
%

\smallskip

%



\end{document}